\newtheorem*{rep@theorem}{\rep@title}
\newcommand{\newreptheorem}[2]{%
\newenvironment{rep#1}[1]{%
 \def\rep@title{#2 \ref{##1}}%
 \begin{rep@theorem}}%
 {\end{rep@theorem}}}
\newtheorem{lemma}{Lemma}
\newtheorem{theorem}{Theorem}
\newtheorem{corollary}{Corollary}
\newtheorem*{remark*}{Remark}
\newcommand{\olive}[1]{\textcolor{olive}{#1}}
\newtheoremstyle{example}{}{}{}{}{\bfseries}{\smallskip}{\newline}{}
\theoremstyle{example}
\newtheorem{example}{Example}
\newcommand{\integers}[2]{[#1, #2]}
\newcommand{\RM}[0]{\mathrm{RM}}
\newcommand{\QRM}[0]{\mathrm{QRM}}
\newcommand{\pRM}[0]{\mathrm{pRM}}
\newcommand{\pQRM}[0]{\mathrm{pQRM}}
\newcommand{\zQRM}[0]{\mathrm{QRM}^{(0)}}
\newcommand{\pzQRM}[0]{\pQRM^{(0)}}
\newcommand{\rim}[0]{\texttt{row\_index\_list}}
\newcommand{\Eval}[2]{\mathrm{Eval}^{(#1)}\left(#2\right)}
\newcommand{\wt}[1]{\mathrm{wt}\left(#1\right)}
\newcommand{\CNOT}[2]{\mathrm{CNOT}\left(#1\rightarrow#2\right)}
\DeclareMathOperator*{\argmin}{arg\,min}
\newcommand{\PJ}[1]{\olive{[PJ:#1]}}
\begin{document}

\title{Efficient recursive encoders for quantum Reed-Muller codes towards Fault tolerance}
\author{Praveen~Jayakumar}
\email{praveen.jayakumar@mail.utoronto.ca}
\orcid{0000-0002-1523-8260}
\affiliation{Department of Electronic Systems Engineering, Indian Institute of Science, Bengaluru, India, 560012}
\affiliation{Chemical Physics Theory Group, Department of Chemistry, University of Toronto, Toronto, Canada, M5S 3H6}
\author{Priya~J.~Nadkarni}
\affiliation{Department of Electronic Systems Engineering, Indian Institute of Science, Bengaluru, India, 560012}
\orcid{0000-0002-1351-2959}
\email{priya@alum.iisc.ac.in}
\author{Shayan Srinivasa Garani}
\email{shayangs@iisc.ac.in}
\orcid{0000-0002-2459-1445}
\affiliation{Department of Electronic Systems Engineering, Indian Institute of Science, Bengaluru, India, 560012}
\maketitle
\begin{abstract}
    Transversal gates are logical gate operations on encoded quantum information that are efficient in gate count and depth, and are designed to minimize error propagation. Efficient encoding circuits for quantum codes that admit transversal gates are thus crucial to reduce noise and realize useful quantum computers. The class of punctured Quantum Reed-Muller codes admit transversal gates. We construct resource efficient recursive encoders for the class of quantum codes constructed from Reed-Muller and punctured Reed-Muller codes. These encoders on $n$ qubits have circuit depth of $O(\log n)$ and lower gate counts compared to previous works. The number of CNOT gates in the encoder across bi-partitions of the qubits is found to be equal to the entanglement entropy across these partitions, demonstrating that the encoder is optimal in terms of CNOT gates across these partitions. Finally, connecting these ideas, we explicitly show that entanglement can be extracted from QRM codewords.
\end{abstract}
\section{Introduction}
Noise introduces errors in physical gate realizations and leads to imprecise representations of quantum information on quantum computers. Quantum error correction codes (QECCs) are utilized to store quantum information with redundant information in larger Hilbert spaces, such that the redundancy can be utilized to detect and correct some errors in the state. Since the encoded states are themselves quantum states, we can repeatedly encode the previously encoded states to enable correction of more errors. This process, known as \textit{code concatenation} seems to suggest all errors can be corrected eventually by sufficiently repeating the encoding and we can achieve \textit{fault tolerant} computation\cite{knill1996threshold}. Unfortunately, the encoding process and the logical gate operations used to manipulate the stored quantum information are prone to noise and leads to more errors.

For QECCs to be useful and to achieve fault tolerance, the errors introduced by encoding and error correction gates require to be lower than the errors correctable by the code. For this condition to be satisfied, the noise is required to be below a threshold determined by the code and the complexity of the circuits involved. A larger threshold will reduce the resources required and make fault tolerance with more noisy quantum hardware practical. The threshold can be increased by encoders with lower physical gate counts. Additionally, some QECCs have particularly efficient logical gates called transversal gates that minimize the spread of errors, thus increasing the threshold further. Thus, designing efficient unitary\footnote{States encoded with QECCs are eigen states of operators known as stabilizers. We can prepare these encoded states by measuring the stabilizers\cite{chuang_book_2010}. Such an encoding process may lead to uncorrectable errors in the state, and in certain hardwares, mid-circuit measurements may be not favorable during the initial encoding.} encoding circuits for QECCs that admit transversal gates are crucial to realize fault tolerance. It was shown in Ref. \citenum{eastin_restrictions_2009} that there cannot exists QECCs that admit a universal transversal gate set. Most codes do not admit transversal $H$ and $T$ gates which are necessary to realize a universal set.

Punctured Quantum Reed-Muller (pQRM) codes form a class of QECCs that admit transversal H and T gates\cite{knill1996threshold}. Quantum Reed-Muller (QRM) codes are constructed by importing the classical Reed-Muller (RM) codes into the Calderbank-Shor-Steane (CSS)\cite{calderbank_good_1996} framework and pQRM are likewise obtained by importing punctured RM codes, or simply dropping the first qubit in the corresponding QRM code. We find that, QRM and pQRM codes can be decomposed and expressed in terms of smaller QRM and pQRM codes. Utilizing this property, we construct efficient, recursive encoding circuits for these classes of codes. These circuits are similar to the iterative decoders in the literature of RM codes. In QECCs, redundancy is introduced by encoding message states along with ancillae qubits, usually initialized to the state $\ket{0}$. When the code is concatenated, the ancillae for the second iteration of encoding are encoded versions of $\ket{0}$. We provide recursive state preparation encoders to prepare these ancillary states to encode QRM code states.

Encoding quantum information introduces entanglement into the quantum state to store information. Supposing this entanglement can be extracted from the code states, QECCs provide a reliable error corrected cache of entanglement which can be utilized when required. A recent work by Bravyi et al. have considered this usecase of QRM codes to establish EPR pairs between pairs of parties with local operations and classical communication \cite{Bravyi2024generatingkeprpairs}. In the current work, we show extraction of entanglement amongst all parties, from the QRM codeword resource state with only local operations given by the decoders for QRM codes of smaller length. In addition, we find that the number of CNOT gates across bi-partitions in the recursive encoder constructions are equal to the entanglement across these bi-partitions, as quantified by the measure proposed in Ref. \citenum{fattal_entanglement_2004}. We argue that the proposed encoders are optimal in terms of non-local operations.

This article is organized as follows: In Section \ref{sec:prelim}, we present a brief summary of RM and QRM codes, and action of various quantum gates on a quantum state. We present an efficient encoder achieved by row transforms in Section \ref{sec:rred}. We then present our main results: We present the recursive QRM code construction in Section \ref{sec:decomp}, followed by recursive encoders for the QRM code. We discuss the circuit efficiency, encoder properties, and demonstrate entanglement extraction in Section \ref{sec:disc}. Sufficient illustrative examples are interspersed in the main text to assist the readers.

\section{Preliminaries}\label{sec:prelim}
We present a brief overview of notations, definitions, and constructions of classical and quantum Reed-Muller codes.

\subsection{Classical Reed Muller codes}
A Reed-Muller (RM) code, parameterized by $(r, m)$ is a linear binary code of length $2^m$, code dimension $\sum_{i = 0}^r{m \choose i}$ and distance $2^{m-r}$. It is a $[2^m, \sum_{i = 0}^r{m \choose i}, 2^{m-r}]$ linear code and is denoted by $\RM(r, m)$. There exists multiple equivalent constructions of this code, we present a few here. Detailed derivation and properties are discussed in Appendix \ref{app:RMprop} and can be found in previous works \cite{abbe_reedmuller_2021, Nadkarni2024entanglement}.
\begin{itemize}
    \item Evaluation vector construction: The codewords of $\RM(r, m)$ are given by the evaluation vectors of multivariate polynomials from the polynomial ring $\mathbb{W}_m:= \mathbb{F}_2[x_1, \cdots, x_m]$ of degree $\leq r$. We define the evaluation vector of a polynomial $f$, $\mathrm{Eval}^{(m)}()$ as the concatenation of the values of the function at all points of the domain of the polynomial. 
    Denote the set of positive integers from $i$ to $j$ (inclusive) as:
    \begin{equation}\label{def:positiveintegers}
        \integers{i}{j} := \{i, i+1, \dots, j\}.
    \end{equation}
    For the case of binary codewords defined by $f\in \mathbb{W}_m$, the domain consists of $m$ bit binary vector representation $\bm\rho(j) \in \mathbb F_2^{m}$ of integers $j \in \integers{1}{2^m}$ and the order of points $\bm \rho(j)$ are chosen of increasing $j$ as
    \begin{equation}\label{def:RMeval}
        \mathrm{Eval}^{(m)}(f) := \left(f(\bm \rho{(1)}), \cdots, f(\bm \rho{(2^m)})\right), 
    \end{equation}
    where we define
    \begin{equation}
        \bm \rho{(j)} := (\rho_1(j), \cdots, \rho_m(j)) = \mathrm{bin}(j-1)
    \end{equation}
    as the $m$-bit binary representation of $(j-1)$\footnote{Throughout this work, we index bits, qubits, rows, and columns beginning with $1$.}.
    Since function $\mathrm{Eval}^{(m)}(\cdot)$ is linear (Lemma 1 of \cite{Nadkarni2024entanglement}), the rows of the generator matrix of the $\RM(r, m)$ code is given by the evaluation vectors of the monomials upto degree $r$.
    
    \item Tensor Product construction: We define the Hamming weight $\wt{\bm b}$ of a binary string $\bm b$ as the number of non-zero entries in $\bm b$. The generator matrix of $\RM(r, m)$ is given by the subset of rows with weight $\geq 2^{m-r}$ of the matrix $\left[\begin{array}{cc}
        1 & 1 \\
        0 & 1
    \end{array}\right]^{\otimes m}$.

    \item Plotkin $(u, u+v)$ construction: The codewords of $\bm c = \RM(r, m)$ are constructed recursively as $\bm c = (\bm u, \bm u \oplus \bm v)$ for codewords $\bm u \in \RM(r, m-1), \bm v \in \RM(r-1, m-1)$. Thus the generator matrix is recursively constructed as
    \begin{equation}\label{def:Grm}
        G(r, m) = \begin{bmatrix}
            G(r, m-1) & G(r, m-1) \\ 0 & G(r-1, m-1)
        \end{bmatrix}
    \end{equation}
    with $G(r, r)$ given by $\left[\begin{array}{cc}
        1 & 1 \\
        0 & 1
    \end{array}\right]^{\otimes r}$ (or any set of $2^m$ rows that form a basis of $\mathbb{F}_2^{2^m}$) and $G(0, m) = \underbrace{[1~\cdots~1]}_{2^m} \equiv [\bm 1]$.  Note that $G(-1, m) =\phi$.
\end{itemize}
The dual code is also an RM code,
\begin{equation}\label{prop:RMdual}
    \RM(r, m)^\perp \equiv \RM(m-r-1, m)
\end{equation}
Another useful property of RM codes is the nested codespace structure
\begin{equation}\label{prop:RMsubgroup}
    \RM(-1, m) \subset \RM(0, m) \subset \RM(1, m) \subset \dots \subset \RM(m-1, m) \subset \RM(m, m),
\end{equation}
which follows as an immediate consequence of the evaluation vector construction.

\textit{Generalizing the Plotkin construction}: The evaluation points $\bm \rho{(j)}$, $j\in \integers{1}{2^m}$ in the domain can be thought of as the $2^m$ vertices of a unit hypercube of $m$ dimensions. The split of the bits in the Plotkin construction as $\bm u$ and $\bm u\oplus \bm v$ can be visualized as a $m-1$ dimensional hyperplane $x_1 = 0.5$ splitting the vertices of the $m$ dimensional hypercube into two sets. This corresponds to considering the set of points $\bm \rho{(j)}$ with $\rho_1{(j)} = 0$ in Eq. \eqref{def:RMeval} for the first set and $\rho_1{(j)} = 1$ in Eq. \eqref{def:RMeval} for the second set. The codewords $\bm u$ and $\bm u \oplus \bm v$ over these sets of bit positions is given by evaluation vectors of $f \in \mathbb F_2[x_2, x_3, \cdots, x_m]$ of degree $\leq r$.

Now by splitting the bits of the codeword along the hyperplane $x_i = 0.5$, $i\in \integers{1}{m}$, we generalize the Plotkin construction to \textit{Plotkin-$i$ construction}. This is equivalent to partitioning the bit indices into sets
\begin{equation}
    B^{(m)}(i; k) := \{j|\Eval{m}{x_i}_j = k\},\label{bitsubsets}
\end{equation}
where $B^{(m)}(i; 0)$, $B^{(m)}(i; 1)$ denotes bit positions where the evaluation vectors of $x_i$ are $0$, $1$ respectively. We note that the classical code $\RM(r, m)$ has an Automorphism group $\mathbb{GA}(m)$, whose action on the codewords is defined by affine bijections over the evaluation points  $\bm \rho \in \mathbb F_2^m$ as \cite{geiselhart2021iterative} 
\begin{equation}
    \bm \rho \rightarrow A\rho + \bm b, ~~~ A \in \mathbb F_2^{m\times m}, \bm b\in \mathbb F_2^m.
\end{equation}
The Plotkin-$i$ construction can been seen as an immediate consequence of setting $A$ to be the appropriate permutation matrix, $\bm b = \bm 0$, and applying the Plotkin $(u, u+v)$ construction on the codewords.

\begin{example}\label{ex:rm23}
    Consider the code $\RM(2, 3)$, which is a $[8, 7, 2]$ linear binary code. We can detect errors of Hamming weight at most $d-1 = 1$, but cannot correct any error. We show the construction of the generator matrix.
    
    We start with the tensor product
    \begin{align}
        \left[\begin{array}{p{0.02cm}p{0.02cm}}
             1 & 1 \\ 0 & 1
        \end{array} \right]^{\otimes 3} 
        = \left[\begin{array}{p{0.02cm}p{0.02cm}p{0.02cm}p{0.02cm}p{0.02cm}p{0.02cm}p{0.02cm}p{0.02cm}}
             1 & 1 & 1 & 1 & 1 & 1 & 1 & 1\\
             0 & 1 & 0 & 1 & 0 & 1 & 0 & 1\\
             0 & 0 & 1 & 1 & 0 & 0 & 1 & 1\\
             0 & 0 & 0 & 1 & 0 & 0 & 0 & 1\\
             0 & 0 & 0 & 0 & 1 & 1 & 1 & 1\\
             0 & 0 & 0 & 0 & 0 & 1 & 0 & 1\\
             0 & 0 & 0 & 0 & 0 & 0 & 1 & 1\\
             0 & 0 & 0 & 0 & 0 & 0 & 0 & 1\\
        \end{array} \right].
    \end{align}
    Retaining the rows with weight $\geq 2^{m-r} = 2$ (removing the last row), we obtain the generator matrix of $\RM(2, 3)$. Equivalently, the rows of the generator matrix is given by the evaluation vectors of monomials $f(x_1, x_2, x_3)\in \mathbb{W}_3$, $\mathrm{deg}(f) \leq 2$. We obtain
    \begin{equation}
        G(2, 3) \hspace{-0.1cm}= \hspace{-0.2cm} \left[\hspace{-0.1cm}\begin{array}{p{0.02cm}p{0.02cm}p{0.02cm}p{0.02cm}p{0.02cm}p{0.02cm}p{0.02cm}p{0.02cm}}
             1 & 1 & 1 & 1 & 1 & 1 & 1 & 1\\
             0 & 1 & 0 & 1 & 0 & 1 & 0 & 1\\
             0 & 0 & 1 & 1 & 0 & 0 & 1 & 1\\
             0 & 0 & 0 & 1 & 0 & 0 & 0 & 1\\
             0 & 0 & 0 & 0 & 1 & 1 & 1 & 1\\
             0 & 0 & 0 & 0 & 0 & 1 & 0 & 1\\
             0 & 0 & 0 & 0 & 0 & 0 & 1 & 1\\
        \end{array} \right] \hspace{-0.1cm}\equiv\hspace{-0.1cm} \left[\hspace{-0.1cm}\begin{array}{l}
             \Eval{m}{1} \\
             \Eval{m}{x_3}\\
             \Eval{m}{x_2}\\
             \Eval{m}{x_2x_3}\\
             \Eval{m}{x_1}\\
             \Eval{m}{x_1x_3}\\
             \Eval{m}{x_1x_2}
        \end{array}\hspace{-0.2cm}\right].
    \end{equation}
    The generator matrix is constructed recursively as
    \begin{align}
        G(2, 3) =& \left[\begin{array}{cc}
             G(2, 2) & G(2, 2) \\
             0 & G(1, 2)
        \end{array} \right] \\=& \left[\begin{array}{cc}
             G(2, 2) & G(2, 2) \\
             0 & \left[\begin{array}{cc}
                 G(1, 1) & G(1, 1) \\
                 0 & G(0, 1)
             \end{array}\right]
        \end{array} \right] \\
        =& \left[\begin{array}{p{0.02cm}p{0.02cm}p{0.02cm}p{0.02cm}p{0.02cm}p{0.02cm}p{0.02cm}p{0.02cm}}
             1 & 1 & 1 & 1 & 1 & 1 & 1 & 1\\
             0 & 1 & 0 & 1 & 0 & 1 & 0 & 1\\
             0 & 0 & 1 & 1 & 0 & 0 & 1 & 1\\
             0 & 0 & 0 & 1 & 0 & 0 & 0 & 1\\
             0 & 0 & 0 & 0 & 1 & 1 & 1 & 1\\
             0 & 0 & 0 & 0 & 0 & 1 & 0 & 1\\
             0 & 0 & 0 & 0 & 0 & 0 & 1 & 1\\
        \end{array} \right].
    \end{align}
    For a message vector $\bm m = [1010101]$, we obtain the codeword $\bm c = \bm m G(2, 3) = [11000000]$. If we receive the bit string $\bm c' = [11010000]$, errors can be detected by evaluating the syndrome $s =\bm c'H^T = [1]\neq [0]$ ($H = G(0, 3) = [11111111]$). Since the syndrome has non-zero entries, we conclude that an error has occurred.
\end{example}

\textit{Punctured Reed-Muller code}: Another code of interest is the Punctured Reed-Muller code $\pRM(r, m)$, obtained by puncturing the $\RM(r, m)$ code. Puncturing involves removing a bit position, and in this case the first bit of the code. Denote the puncturing operation by $\cdot^*$ and the punctured version of the codeword $\bm c$ by $\bm c^*$. The punctured Reed-Muller code $\pRM(r, m)$ is a $[2^m -1, \sum_{i=0}^r {m \choose i}, 2^{m-r}-1]$ linear code, requiring that $r \leq m-1$. The Tensor product constructions follow from simply dropping the first column. Extending the Plotkin $(u, u+v)$ construction, the codewords $ \bm c \in \pRM(r, m) $ can be written as $\bm c = (\bm u^*, \bm u \oplus \bm v)$ for some $\bm u \in \RM(r, m-1), \bm v\in \RM(r-1, m-1)$. The generator matrix of $\pRM(r, m)$ is then given by $G(r, m)^*$, where $[\cdot]^*$ denotes the removal of the first column of the matrix. The evaluation vector construction follows from removing $\rho(\bm 1)$ from the points of evaluation. Similar to Eq. \eqref{prop:RMsubgroup}, the punctured codes have a nested codespace structure,
\begin{equation}\label{prop:pRMsubgroup}
    \pRM(-1, m) \subset \pRM(0, m) \subset \pRM(1, m) \subset \cdots \subset \pRM(m-1, m)
\end{equation}
and the dual code is given by
\begin{equation}\label{prop:pRMdual}
    \pRM(r, m)^\perp \equiv \pRM(m-r-1, m) /\{\bm 1\}.
\end{equation}

\subsection{Quantum codes}
Quantum codes can be constructed by identifying commuting observables and mapping the message quantum state to a codeword state in their shared eigenspace. By systematically measuring the eigenvalues of these observables on the received state, errors can be detected and corrected.

\subsubsection{Calderbank Shor Steane codes}
A standard method of constructing a quantum error correcting code (QECC) is the Calderbank Shor Steane (CSS) construction, wherein the QECC is constructed by utilizing two classical error correcting codes $C_1$ and $C_2$. This construction requires that the classical codes satisfy the dual containing criteria 
\begin{equation}\label{CSS:condition}
    C_1^\perp \subseteq C_2,
\end{equation}
equivalently $H_1H_2^T = 0$, where $H_1, H_2$ are the parity check matrices of the classical codes $C_1, C_2$, respectively. This condition ensures that the stabilizer generators form a commuting set. The basis for the quantum codespace is given by
\begin{equation}\label{def:CSS}
    \ket{\bm{w}}_{C_1, C_2} := \frac{1}{\sqrt{|C_1^\perp|}}\sum_{\bm{c} \in C_1^\perp}\ket{\bm w \oplus \bm{c}},
\end{equation}
for classical codewords $\bm{w} = \bm m [G(C_2)\setminus G(C_1^\perp)] \in C_2$, where $G(C_i)$ denotes the generator matrix of $C_i$. This codeword is a superposition of classical states that represent codeword strings in a coset of $C_1^\perp$. Since $\bm x \oplus C_1^\perp = C_1^\perp$ for $\bm x \in C_1^\perp$, for the CSS codewords we have the property
\begin{align}
    X(\bm x)\ket{\bm w}_{C_1, C_2} &= \ket{\bm w\oplus \bm x}_{C_1, C_2}\label{CSS:addx}\\ 
    \Rightarrow X(\bm x)\ket{\bm w}_{C_1, C_2} &= \ket{\bm w}_{C_1, C_2}, ~~ \forall \bm x \in C_1^\perp \label{CSS:Xinv}
\end{align}
where we denote $X(\bm x) = X^{x_1}\otimes\dots \otimes X^{x_{n}}$ as the Pauli product operator consisting of Pauli $X$ operator at qubit positions given by the non-zero entries of $\bm x$. Similarly, 
\begin{equation}\label{CSS:Zinv}
    Z(\bm z)\ket{\bm w}_{C_1, C_2} = \ket{\bm w}_{C_1, C_2}, ~~ \forall \bm z \in C_2^\perp
\end{equation}
where we denote $Z(\bm z) = Z^{z_1}\otimes \dots \otimes Z^{z_{n}}$ as the Pauli product operator consisting of Pauli $Z$ operator at qubit positions given by the non-zero entries of $\bm z$. These operators form a set of stabilizers of the codewords in Eq. \eqref{def:CSS} which form a basis for the $+1$ eigenspace of these operators. We introduce a Pauli-to-binary isomorphism to represent a Pauli operator on $n$ qubits by a binary vector of $2n$ bits where the first $n$ bits represent the $X$ operators, followed by $n$ bits that represent the $Z$ operators such that $(\bm x, \bm z) \equiv i^{\bm x\cdot \bm z}X^{x_1}Z^{z_1}\otimes\cdots\otimes Z^{z_{n}}X^{x_{n}}$. By stacking such binary vectors representing the stabilizers of the CSS codewords, the parity checks of the CSS code $\text{CSS}(C_1, C_2)$ can be represented by the matrix
\begin{equation}\label{paritycheck}
    H = \left[\begin{array}{c|c}
        H_1 & 0 \\
        0 & H2
    \end{array}\right],
\end{equation}
where the first $n$ columns represent the $X$ operators and the last $n$ columns represent the $Z$ operators. Since the full Hilbert space has a dimension of $2^n$ and each independent stabilizer reduces the codespace dimension by a factor of 2, the codespace of the CSS code constructed with classical codes $[n, k_1, d_1]$ and $[n, k_2, d_2]$ will have a dimension of $2^{n-(n-k_1)-(n-k_2)} = 2^{k_1 + k_2 - n}$. The distance of the CSS code is given by at least the minimum weight of the stabilizers, $\min(d_1, d_2)$. We denote the CSS code parameters as $[[n, k_1 + k_2 - n, \geq \min(d_1, d_2)]]$.

\textit{Unitary encoder for CSS codes}: Let $\ket{\phi} \in \mathcal{H}_{2^k}$, be a normalized $k$-qubit state defined over a Hilbert space of dimensions $2^{k}$, where $k = k_1 + k_2 - n$.
We say that $U$ is a unitary encoder of $\mathrm{CSS}(C_1, C_2)$ if $U \in \mathcal U(\mathcal{H}_{2^n})$ is a unitary operator on $n$ qubits such that
\begin{equation}
    U\ket{\phi}\ket{0}^{\otimes n-k} = \sum_{\mathclap{\bm m \in \mathbb F_2^k}} \alpha_{\bm m} \ket{\bm w(\bm m)}_{C_1, C_2}, ~~~ \forall \ket{\phi} \in \mathcal{H}_{2^k}
\end{equation}
where $\bm w(\bm m) = \bm m [G(C_2)\setminus G(C_1^\perp)]$ is a classical encoding of $\bm m$ and $\alpha_{\bm m} = \braket{\bm m | \phi}$. Note that since choice of $G(C_1)$ and $G(C_2)$ is arbitrary upto an invertible row transformation of generators, the unitary encoding circuit of a CSS code is not unique.

\subsubsection{Quantum Reed Muller codes}
We refer to the CSS code constructed with $C_1 = C_2 = \RM(r, m)$ as Quantum Reed Muller code, denoted by $\QRM(r, m)$.  To satisfy Eq. \eqref{CSS:condition}. we require $r\geq (m-1)/2$. The basis codewords of this code, following the construction in Eq. \eqref{def:CSS} for the message basis state $\ket{\bm m}$ is given by
\begin{equation}\label{def:QRM}
    \ket{\bm w}_{r,m} := \frac{1}{\sqrt{|\RM(r, m)^\perp|}}\sum_{\bm c \in \RM(r, m)^\perp} \ket{\bm w \oplus \bm{c}}
\end{equation}
for $\bm w = \bm m[G(r, m)\setminus G(m-r-1, m)]$. 
For $(m-1)/2< r \leq m$,  $\QRM(r, m)$ is a $[[2^m, \sum_{i=m-r}^{r}{m \choose i}, 2^{m-r}]]$ quantum code. When $r = (m-1)/2$, $\QRM(r, m)$ is a zero rate code. Since all correctable errors have distinct syndromes, $\QRM(r, m)$ is a non-degenerate code, see Appendix \ref{app:degeneracy} for proof.

\subsubsection{Zero rate quantum Reed Muller codes}
CSS codes constructed from classical codes $C_1 = \RM(m-r-1, m)$ and $C_2 = C_1^{\perp} = \RM(r, m)$ form a $[[2^m, 0, 2^{\min(r+1, m-r)}]]$ quantum code. We refer to this code as zero rate quantum Reed Muller code, and denoted as $\zQRM(r, m)$. This class of CSS codes have zero rate and have only one codeword state,
\begin{equation}\label{def:zQRM}
    \ket{\bm 0}_{r, m}^{(0)} := \frac{1}{\sqrt{|\RM(r, m)|}}\sum_{\bm c
    \in \RM(r, m)}\ket{\bm c} \equiv \ket{\bm 0}_{m-r-1, m}.
\end{equation}

We define an additional class of QRM codes that are useful later in this work, $\zQRM(r, m; r_{\diamond}, m_{\diamond})$ with codewords of the form
\begin{align}
    \ket{\bm w}_{r, m}^{(0)} =& X(\bm w)\ket{\bm 0}_{r, m}^{(0)},\\
    =& \frac{1}{\sqrt{|\RM(r, m)|}}\sum_{\mathrlap{\bm c
    \in \RM(r, m)}}\ket{\bm w \oplus \bm c},\label{def:zQRMgen}
\end{align}
where $\bm w \in \RM(r_{\diamond}, m)$. Note that the codewords in Eq. \eqref{def:zQRMgen} are not necessarily CSS codewords.

\subsubsection{Punctured quantum Reed Muller codes}
Punctured Reed-Muller codes $\pRM(r, m)$ are obtained by removing the first bit. The dual code is $\pRM(r, m)^\perp = \pRM(m-r-1, m)/\{\bm 1\}$. Punctured Quantum Reed-Muller codes $\pQRM(r, m) \equiv \mathrm{CSS}(\pRM(r, m), \pRM(r, m))$ are obtained by removing the first qubit $q_1$, and the stabilizer generators $X(\bm 1)$, $Z(\bm 1)$ from the stabilizer set of the $\QRM(r, m)$ code, yielding a $[[2^m -1, \sum_{m-r}^r {m \choose i} +1, \geq 2^{m-r}-1]]$ code with basis codewords $\ket{\bm w}_{r, m}^{*}$,
\begin{equation}\label{def:pQRM}
    \ket{\bm w}_{r, m}^* := \frac{1}{\sqrt{|\pRM(r, m)^\perp|}}\sum_{\bm c \in \pRM(r, m)^\perp}\ket{\bm w \oplus \bm c}
\end{equation}
where $\bm w\in \pRM(r, m)$ is a $2^m-1$ bit string. 

\subsubsection{Punctured zero rate quantum Reed Muller codes}
Puncturing the zero rate codes $\zQRM(r, m)$ yields a $[[2^m -1, 1, \geq 2^{\min(r+1, m-r)} - 1]]$ code. We denote this code as $\pzQRM(r, m)$. The codewords of $\pzQRM(r, m)$ are
\begin{equation}\label{def:pzQRM}
    \ket{\bm w}_{r, m}^{(*, 0)} := \frac{1}{\sqrt{|\pRM(r, m)/\{1\}|}}\sum_{{\bm c \in \pRM(r, m)/\{1\}}}\ket{\bm w\oplus \bm c}
\end{equation}
where $\bm w \in \pRM(r, m)$. Unlike the zero rate codes, the punctured zero rate codespace is spanned by two codewords $\ket{\bm 0}_{r, m}^{(*, 0)}$ and $\ket{\bm 1}_{r, m}^{(*, 0)}$ and can encode 1 qubit of information. Of special interest are the codes $\pzQRM(r, 2r+1) \equiv \pQRM(r, 2r+1)$ that allows for transversal CNOT and Hadamard gates, and $\pzQRM(r, m)$, $m\geq 3r+1$ that allows for transversal T and CNOT \cite{knill1996threshold, luo_fault-tolerance_2020}, useful towards fault-tolerant quantum computation.

Additionally, extending the above code, we define $\pzQRM(r, m; r_{\diamond}, m_{\diamond})$ as the punctured versions of $\zQRM(r, m; r_{\diamond}, m_{\diamond})$. Note that since $\pzQRM(r, m) \equiv \pzQRM(r, m; r, m)$, we use the general notation of $\pzQRM(r, m; r_{\diamond}, m_{\diamond})$ henceforth to refer to punctured zero rate QRM codes. The basis codewords have the form of Eq. \eqref{def:pzQRM}, with $\bm w \in \pRM(r_{\diamond}, m)$.

\subsection{Action of quantum gates}\label{sec:gate_action}
Encoders for CSS codes can be constructed with CNOT gates and Hadamard gates\cite{chuang_book_2010}. In later sections, we provide efficient circuit constructions for encoding various QRM codes, without measurements. We now introduce and discuss the action of CNOT, Hadamard gates, and qubit permutations to motivate our encoder constructions.

\subsubsection{Hadamard gate}
The Hadamard gate $H$ is a single qubit gate defined such that
\begin{align}
    H\ket{0} &= \frac{\ket{0} + \ket{1}}{\sqrt{2}},~~ H\ket{1} = \frac{\ket{0} - \ket{1}}{\sqrt{2}}.
\end{align}
and can be represented by the matrix
\begin{align}
    H &\equiv \frac{1}{\sqrt{2}}\begin{pmatrix} 1 & 1 \\ 1 & -1 \end{pmatrix}.
\end{align}
Let $H(K)$ denote the array of Hadamard gates acting on the qubits indexed by elements of $K \subseteq \integers{1}{n}$, then
\begin{equation}
    H(K) = \prod_{i\in K} H_i
\end{equation}
where $H_i$ denotes the Hadamard gate acting on the $i^{\text{th}}$ qubit. In particular, for $K = \integers{1}{n}$ and $\ket{\bm b} = \ket{b_1\cdots b_n}$, we have
\begin{equation}
    H(\integers{1}{n})\ket{\bm b} = \frac{1}{2^{n/2}}\sum_{\bm c \in \mathbb F_2^n}(-1)^{\bm b\cdot \bm c}\ket{\bm c}.
\end{equation}
Note that $H$ is a single qubit gate, thus $K$ may be a set or a sequence of indices.

\subsubsection{CNOT gate}
A CNOT gate applies an $X$ or an $I$ gate on a target qubit if the state of the control qubit is $\ket{0}$ or $\ket{1}$ respectively.
We denote the CNOT gate acting on the $j^{\text{th}}$ qubit controlled on the $i^{\text{th}}$ qubit by 
\begin{equation}\label{def:CNOT}
    \CNOT{i}{j} := \ket{0}\bra{0}_iI_j + \ket{1}\bra{1}_iX_j
\end{equation}
where $I_j$, $X_j$ denote the identity or NOT gate applied on the $j^{\text{th}}$ qubit respectively. We note that the CNOT gate is its own inverse, i.e., we have
\begin{equation}\label{prop:CNOT:refl}
    \CNOT{i}{j}\cdot\CNOT{i}{j} = I.
\end{equation}
We denote an array of $s$ CNOT gates with target qubits indexed by elements of $L$ and control qubits indexed by elements of $K$ by
\begin{equation}\label{def:CNOT:array}
    \CNOT{K}{L} := \prod_{i = 1}^{s} \CNOT{k_i}{l_i}
\end{equation}
where $K = (k_1, k_2, \dots, k_s)$ and $L = (l_1, l_2, \dots, l_s)$, for $k_i, l_i \in \integers{1}{n}$ are ordered sequences.

\noindent\textbf{CNOT as a column operation on a generator matrix}: We consider an $n$ qubit state $\ket{\bm b} = \ket{b_1\dots b_{n}}, \bm b\in \mathbb F_2^n$, then
\begin{align}
    \ket{\bm b} \equiv \ket{\dots b_i\dots b_j\dots}
    \xleftrightarrow{\CNOT{i}{j}} \ket{\dots b_i\dots (b_i\oplus b_j)\dots} \equiv \ket{\bm b M^{(ij)}}\label{CNOTbitaddition}
\end{align}
where 
\begin{equation}\label{def:Mij}
    [M^{(ij)}]_{kl} := \delta_{kl} + \delta_{ik}\delta_{jl},
\end{equation}
is a matrix representation of the action of $\CNOT{i}{j}$ on the bit string $\bm b$ with $\delta_{ij} = 1$ if $i=j$ and $0$ otherwise. $M^{(ij)}$ is simply an $n\times n$ identity matrix in $\mathbb F_2^{n \times n}$ with an additional non-zero entry at the $(i, j)^{\mathrm{th}}$ position. In order to understand the action of the $n\times n$ matrix $M^{(ij)}$ on the generator matrix and the binary vector being encoded, we require to extend the generator matrix to $\mathbb F_2^{n \times n}$ and the message vector to $\mathbb F_2^n$ while preserving the codeword to which it is encoded to.
Let $\bm m' = (\bm m, \bm 0) \in \mathbb F_2^n$ be the message vector $\bm m \in \mathbb F_2^k$ padded with zeros and 
\begin{equation}\label{def:G'}
    G' := \left[\begin{array}{c}
        G\\
     G^s
\end{array}\right]
\end{equation}
be the generator matrix $G \in \mathbb F_2^{k\times n}$ padded with $(n-k)$ linearly independent rows $G^s$ such that $G' \in \mathbb F_2^{n\times n}$ is an $n\times n$ full rank matrix\footnote{\label{fullrank}CNOT gates act as rank-preserving row or column transforms on the generator matrix, and $U_I = I$. Thus, only $U_{G'}$ with full rank matrices $G'$ are achievable from $U_I$ by application of CNOT gates. Since the message vector $\bm m$ is padded with zeroes to $\bm m'$, the codeword encoded by $U_{G'}$ is consistent to that originally encoded with generator matrix $G$.}, given the $k$ rows of $G$ that span the code space $C$ are linearly independent. Let 
\begin{equation}
    \ket{\bm b} = \ket{\bm m \cdot G} = \ket{\bm m' G'} = U_{G'}\ket{\bm m'}
\end{equation}
be an $n$ qubit encoding of $\bm m$, where we say $U_{G'}$ is the \textit{encoding circuit} for $G'$. The action of the CNOT gate on $\ket{\bm b}$ is equivalent to performing a column operation on $G'$ since 
\begin{equation}
    \CNOT{i}{j}\ket{\bm b} = \ket{\bm b M^{(ij)}} = \ket{\bm m' G'M^{(ij)}}.
\end{equation}
and the right action of $M^{(ij)}$ on $G'$ adds column $i$ to column $j$ of $G'$. Thus,
\begin{equation}\label{CNOT:col1}
    \CNOT{i}{j}U_{G'} = U_{G'M^{(ij)}}.
\end{equation}
By multiplying Eq. \eqref{CNOT:col1} by $\CNOT{i}{j}$ on both sides from the left, we obtain
\begin{equation}\label{CNOT:col2}
    U_{G'} = \CNOT{i}{j}U_{G'M^{(ij)}}.
\end{equation}

\noindent\textbf{CNOT as a row operation on a generator matrix}: CNOT is reflective, i.e., $\CNOT{i}{j}\cdot \CNOT{i}{j} = I$. Likewise, the matrix representation in Eq. \eqref{def:Mij} squares to identity, $\left(M^{(ij)}\right)^2 = I$. Then we have
\begin{align}
    \ket{\bm m'\cdot G'} =& \ket{\bm m' M^{(ij)}\cdot M^{(ij)}G'}\\
    =& U_{M^{(ij)}G'} \ket{\bm m' M^{(ij)}}\\
    \Rightarrow U_{G'}\ket{\bm m'} =& U_{M^{(ij)}G'}\CNOT{i}{j}\ket{\bm m'}
\end{align}
The left action of $M^{(ij)}$ on $G'$ adds row $\bm g_j$ to $\bm g_i$. The encoding circuit for $G'$ is a CNOT on the input states followed by the encoding circuit for the row transformed generator matrix $M^{(ij)}G'$
\begin{equation}\label{CNOT:row}
    U_{G'} = U_{M^{(ij)}G'}\CNOT{i}{j}
\end{equation}

\subsubsection{Qubit permutations}
Consider the bijective maps $p, p^{-1}:\integers{1}{n} \rightarrow \integers{1}{n}$ such that $p^{-1}(p(i)) = p(p^{-1}(i)) = i ~\forall i \in \integers{1}{n}$. 
$p$ induces a permutation on the qubits $(q_1, \dots, q_{n})$ such that the qubit at position $i$ is permuted to position $p(i)$. The map $p$ can be faithfully represented by the sequence 
\begin{equation}\label{def:permutationstring}
    \overline p := (p(1), \dots, p(n)).
\end{equation}
Define the action of this permutation on the computational basis state $\ket{\bm b}$ as
\begin{align}
    P(p)\ket{\bm b} :=& \ket{b_{p^{-1}(1)}, \dots, b_{p^{-1}(n)}} \equiv \ket{\bm b P(p)}
\end{align}
where
\begin{align}
    [P(p)]_{kl} = \sum_{i = 1}^{n} \delta_{ik}\delta_{p(i)l} = \sum_{i=1}^{n}\delta_{p^{-1}(i)k}\delta_{il}
\end{align}
is the matrix representation of the permutation induced by $p$ on the bit string $\bm b$. 
The permutation matrices have the following properties
\begin{align}
    (i)~& P(p)P(q)= P(q\circ p)\\
    (ii)~& P^T(p) = P^{-1}(p)
\end{align}
where $q\circ p(i) = q(p(i))$ denotes the function composition. Similar to Eqs. \eqref{CNOT:col2}, \eqref{CNOT:row} we can show
\begin{align}
    U_{G'} =& P(p^{-1})U_{G'P(p)}\label{perm:col}\\
    U_{G'} =& U_{P(p)G'}P(p^{-1})\label{perm:row}
\end{align}
By Eq. \eqref{perm:col}, we may apply the encoding circuit for the column permuted matrix $G'P(p)$ followed by a qubit permutation given by $p^{-1}$ to construct the encoding circuit for $G'$. Similarly, by Eq. \eqref{perm:row}, we may encode $G'$ by applying a permutation given by $p^{-1}$ followed by the encoder for row permuted matrix $P(p)G'$.

\noindent\textbf{Qubit permutations and CNOT gates}: Since permutations are simply relabelling of qubits $q_i\rightarrow q_{p(i)}$, we have the permutation
\begin{align}
    P^{-1}(p)&\CNOT{i}{j}P(p) \nonumber\\&= \CNOT{p^{-1}(i)}{p^{-1}(j)}
\end{align}

No operations are required to encode the generator matrix $G' = I_{n\times n}$, i.e., $U_{I} =I$. Thus, using Eqs. \eqref{CNOT:col2}, \eqref{CNOT:row}, \eqref{perm:col}, and \eqref{perm:row} we can perform column and row operations to reduce the generator matrix $G'$ to the identity matrix and reconstruct $U_{G'}$. Using the concepts discussed in this section, we next construct the encoding circuits for QRM codes.

\section{Row reduced encoder for Quantum Reed Muller codes}\label{sec:rred}
Let $n = 2^m$ and $k = \sum_{i=0}^r {m \choose i}$. We consider the initial computational basis state to be encoded $\ket{\bm m}\ket{\bm 0}$ where $\bm m \in \mathbb{F}_2^{2k-n}$ is the message bit string and $\ket{\bm 0}$ are the $2(n-k)$ ancilla qubits. Apply the Hadamard gates $H(\integers{2k-n + 1}{k})$ to the first $n-k$ ancilla qubits to obtain
\begin{equation}\label{RedEnc:hadamard}
    \frac{1}{2^{(n - k)/2}}\sum_{\mathclap{\bm u \in \mathbb F_2^{n-k}}}\ket{\bm m}\ket{\bm u}\ket{\bm 0}
\end{equation}
Let $G_1 = G(r, m)\setminus G(m-r-1, m)$, and $G_2 = G(m-r-1, m)$ denote generator matrices with rows that span $\RM(r, m)/\RM(m-r-1, m)$ and $\RM(m-r-1, m)$, respectively, and define $G^{(s)}$ as a matrix consisting of linearly independent rows such that
\begin{equation}\label{def:Gs}
    G := \left[\begin{array}{c}
          G_1 \\ G_2 \\ G^{(s)}
    \end{array}\right]
\end{equation}
forms a full rank matrix (See Footnote \ref{fullrank}). By Lemma \ref{lem:uniquefirst} in Appendix \ref{app:RMprop}, the column index of the leading non-zero entry of rows in $G_1$ and $G_2$ are unique. Call this set of unique column indices as $L$. We can choose $G^{(s)}$ to be unit vectors, each of length $n$ with a single non-zero entry at positions $[2^m]\setminus L$. Since $(\bm m, \bm u, \bm 0)G = \bm mG_1 \oplus \bm uG_2 \oplus \bm 0$, on applying $U_{G}$ as defined in Section \ref{sec:gate_action} on Eq. \eqref{RedEnc:hadamard}, we obtain the state
\begin{equation}
    U_{G}\frac{1}{2^{n - k - 1}}\sum_{\mathclap{\bm u \in \mathbb F_2^{n-k}}}\ket{\bm m}\ket{\bm u}\ket{\bm 0} = \frac{1}{2^{n-k-1}}\sum_{\mathclap{\bm c \in \RM(r, m)^\perp}}\ket{\bm w \oplus \bm c}
\end{equation}
where $\bm w = \bm m G_1 \in \RM(r, m)$, $\bm c = \bm u G_2 \in \RM(m-r-1, m)$.

Now, we require to construct $U_G$. Since $G$ is a full rank matrix, we can determine a sequence of row/column operations and qubit permutations that transform $G$ into the $n\times n$ identity matrix $I$, and $U_{I}$ is simply the identity operation on $n$ qubits. Reversing this sequence of matrix operations will then determine the sequence of gate operations to construct $U_{G}$. One such sequence is as follows: First, we apply the row permutation $P(p)$ such that $P(p)G$ is in row-echelon form\footnote{$P(p)$ depends on the explicit ordering of rows in G, thus is left to readers to determine as required.}. Next, we use Eq. \eqref{CNOT:col2} to perform column operations to remove every off-diagonal non-zero entry in $P(p)G$ individually starting from the first row. As each off-diagonal entry of $P(p)G$ will require $1$ column operation to be removed, we will require $\sum_{i =0}^r {m \choose i}(2^{m-i} - 1)$ CNOT gates. 

Row and columns operations do not commute, thus the sequence of row and column operations are not unique and can be optimized to reduce the number of gates required. Additionally, to reduce the number of gates, we can choose $G_1, G_2$ with rows that span the same codespaces with lower non-zero entries. This can be achieved by considering row transforms $R_1, R_2$ on $G_1, G_2$ respectively. Note that we do not need extra CNOT gates to reflect this row transformation since this is an alternative choice for the generators. The matrix $G$ in Eq. \eqref{def:Gs} is modified to $G^{(\mathrm{rred})} = RG$, where
\begin{equation}
    R = \left[\begin{array}{ccc}
         R_1 &  0 & 0\\
         0 & R_2 & 0\\
         0 & 0 & I
    \end{array}\right].
\end{equation}
The encoder on $P(p^{-1})\ket{\bm m}\ket{\bm 0}$ is then constructed with Hadamard gates, followed by $U_{P(p)G^{(\mathrm{rred})}}$. We call these encoders \textit{row reduced encoders}. We now provide an explicit construction of a row transform $R$ that optimally reduces the generator matrix.

\noindent\textbf{Constructing row transformation R}:\\
We introduce the notation $x_A = \prod_{i \in A} x_i$ as the monomial product consisting of variables $x_i$ of indices $i\in A, A\subseteq \integers{1}{m}$, where recall that $\integers{i}{j} = \{i, i+1, \dots , j\}$.
Consider a set $A \subseteq \integers{1}{m}, |A| \leq r$ such that $\Eval{m}{x_A} = \bm g$. Define
\begin{equation}\label{rowaddform}
    x_A' = x_A\prod_{i\in S\setminus A}(1 + x_i) = \sum_{B, A\subseteq B\subseteq S}x_B
\end{equation}
where the set $S$ is chosen such that $A \subseteq S\subseteq \integers{1}{m}, |S| = r$. By Lemma \ref{lem:evalwt} (iii) in Appendix \ref{app:RMprop}, $\wt{\Eval{m}{x_A'}} = \Eval{m}{x_S} = 2^{m-r}$. Hence, by adding rows consisting of $\Eval{m}{x_B}, A\subset B\subseteq S$ to $\bm g$, we reduce the number of non-zero entries in $\bm g$ to $2^{m-r}$.

We have the freedom to choose the sets $S$ for every $A$. We choose sets $S$ that minimizes the sum $\sum_{s\in S} s$. Denote this set $S_r(A)$, defined as
\begin{equation}
    S_r(A) = \argmin_{\substack{S\\A\subseteq S\subseteq \integers{1}{m}\\|S|=r}} \sum_{s \in S}s
\end{equation}
For the generator matrix $G(r, m) = \left[\bm g_1^T, \dots, \bm g_k^T\right]^T$, the row transform $R$ that reduces the row weights to $2^{m-r}$ is then given by
\begin{equation}
    [R]_{i, j} = \left\{\begin{array}{cc}
        1 & B \subseteq S_r(A) \\
        0 & \text{otherwise}
    \end{array} \right.
\end{equation}
for sets $A, B \subseteq \integers{1}{m}$ defined such that $\Eval{m}{x_A} = \bm g_i, \Eval{m}{x_B} = \bm g_j$.

Recall that $R_1$ acts on $G_1 \equiv G(r, m)\setminus G(m-r-1, m)$ and $R_2$ acts on the rows $G_2 \equiv G(m-r-1, m)$. Since the rows of $G_1$ have generators of the form $\Eval{m}{x_A}$ for $|A|\leq r$, the evaluation vector of a linear combination of the form in Eq. \eqref{rowaddform} can reduce the row weight to $2^{m-r}$. Likewise, rows of $G_2$ have generators of the form $\Eval{m}{x_A}$ for $|A| \leq m-r-1$ and can be reduced to $2^{r+1}$.

Summarizing, the row transforms are given by
\begin{equation}
    [R_1]_{i, j} = \left\{\begin{array}{cc}
        1 & B \subseteq S_r(A) \\
        0 & \text{otherwise}
    \end{array} \right. ~~ \text{where } A, B : \Eval{m}{x_A} = \bm g_i^{(1)}, \Eval{m}{x_B} = \bm g_j^{(1)}
\end{equation}

\begin{equation}
    [R_2]_{k, l} = \left\{\begin{array}{cc}
        1 & D \subseteq S_{m-r-1}(C) \\
        0 & \text{otherwise}
    \end{array} \right. ~~\text{where } C, D : \Eval{m}{x_C} = \bm g_k^{(2)}, \Eval{m}{x_D} = \bm g_l^{(2)}
\end{equation}
where rows of $G_1$ are indexed as $\bm g_i^{(1)}$ and rows of $G_2$ are indexed as $\bm g_i^{(2)}$.

\textit{CNOT gate counts}:
$R_1G_1$ and $R_2G_2$ have $\sum_{i = m-r}^{r}{m\choose i}$ and $\sum_{i = 0}^{m-r-1}{m \choose i}$ rows respectively. For $m-r-1<r<m$, each row in $R_1G_1$ and $R_2G_2$ have $2^{m-r}$ and $2^{r+1}$ non zero entries respectively, and when $r = m-r-1$, the generator matrix $G_1 = \emptyset$. The non-zero entries in $R_1 G$ and $R_2 G$ are optimal since the codes defined by $G_1$ and $G_2$ have code distances of $2^{m-r}$ and $2^{r+1}$ respectively. The number of CNOT gates $\zeta^{(\mathrm{rred})}(r, m)$ required is then
\begin{align}
    \zeta^{(\mathrm{rred})}(r, m) = \begin{cases}\sum_{i = m-r}^{r}{m\choose i}(2^{m-r}-1) + \sum_{i = 0}^{m-r-1}{m \choose i}(2^{r+1} - 1) & \text{if } m>r>m-r-1\\
    \sum_{i = 0}^{m-r-1}{m \choose i}(2^{r+1} - 1) & \text{if } r = m -r-1\\
    0 & \text{if } r = m
    \end{cases}
\end{align}

\section{Recursive construction of Quantum Reed Muller codewords}\label{sec:decomp}
Using the Plotkin constructions, we show that codewords of QRM codes can be recursively written as a superposition of tensor products of shorter length QRM codewords on subsets of qubits. Using this observation, we then construct recursively constructed encoders for QRM codewords. Similar observation of converting a smaller code to a larger code and vice versa is discussed in Ref. \cite{anderson_fault-tolerant_2014}. Here we do not convert codewords and we consider the larger codeword in its entirety.

\begin{theorem}[Recursive construction of QRM codeword]\label{thm:decomp2}
    For $\lceil \frac{m-1}{2}\rceil \leq r < m$, the codewords of $\hbox{QRM}(r, m)$ on $2^m$ qubits provided in Eq. \eqref{def:QRM} can be written as a superposition of tensor product of codewords of $\hbox{QRM}(r, m-1)$ on the first $2^{m-1}$ qubits and the last $2^{m-1}$ qubits.
    
    For $\bm w = (\bm w_1, \bm w_2) \in \RM(r, m)$ with $\bm w_1,\bm w_2 \in \RM(r, m-1)$,
    \begin{align}
        \ket{\bm{w}}_{r, m} = N &\sum_{ \bm{u} \in B} \ket{\bm{w_1} \oplus \bm{u}}_{r, m-1}
        \ket{\bm{w_2} \oplus \bm{u}}_{r, m-1} \label{decomp}
    \end{align}
    where $B = \RM(m-r-1, m-1)/\RM(m-r-2, m-1)$ is a quotient set and $N = \frac{1}{\sqrt{|B|}}$ is the normalization factor.
\end{theorem}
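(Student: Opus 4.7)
The plan is to start from the CSS definition \eqref{def:QRM} of $\ket{\bm w}_{r,m}$ and decompose its dual-code sum using the Plotkin $(u,u+v)$ construction applied to $\RM(r,m)^\perp = \RM(m-r-1, m)$. Every $\bm c$ in that dual code is uniquely of the form $\bm c = (\bm u', \bm u' \oplus \bm v')$ with $\bm u' \in \RM(m-r-1, m-1)$ and $\bm v' \in \RM(m-r-2, m-1)$; combined with $\bm w = (\bm w_1, \bm w_2)$ this rewrites $\ket{\bm w \oplus \bm c}$ as $\ket{\bm w_1 \oplus \bm u'}\ket{\bm w_2 \oplus \bm u' \oplus \bm v'}$, turning the whole state into a double sum over $(\bm u', \bm v')$.

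Next I would separate $\bm u'$ into a coset representative plus a lower-degree correction using the nested chain $\RM(m-r-2, m-1) \subset \RM(m-r-1, m-1)$ from \eqref{prop:RMsubgroup}: writing $\bm u' = \bm u \oplus \bm a$ with $\bm u \in B$ and $\bm a \in \RM(m-r-2, m-1)$ turns the state into a triple sum over $(\bm u, \bm a, \bm v')$. The key step is a linear change of variables $(\bm a, \bm v') \mapsto (\bm a, \bm b)$ with $\bm b := \bm a \oplus \bm v'$, which is a bijection that lets $\bm a$ and $\bm b$ range independently over $\RM(m-r-2, m-1)$. Under this change the two halves of the codeword decouple and the sum factors into a product of two inner sums on the first and last $2^{m-1}$ qubits; each inner sum is, by \eqref{def:QRM}, precisely the unnormalized $\QRM(r, m-1)$ codeword $\ket{\bm w_i \oplus \bm u}_{r, m-1}$.

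The final step is the normalization check. Combining the original prefactor $1/\sqrt{|\RM(m-r-1, m)|}$ with the two factors of $\sqrt{|\RM(m-r-2, m-1)|}$ extracted when rewriting each inner sum as a $\QRM(r, m-1)$ codeword, and invoking the Plotkin cardinality identity $|\RM(m-r-1, m)| = |\RM(m-r-1, m-1)|\cdot |\RM(m-r-2, m-1)|$ together with $|B| = |\RM(m-r-1, m-1)|/|\RM(m-r-2, m-1)|$, the overall constant collapses to $1/\sqrt{|B|} = N$, matching the claim.

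I do not expect a major conceptual obstacle: the argument is essentially bookkeeping on Plotkin decompositions together with a single linear change of variables. The care needed is to verify that the hypothesis $\lceil (m-1)/2 \rceil \leq r < m$ guarantees that $\QRM(r, m-1)$ is well defined (the CSS dual-containment $r \geq (m-2)/2$ holds automatically) and that $\RM(m-r-2, m-1)$ is interpreted correctly at the boundary $r = m-1$ using $\RM(-1, \cdot) = \{\bm 0\}$. The one place where the calculation could go wrong if handled sloppily is the reindexing $\bm b = \bm a \oplus \bm v'$; making the bijection and the subsequent factorization across the two halves of qubits explicit is what drives the clean appearance of the quotient set $B$ in the final expression.
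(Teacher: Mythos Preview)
Your proposal is correct and follows essentially the same route as the paper: Plotkin-decompose $\RM(r,m)^\perp$, then split $\RM(m-r-1,m-1)$ into cosets of $\RM(m-r-2,m-1)$ to expose the quotient set $B$. The only cosmetic difference is that the paper first collapses the $\bm v'$-sum into the second-half codeword and then invokes the CSS $X$-invariance \eqref{CSS:Xinv} to drop the residual $\RM(m-r-2,m-1)$ shift, whereas your bijective change of variables $\bm b=\bm a\oplus\bm v'$ achieves the same decoupling in one symmetric step.
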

\begin{proof}
    In the QRM codewords in Eq. \eqref{def:QRM}, we rewrite the codewords $\bm c$ and $\bm w$ in the summation with the Plotkin $(u, u+v)$ construction as 
    \begin{align}
    N_1\sum_{\mathclap{\bm c \in \RM(r, m)^\perp}}\ket{\bm w \oplus \bm c}
    &= N_1\sum_{\bm \alpha \in C}\sum_{\bm \beta \in D}\ket{\bm w_1\oplus\bm \alpha}\ket{\bm w_2\oplus\bm \alpha \oplus \bm \beta}\label{decompstep0}\\
    &= N_1\sum_{\bm \alpha \in C}\left(\ket{\bm w_1\oplus\bm \alpha}\sum_{\bm \beta \in D}\ket{\bm w_2\oplus\bm \alpha \oplus \bm \beta}\right)\label{decompstep1}\\
    &= N_2\sum_{\bm \alpha \in C}\ket{\bm w_1\oplus\bm \alpha}\ket{\bm w_2\oplus\bm \alpha}_{r, m-1}\label{decompstep2}
    \end{align}
    where the summations are over $\bm c =(\bm \alpha, \bm \alpha \oplus \bm \beta)\in \RM(m-r-1, m), \bm \alpha \in C := \RM(m-r-1, m-1), \bm \beta \in D:= \RM(m-r-2, m-1)$ and $\bm w_1, \bm w_2 \in \RM(r, m-1)$ ($\bm w \equiv (\bm w_1, \bm w_1 \oplus \bm w_2') \equiv (\bm w_1, \bm w_2)$ for $\bm w_2' \in \RM(r-1, m-1)$ where we rewrite $\bm w_1 \oplus \bm w_2' = \bm w_2$ for simplicity). $N_1 = 1\Big/\sqrt{|\RM(r, m)^\perp|}$ and $N_2 = 1\Big/\sqrt{|\RM(r-1, m-1)^\perp|}$ are the normalization constants.

    By Property \eqref{prop:RMsubgroup}, we have $\RM(m-r-2, m-1) \subset \RM(m-r-1, m-1)$ by which we can express $\alpha \in \RM(m-r-1, m-1)$ in Eq. \eqref{decompstep2} as $\bm \alpha = \bm u \oplus \bm \beta$ for $\bm u \in B := \RM(m-r-1, m-1)/\RM(m-r-2, m-1)$ and $\bm \beta \in \RM(m-r-2, m-1)$. Equation \eqref{decompstep2} is rewritten as
    \begin{align}
        N_1\sum_{\bm c}\ket{\bm w \oplus \bm c} 
        &= N_2\sum_{\bm u\in B}\sum_{\mathrlap{\!\bm \beta \in \RM(r, m-1)^\perp}}\ket{\bm w_1\oplus\bm u \oplus \bm \beta}\ket{\bm w_2\oplus\bm u \oplus \bm \beta}_{r, m-1}\label{decomp:step:1}\\
        &= N_2\sum_{\bm u\in B}\sum_{\mathrlap{\!\bm \beta \in \RM(r, m-1)^\perp}}\ket{\bm w_1\oplus\bm u \oplus \bm \beta}\ket{\bm w_2\oplus\bm u}_{r, m-1}\label{decomp:step:2}\\
        &= N\sum_{\bm u\in B}\ket{\bm w_1\oplus\bm u}_{r, m-1}\ket{\bm w_2\oplus\bm u}_{r, m-1}\label{decomp:step:3}
    \end{align}
    where $N = 1\Big/ \sqrt{B} = N_2^2/N_1$. Equation \eqref{decomp:step:2} is obtained by Properties \eqref{CSS:addx} and \eqref{CSS:Xinv}, and Eq. \eqref{decomp:step:3} is obtained using the definition of the QRM codewords. The bound $\lceil (m-1)/2\rceil \leq r$ is required to ensure the codeword $\ket{\bm w}_{r, m}$ is a valid CSS codeword. For $r = m$, $\ket{\bm w}_{m, m} = \ket{\bm w}$ is a computational basis state with $\bm w\in \RM(m, m)$. While it can be written as a tensor product of states of $\QRM(m-1, m-1)$, it is no longer a superposition of computational basis states and we restrict the statement of the proof to $r < m$.
\end{proof}

By similar arguments, we can show a recursive structure for punctured and zero rate QRM codewords. See Theorems \ref{thm:decompp}, \ref{thm:decompz}, \ref{thm:decomppz} in Appendix \ref{app:decomposition}.

\textit{Generalizing Theorem \ref{thm:decomp2} with Plotkin-$i$ construction:}Suppose the $2^m$ qubits are labelled as $q_1, q_2, \dots, q_{2^m}$, the Plotkin$-1$ (standard) construction induces the partition consisting of qubit subsets $\{q_1, \cdots, q_{2^{m-1}}\},\{q_{2^{m-1}+1}, \cdots, q_{2^{m}}\}$. Theorem \ref{thm:decomp2} showed that the states over these subsets of qubits are ensembles of codewords of $\QRM(r, m-1)$. Similarly, Plotkin-$i$ construction induces the qubit subsets 
\begin{align}
Q^{(m)}(i; k) =& \{q_j|\Eval{m}{x_i}_j = k\},\label{qubitsubsets}
\end{align}
and Theorem \ref{thm:decomp2} holds true for the qubit subsets $\{Q^{(m)}(i; 0), Q^{(m)}(i; 1)\}$, $\forall i \in \integers{1}{m}$.

\begin{corollary}\label{cor:qubitsetdecomp}
    For $\lceil \frac{m-1}{2}\rceil < r < m-1$, the codewords of $\QRM(r, m)$ on $2^m$ qubits provided in \eqref{def:QRM} can be written as a superposition of tensor product of codewords of $\QRM(r, m-1)$ on the qubit sets given by $\left\{Q^{(m)}(i;0),  Q^{(m)}(i;1)\right\} \forall i \in \integers{1}{m}$.
\end{corollary}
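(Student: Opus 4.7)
The plan is to reduce the corollary to Theorem \ref{thm:decomp2} by leveraging the automorphism group $\mathbb{GA}(m)$ of the classical $\RM(r,m)$ code. For each $i \in \integers{1}{m}$, let $\tau_i$ be the affine bijection on evaluation points $\bm \rho \in \mathbb F_2^m$ that swaps the coordinates $\rho_1$ and $\rho_i$. Since $\tau_i \in \mathbb{GA}(m)$ (it is the permutation matrix $A$ exchanging rows $1$ and $i$, with $\bm b = \bm 0$), it induces a permutation $p_i$ on the bit positions (and thus qubit labels) in $\integers{1}{2^m}$ satisfying $p_i\bigl(B^{(m)}(1;k)\bigr) = B^{(m)}(i;k)$, equivalently $p_i\bigl(Q^{(m)}(1;k)\bigr) = Q^{(m)}(i;k)$, for $k \in \{0,1\}$.

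Next, I would verify that the qubit permutation $P(p_i)$ maps $\QRM(r,m)$ codewords to $\QRM(r,m)$ codewords. Because $p_i$ lies in the automorphism group of $\RM(r,m)$, the right-action $\bm v \mapsto \bm v P(p_i)$ preserves both $\RM(r,m)$ and its dual $\RM(m-r-1,m)$. Consequently, using the definition in Eq. \eqref{def:QRM} and the fact that summing $\bm c$ over $\RM(r,m)^\perp$ is equivalent to summing $\bm c P(p_i)$ over the same set,
\begin{equation}
    P(p_i)\ket{\bm w}_{r,m} = \ket{\bm w P(p_i)}_{r,m}.
\end{equation}
Applying Theorem \ref{thm:decomp2} to the codeword $\ket{\bm w P(p_i)}_{r,m}$ yields a decomposition into a superposition of tensor products of $\QRM(r,m-1)$ codewords on the qubit subsets $\{Q^{(m)}(1;0), Q^{(m)}(1;1)\}$. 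Acting on both sides by $P(p_i^{-1})$ then transports each tensor factor to the corresponding subset $Q^{(m)}(i;k)$, while the left-hand side reduces to $\ket{\bm w}_{r,m}$.

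The main obstacle is the bookkeeping inside the tensor factors: confirming that the state on $Q^{(m)}(i;k)$ obtained by conjugating a $\QRM(r,m-1)$ codeword on $Q^{(m)}(1;k)$ by $P(p_i^{-1})$ is itself a legitimate $\QRM(r,m-1)$ codeword under the natural ordering inherited on $Q^{(m)}(i;k)$. This reduces to observing that restricting $\tau_i$ to the hyperplane $\{\rho_i = k\}$ gives an affine bijection on the remaining $m-1$ coordinates, so the evaluation-vector construction of $\RM(r,m-1)$ and its dual is preserved up to a consistent relabeling of the variables $\{x_j : j \neq i\}$. With this verification, the decomposition of Theorem \ref{thm:decomp2} transfers cleanly to the Plotkin-$i$ partition for every $i \in \integers{1}{m}$, completing the proof.
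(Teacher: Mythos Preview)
Your approach is correct and mirrors the paper's reasoning: the paper introduces the Plotkin-$i$ construction as arising from the automorphism group $\mathbb{GA}(m)$ acting via the permutation matrix swapping $x_1$ and $x_i$, and states the corollary as an immediate consequence of applying Theorem~\ref{thm:decomp2} after this permutation. Your proposal spells out the same argument in more detail, including the verification that the restricted permutation on each hyperplane $\{\rho_i = k\}$ is again an affine bijection preserving $\RM(r,m-1)$ and its dual, which the paper leaves implicit.
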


Additionally, since the states over the qubit subsets $Q^{(m)}(1; k)$ for $k=0, 1$ is an ensemble of codewords of $\QRM(r, m-1)$, we can successively apply Corollary \ref{cor:qubitsetdecomp} to each subsystem.
\begin{corollary}\label{cor:decomps}
    For $l \leq m - r$ and $r > \lceil\frac{m-1}{2}\rceil$, the codewords of $\QRM(r, m)$ provided in Eq. \eqref{def:QRM} can be written as a superposition of tensor product of $2^l$ codewords of $\QRM(r, m-l)$.
\end{corollary}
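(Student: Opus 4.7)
The plan is to establish Corollary \ref{cor:decomps} by induction on $l$, iteratively applying the one-level decomposition of Corollary \ref{cor:qubitsetdecomp}. The strategy is already flagged in the paragraph preceding the statement: once one knows that the state on each half of the qubits is itself an ensemble of shorter-length QRM codewords, the same splitting can be re-applied to each half.

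\textbf{Base case ($l=1$):} Apply Corollary \ref{cor:qubitsetdecomp} directly (for concreteness, with $i=1$, which recovers Theorem \ref{thm:decomp2}). This expresses $\ket{\bm w}_{r,m}$ as a superposition of tensor products of two $\QRM(r,m-1)$ codewords supported on the disjoint qubit subsets $Q^{(m)}(1;0)$ and $Q^{(m)}(1;1)$.

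\textbf{Inductive step:} Assume the claim holds for some $l$ with $1 \leq l < m-r$, so that
\begin{equation*}
    \ket{\bm w}_{r,m} \;=\; \sum_{\bm s \in \mathcal{S}_l}\; \bigotimes_{j=1}^{2^l} \ket{\bm w_j(\bm s)}_{r,\,m-l},
\end{equation*}
for some index set $\mathcal{S}_l$ and codeword labels $\bm w_j(\bm s) \in \RM(r, m-l)$, with each factor supported on a distinct block of $2^{m-l}$ qubits. Since $r > \lceil (m-1)/2 \rceil \geq \lceil (m-l-1)/2 \rceil$ and $l < m-r$, the parameter conditions needed to invoke Corollary \ref{cor:qubitsetdecomp} on a $\QRM(r, m-l)$ codeword are met. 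Applying that corollary separately to every factor $\ket{\bm w_j(\bm s)}_{r, m-l}$, along a chosen Plotkin-$i$ partition of its $2^{m-l}$ qubits into two halves of $2^{m-l-1}$ qubits each, rewrites that factor as a sum of two-fold tensor products of $\QRM(r, m-l-1)$ codewords. Distributing these inner sums through the outer tensor product in the standard way, and packaging all new summation variables into an enlarged index set $\mathcal{S}_{l+1}$, produces the desired decomposition of $\ket{\bm w}_{r,m}$ into a superposition of tensor products of $2^{l+1}$ codewords of $\QRM(r, m-l-1)$, completing the inductive step.

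\textbf{What is easy and what to watch:} The entire conceptual content already sits in Theorem \ref{thm:decomp2}; once the one-step decomposition is in hand, the iteration is essentially bookkeeping that leverages bilinearity of the tensor product and the fact that the quotient-set summation variable introduced at each level is independent of the variables at all previous levels. The only point requiring genuine care is the growth of the index set: at level $l$ the summation runs over $l$-tuples of coset representatives drawn from $\RM(m-r-1, m-j)/\RM(m-r-2, m-j)$ for $j = 0, 1, \ldots, l-1$, and the $2^l$ codeword labels $\bm w_j(\bm s)$ are obtained from the original $\bm w \in \RM(r, m)$ by iteratively applying the Plotkin splitting. Verifying that each intermediate code $\QRM(r, m-j)$ satisfies the CSS dual-containment criterion \eqref{CSS:condition}, i.e.\ $\lceil (m-j-1)/2 \rceil \leq r$ for all $j \leq l$, is immediate from the hypothesis $r > \lceil (m-1)/2 \rceil$, and the bound $l \leq m-r$ guarantees that the recursion never produces an ill-defined code $\QRM(r, m')$ with $r > m'$.
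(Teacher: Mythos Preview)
Your proposal is correct and follows essentially the same approach as the paper: the paper simply notes, immediately before stating the corollary, that one may ``successively apply Corollary~\ref{cor:qubitsetdecomp} to each subsystem,'' which is exactly your induction on $l$. One minor bookkeeping remark: the index set that accumulates is not an $l$-tuple but a tree of $2^l-1$ independent coset variables (one per factor at each level, as visible in Example~\ref{ex:4ghz}); this does not affect your argument, since the corollary only asserts the existence of such a superposition.
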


To describe the qubit subsets over which the states remain codewords by Corollary \ref{cor:decomps}, we extend the qubit set defined in Eq. \eqref{qubitsubsets} as
\begin{equation}\label{def:qubitpartitionsets}
    Q^{(m)}(\bm s; \bm t) = \bigcap_{i=1}^l Q^{(m)}(s_i; t_i)
\end{equation}
For instance, consider  $Q^{(4)}\left((1, 2); (1, 0)\right)$. We have qubit sets 
\begin{align}
    Q^{(4)}\left(1; 1\right) =& \{q_{9}, q_{10}, q_{11}, q_{12}, q_{13}, q_{14}, q_{15}, q_{16}\},\nonumber\\
    Q^{(4)}\left(2; 0\right) =& \{q_{1}, q_{2}, q_{3}, q_{4}, q_{9}, q_{10}, q_{11}, q_{12}\},\nonumber
\end{align}
as $\Eval{4}{x_1} = [0000000011111111], \Eval{4}{x_2} = [0000111100001111]$. Then
$$Q^{(4)}\left((1, 2); (1, 0)\right) = Q^{(4)}\left(1; 1\right) \cap Q^{(4)}\left(2; 0\right) = \{q_{9}, q_{10}, q_{11}, q_{12}\}.$$
Consider the state $\ket{0}_{2, 4}\in \QRM(2, 4)$ in Eq. \eqref{QRM24_decomp}. By Corollary \ref{cor:decomps}, the state is a superposition of tensor product of $4$ codewords of $\QRM(2, 2)$. The state over qubits $Q^{(4)}\left((1, 2); (1, 0)\right)$ on tracing out the remaining qubits is an ensemble consisting of states of $\QRM(2, 2)$.

\subsection{Recursive encoders for Quantum Reed-Muller codes}
Using Theorem \ref{thm:decomp2}, we present encoders for $\QRM(r, m)$ constructed recursively with encoders of $\QRM(r, m-1)$. 
On applying $U(r, m-1)^\dagger \otimes U(r, m-1)^\dagger$ on the state in Eq. \eqref{decomp}, we obtain
\begin{equation*}
    N\sum_{\bm u \in B}\ket{\bm m_{\bm w_1 \oplus \bm u}, \bm 0}\ket{\bm m_{\bm w_2 \oplus \bm u}, \bm 0} \equiv N\sum_{\bm u \in B} \ket{\bm m_{\bm w_1} \oplus \bm m_{\bm u}, \bm 0}\ket{\bm m_{\bm w_1} \oplus \bm m_{\bm w_2'} \oplus \bm m_{\bm u}, \bm 0}
\end{equation*}
where we have used the linearity of the code, and $\bm w_2 = \bm w_1 \oplus \bm w_2'$ for $\bm w_2' \in \RM(r-1, m-1)$. Applying CNOT gates from qubits with state $\ket{\bm m_{\bm w_2} \oplus \bm m_{\bm u}}$ to qubits with state $\ket{\bm m_{\bm w_1} \oplus \bm m_{\bm w_2'} \oplus \bm m_{\bm u}}$, we reduce the state to $$N\sum_{\bm u \in B} \ket{\bm m_{\bm w_1} \oplus \bm m_{\bm u}, \bm 0}\ket{\bm m_{\bm w_2'}, \bm 0} \equiv N\sum_{\bm u \in B}P\ket{\bm m_{\bm w_1},\bm m_{\bm w_2'}, \bm m_{\bm u}, \bm 0}$$
where since $\bm w_1 \in \RM(r, m-1)/\RM(m-r-1, m-1)$ and $\bm u\in \RM(m-r-1, m-1)/\RM(m-r-2, m-1)$ span disjoint spaces, we can write $\bm m_{\bm w_1} \oplus \bm m_{\bm u}$ as some permutation $P$ of $(\bm m_{\bm w_1},\bm m_{\bm u})$. We use Hadamard gates to create $N \sum_{\bm u \in B}\ket{\bm m_{\bm u}}$ from $\ket{\bm 0}$. Inverting the permutation $P$ and Hadamard gates ($H^\dagger = H$), we obtain the initial message state $\ket{\bm m_{\bm w_1}, \bm m_{\bm w_2'}, \bm 0}$. By reversing this process, we recursively construct $U(r, m)$ from CNOTs, qubit permutations, and $U(r, m-1)$. The overarching idea behind these recursive encoders is to prepare the message states $$\ket{\bm m_{\bm w_1 \oplus \bm u}, \bm 0}\ket{\bm m_{\bm w_2 \oplus \bm u}, \bm 0}$$ by inverting the steps described above, and in subsequent iterations, we encode with $U(r, m-1)\otimes U(r, m-1)$. 

In order to determine the permutations and qubits to apply CNOT gates on, we create qubit dictionary maps that denote the qubit index positions the encoder expects message qubits. CNOT gates are then used to reversibly copy the input message qubits to the target locations as specified by the maps of the smaller encoders, which are then encoded individually in the recursive iterations. For instance, consider $G = [\bm g_1^T, \dots, \bm g_k^T]^T$. We define $\texttt{row\_index\_map}(G)$ of G such that
\begin{equation}\label{def:rowindexmap}
    \rim(G)[\bm g_i] = i \forall g_i \in G
\end{equation}
where $i$ is the row index of $\bm g_i$. Let $M = \texttt{row\_index\_map}(G)$. Then, notice that $\bm m G = \sum_{\bm g \in G} m_{M[\bm g]}\bm g$. We say the message bit at position ${M[\bm g]}$ ($m_{M[\bm g]}$) \textit{encodes} the generator $\bm g$, i.e., the generator $\bm g$ is added to $\bm w$ conditional on the value of $m_{M[\bm g]}$. 

Now, suppose we wish to encode the state $\ket{\bm c}$ for $\bm c = m_u(\bm u, \bm u) \in G, \bm c \in \mathbb F_2^{2n}$ where $m_u \in \mathbb F_2$ is the bit that encodes the generator $(\bm u, \bm u)$ for $\bm u \in G_1$. Assume we begin with the state $\ket{0 \dots 0 m_u 0\dots 0}$ with $\ket{m_u}$ at qubit $q_{M[\bm c]}$. We are provided with the encoder $U_{G_1}$ and index map $M_1 = \texttt{row\_index\_map}(G_1)$ for the smaller code $G_1$. First, we apply a qubit permutation $P(p)$ such that $p(M(\bm c)) = M_1(\bm u)$. Then, we apply $\CNOT{M_1(\bm u)}{n + M_1(\bm u)}$ to obtain $\ket{\dots 0 m_u 0\dots}\ket{\dots 0 m_u 0\dots}$. This prepares the message state for encoding each subsystem of $n$ qubits with $U_{G_1}$. We now apply $U_{G_1}^{\otimes 2}$ to obtain the state $\ket{m_u\bm u}\ket{m_u\bm u} \equiv \ket{\bm c}$.
\begin{figure}
    \centering
    \begin{subfigure}{\textwidth}
        \centering
        \begin{mdframed}
        \centering{\noindent\textbf{\texttt{RecursiveQRM(r, m)}}, for $\lceil(m-1)/2\rceil\leq r\leq m$\\
        Returns: Encoder $U(r, m)$, qubit index map $M$.}\\
        \begin{flushleft}
            For $r = m$: Return \texttt{RecursiveBasisQRM(m, m)}\\
            For $\lceil(m-1)/2\rceil\leq r<m$:
        \end{flushleft}
        \begin{description}[itemsep=0.5pt]
            \item[1.] Obtain $U(r, m-1)$, $M_1$ from \texttt{RecursiveQRM(r, m-1)}.
            \item[2.] Initialize circuit over $2^m$ qubits. Create $M = \rim(G)$ where
            $$G = \left[\begin{array}{c} G_1 \\ \hline G_2 \vert G_2\end{array}\right], ~~~\text{for } \begin{array}{c} G_1 = G(r, m)\setminus G(m-r-1, m) \\ G_2 = G(m-r-1, m-1)\setminus G(m-r-2, m-1)\end{array}$$
            \item[3.] Add gates $H(K), K = (k+1, \dots, k + {m-1\choose m-r-1})$ for $k = \sum_{i=m-r}^r{m\choose i}$.
            \item[4.] Add qubit permutation $P(p)$ such that $$p(M[(\bm u, \bm u)]) = M_1[\bm u] ~~\forall \bm u \in G_3,$$ $$p(M[(\bm 0, \bm v)]) = 2^{m-1} + M_1[\bm v] ~~\forall \bm v\in G_4$$
            where $G_3 = G(r, m-1)\setminus G(m-r-2, m-1)$ and $G_4 = G(r-1, m-1)\setminus G(m-r-2, m-1)$.
            \item[5.] Add $\CNOT{i}{j}$ for $i = M_1[\bm u]$, $j = 2^{m-1} + i$ $\forall \bm u \in G_3$.
            \item[6.] Add $U(r, m-1)^{\otimes 2}$ to the circuit.
            \item[7.] Return circuit, $M$.
        \end{description}
        \end{mdframed}
        \caption{}
        \label{alg:rec_qrm}
    \end{subfigure}

    \begin{subfigure}{\textwidth}
        \begin{mdframed}
            \centering{\textbf{\texttt{RecursiveBasisQRM(r, m)}}\\
            Returns Encoder $U^c(r, m)$, qubit index map $M$.}\\
        \begin{minipage}[t]{0.45\textwidth}
            \centering{\underline{Case 1:} For $0\leq r < m$:}
            \begin{description}
                \item[1.] Obtain $U^c(r, m-1)$, $M_1$ from $\texttt{RecursiveBasisQRM(r, m-1)}$.
                \item[2.] Initialize circuit over $2^m$ qubits. Create $M = \rim(G(r, m))$.
                \item[3.] Add qubit permutation $P(p)$ such that 
                \begin{align}
                    p\left(M[(\bm u, \bm u)]\right) &= M_1[\bm u]\nonumber\\
                    p\left(M[(\bm 0, \bm v)]\right) &= 2^{m-1} + M_1[\bm v]\nonumber
                \end{align}
                $$ \forall \bm u \in G(r, m-1), \bm v\in G(r-1, m-1).$$        
                \item[4.] Add $\CNOT{i}{j}$ for $i = M_1[\bm u]$, $j = 2^{m-1} + i \forall \bm u \in G(r, m-1)$.
                \item[5.] Add $U^c(r, m-1)^{\otimes 2}$.
                \item[6.] Return circuit, $M$.
            \end{description}
        \end{minipage}
        \hfill
        \begin{minipage}[t]{0.5\textwidth}
            \centering{\underline{Case 2:} For $r = m $:}
            \begin{description}
                \item[1.] Obtain $U^c(m-1, m-1)$, $M_1$ from $\texttt{RecursiveBasisQRM(m-1, m-1)}$.
                \item[2.] Initialize circuit over $2^m$ qubits. Create $M = \rim(G(m, m))$.
                \item[3.] Add qubit permutation $P(p)$ such that 
                \begin{align}
                    p(M[(\bm u, \bm u)]) &= M_1[\bm u]\nonumber\\
                    p(M[(\bm 0, \bm v)]) &= 2^{m-1} + M_1[\bm v]\nonumber
                \end{align}
                $$\forall \bm u \in G(m-1, m-1), \bm v\in G(m-1, m-1)$$       
                \item[4.] Add $\CNOT{i}{j}$ for $i = M_1[\bm u]$, $j = 2^{m-1} + i \forall \bm u \in G(m-1, m-1)$.
                \item[5.] Add $U^c(m-1, m-1)^{\otimes 2}$.
                \item[6.] Return circuit, $M$.
            \end{description}
        \end{minipage}
        \end{mdframed}
        \caption{}
        \label{alg:rec_qrm_c}
    \end{subfigure}
    \caption{(a) Recursive construction for $U(r, m)$ and (b) Recursive construction for $U^c(r, m)$.}
    \label{alg:rec_qrm_all}
\end{figure}

$\texttt{RecursiveQRM(r, m)}$ described in Figure \ref{alg:rec_qrm} recursively constructs the encoder $U(r, m)$. Step $1$ is the recursive call. Step $2$ initializes a qubit index map $M$ that provides information about message qubit locations, which is determined by the rows of $G$. The rows of $G_1, G_2$ are the generators of $C_2 / C_1^\perp$ and $C_1^\perp$ as defined in the CSS construction of QRM code in Eq. \eqref{def:QRM}. Note that for $r = (m-1)/2$, $G_1 = \emptyset$. Step $3$ adds Hadamard gates on ${m-1 \choose m-r-1}$ ancilla qubits, which are then used as message qubits for encoding $\bm u$ in Eq. \eqref{decomp}. Steps $4$ and $5$ prepare the message states for the subsequent recursive iteration by permutation and CNOT gates. Step $6$ completes the construction by adding encoders over the two halves.%

\textit{CNOT Gate count}: Since in Step $5$ we add $\sum_{i =m-r-1}^r {m-1 \choose i}$ CNOT gates and in Step $6$, we add encoders $U(r, m-1)$, the number of CNOT gates $\mathcal \zeta({r, m})$ in the construction of $U(r, m)$ is given by
\begin{equation}
    \mathcal \zeta({r, m}) =
        \sum_{i=m-r-1}^r {m-1 \choose i} + 2\mathcal \zeta(r, m-1)
\end{equation}
for $r \geq m - r - 1$, where we denote the CNOT gates for $U(r, m)$ by $\zeta(r, m-1)$.
Circuits in Figure \ref{fig:RecQRM} illustrates this construction.

\begin{figure}
    \centering
    \begin{subfigure}[b]{0.5\textwidth}
        \centering
        \includegraphics[width=\textwidth]{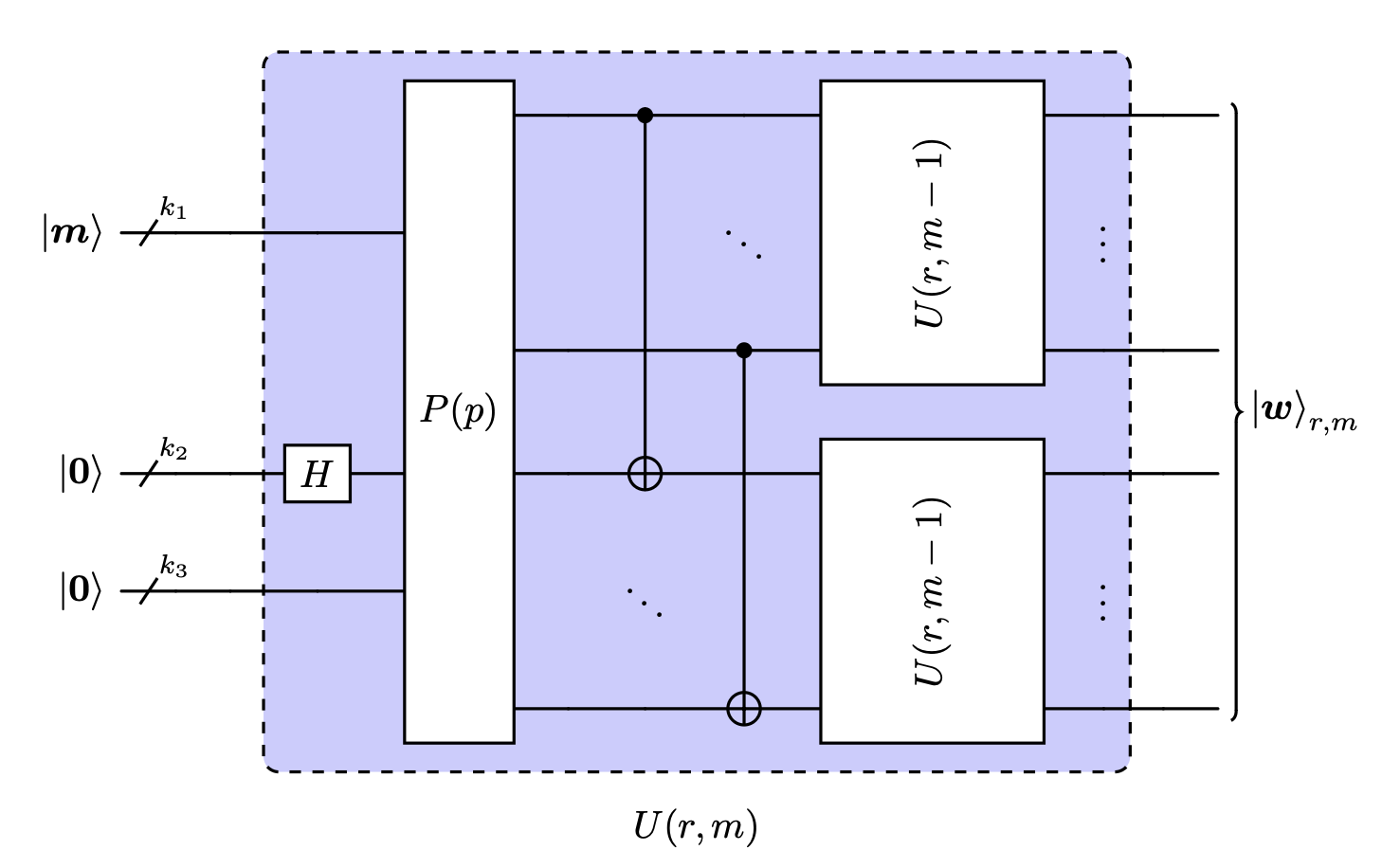}
        \caption{}
        \label{fig:RecQRM}
    \end{subfigure}
    \hfill
    \begin{subfigure}[b]{0.45\textwidth}
        \centering
        \includegraphics[width=\textwidth]{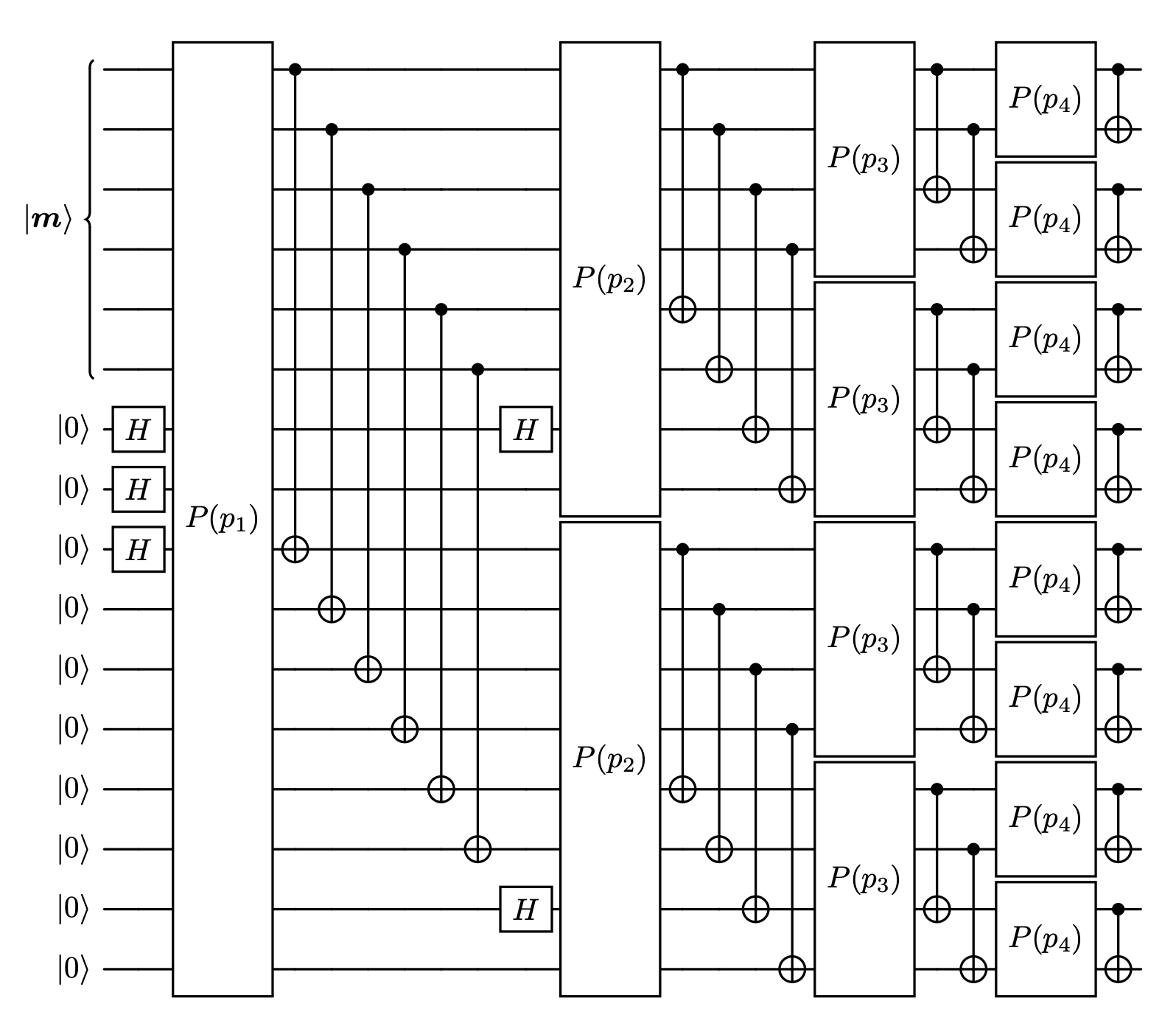}
        \caption{}
        \label{fig:u24}
    \end{subfigure}
    \caption{(a) Recursive encoder for $\QRM(r, m)$, with $k_1 = \sum_{i=m-r}^r {m \choose i}, k_2 = {m-1 \choose m-r-1}$, and $k_3 = 2^m - k_1 - k_2$, (b) Construction of $U(2, 4)$ with permutation $p_1, p_2, p_3, p_4$ as defined in \texttt{RecursiveQRM(r, m)}.}
    \label{fig:RecursiveQRM}
\end{figure}

Note that after $(m-r)$ successive recursive iterations of $\texttt{RecursiveQRM(r, m)}$, the state to be encoded is of the code $\QRM(r, r)$. These states are computational basis states of the form $\ket{\bm m G(r, r)}$, where $\bm m \in \mathbb F_2^{2^r}$ is a message string.  The encoder for these states are distinct to the encoder obtained from the algorithm $\texttt{RecursiveQRM(r, m)}$. We now provide a recursive encoder construction to encode the computational basis states.

\noindent\textbf{Recursive encoder for computational basis states}: We define computational basis encoders as $U^c(r, m) := U_{G(r, m)}$. $\texttt{RecursiveBasisQRM(r, m)}$ in Figure \ref{alg:rec_qrm_c} provides the recursive construction for $U^c(r, m)$, with a constructure similar to $\texttt{RecursiveQRM(r, m)}$.

We begin with the state $\ket{\bm m_{\bm u}, \bm m_{\bm v}, 0}$. In Step 1 and 2, we initialize the circuits and obtain the qubit maps and $U^c(r, m-1)$ for the recursive iteration. In Step 3, we permute the message qubits consisting $\bm m_{\bm u}, \bm m_{\bm v}$ to the positions dictated by $U^c(r, m)$. In Step 4, we add CNOT gates to form the state $\ket{\bm m_{\bm u}, \bm 0}\ket{\bm m_{\bm u \oplus \bm v}, \bm 0}$. In Step 5, we add $U^c(r, m)^{\otimes 2}$ completing the recursive step.

In Example \ref{ex:U24}, we illustrate the construction of $U(2, 4)$ for encoding states of $\QRM(2, 4)$ by the recursive algorithm \texttt{RecursiveQRM}. The full construction of $U(2, 4)$ is illustrated in Figure \ref{fig:u24}.
\begin{example}[Construction of $U(2, 4)$]\label{ex:U24}
    $\QRM(2, 4)$ is a $[[16, 6, 4]]$ code. Let the state to be encoded be a computational basis state $\ket{\bm m}, \bm m \in \mathbb F_2^{6}$ with $10$ ancilla qubits initialized to $\ket{0}$ as
    \begin{equation}\label{u24message}
        \ket{m_1m_2m_3m_4m_5m_6}\ket{0}^{\otimes 10}.
    \end{equation}
    The generator matrix $G$ and the qubit map $M$ in Step 2 of $\texttt{RecursiveQRM}$ are
    \begin{align}
        G = \left[\begin{array}{c} G_1 \\ \hline G_2 \vert G_2\end{array}\right] =  \left[\begin {array}{c}
        \begin{array}{cccc}
                0001 & 0001 & 0001 & 0001 \\
                0000 & 0101 & 0000 & 0101 \\
                0000 & 0011 & 0000 & 0011 \\
                0000 & 0000 & 0101 & 0101 \\
                0000 & 0000 & 0011 & 0011 \\
                0000 & 0000 & 0000 & 1111 \\
                \end{array}\\\hline
                \begin{array}{cc|cc}
                0101 & 0101 & 0101 & 0101 \\
                0011 & 0011 & 0011 & 0011 \\
                0000 & 1111 & 0000 & 1111
        \end{array}
        \end{array}\right] 
        \equiv \left[ \begin{array}{c}
            g_1 \\ g_2 \\ g_3 \\ g_4 \\ g_5 \\ g_6 \\ g_7 \\ g_8 \\ g_9
        \end{array} \right]; ~~~ M[g_i] = i \label{u24generators}
    \end{align}
    for $G_1 = G(r, m)\setminus G(m-r-1, m), G_2 = G(m-r-1, m-1)\setminus G(m-r-2, m-1)$. Likewise, the generator matrix $G'$ in $\texttt{RecursiveQRM(2, 3)}$ is given by
    \begin{align}
        G' = \left[\begin{array}{cc}
                0101 & 0101 \\
                0011 & 0011 \\
                0001 & 0001 \\
                0000 & 1111 \\
                0000 & 0101 \\
                0000 & 0011 \\
                1111 & 1111 
        \end{array}\right] 
        \equiv \left[ \begin{array}{c}
            g_1' \\ g_2' \\ g_3' \\ g_4' \\ g_5' \\ g_6' \\ g_7'
        \end{array} \right]; ~~~ M'[g_i'] = i \label{u23generators}
    \end{align}
    We add Hadamard gates $H$ on qubits $\{q_7, q_8, q_9\}$, Eq. \eqref{u24message} is converted to 
    \begin{equation}\label{u24afterH}
        \frac{1}{2\sqrt{2}} \sum_{u_1, u_2, u_3 = 0}^1 \ket{m_1m_2m_3m_4m_5m_6}\ket{u_1u_2u_3}\ket{0}^{\otimes 7}
    \end{equation}
    Using Eqns. \eqref{u24generators} and \eqref{u23generators}, we define the qubit permutation $P(p)$ in Step 4 given by the string
    \begin{equation}
        \bar p = (3, 5, 6, 9, 10, 12, 1, 2, 4, 7, 8, 11, 13, 14, 15, 16)
    \end{equation}
    where $\bar p$ defines the permutation $P(p)$ as described in Eq. \eqref{def:permutationstring}. Applying the qubit permutation, we transform the state to
    \begin{equation}
        \frac{1}{2\sqrt{2}} \sum_{u_1, u_2, u_3 = 0}^1 \ket{u_1, u_2, m_1, u_3, m_2, m_3, 0, 0, m_4, m_5, 0, m_6, 0, 0, 0, 0}
    \end{equation}
    We now add CNOT gates $\CNOT{i}{i+8}$ for $i = 1, 2,\dots, 6$ to obtain
    \begin{equation}\label{U24penultimatestep}
        \frac{1}{2\sqrt{2}} \sum_{u_1, u_2, u_3 = 0}^1 \ket{u_1, u_2, m_1, u_3, m_2, m_3, 0, 0}\ket{(m_4\oplus u_1), (m_5\oplus u_2), m_1, (m_6\oplus u_3), m_2, m_3, 0, 0}
    \end{equation}
    Finally, we add the encoders $U(2, 3)^{\otimes 2}$ onto the state in Eq. \eqref{U24penultimatestep} to obtain the encoded state.
\end{example}
\textit{CNOT Gate counts}: In Step 4 of the recursive construction, we use $|G(r, m)| = \sum_{i=0}^r {m-1 \choose i }$ CNOT gates.  Thus, $U^c(r, m)$ is constructed with
\begin{align}
    \zeta^c(r, m) = \begin{cases}
        \sum_{i=0}^r{m-1\choose i} + 2\zeta^c(r, m-1) & \text{if } 0\leq r < m, \\
        m2^{m-1} & \text{if } r = m,\\
    \end{cases}
\end{align}
CNOT gates as $\zeta^c(m, m) = 2^{m-1} + 2\zeta(m-1, m-1)$ and $\zeta^c(1, 1) = 1$. Since $U^c(m, m) \equiv U(m, m)$, we have $$\zeta(m, m) = \zeta^c(m, m).$$
The CNOT gate counts for a few code parameters $(r, m)$ is listed in Table \ref{tbl:qrmencoder}.

\textit{Circuit depth}: Since the CNOT gates added at each recursive iteration are on distinct qubits (layer depth $1$) and there are $O(\log n)$ recursive iterations (in the construction of $U(r, m)$ and $U^c(r, m)$), the total circuit depth is $O(\log n)$.

Similar recursive encoder constructions and their gate requirements, for all the classes of punctured and zero rate QRM codes are presented in Appendices \ref{app:decomposition}.

\section{Discussions}\label{sec:disc}
\subsection{Circuit efficiency}
\textit{Entanglement and CNOT gates}:
Since the states considered are codewords of stabilizer codes, we consider the measure of entanglement for stabilizer states introduced by Fattal et al. in Ref \citenum{fattal_entanglement_2004}. Suppose $n$ qubits $A$ are partitioned into $k$ disjoint subsets $\{A_1, A_2, \dots, A_k\}$ such that $A = \bigcup_{i}A_i, A_i\cap A_j = \phi$ for $i\neq j$, then the measure of entanglement is given by
\begin{equation}\label{def:entropy}
    \mathcal E = n - \left|\prod_{j=1}^k S_j\right|
\end{equation}
where $S_j$ are the \textit{local} stabilizers that act as identity on the $j^{\text{th}}$ partition $A_j$ of the qubits and $|S|$ denotes the rank of $S$. $\mathcal E$ is an entanglement monotone as the rank of $S_{loc} = \prod_{j=1}^k S_j$ is invariant under local unitary transformations on every $A_j$, and increases on measurement of local Pauli operators. Stabilizers $S$ of rank equal to the number of qubits is sufficient to completely specify the state. Larger the rank of $S_j$, the more deterministic, and uncorrelated the state over $A_j$ is from the rest.
The authors of Ref. \citenum{fattal_entanglement_2004} also note that for a bi-partition, the entropy of entanglement is simply $e_{A-B} = \mathcal E/2$.
For the bi-partition on qubits $Q^{(m)}(1; 0) = \{q_1, \dots, q_{2^{m-1}}\}$ and $Q^{(m)}(1; 1) = \{q_{2^{m-1} + 1}, \dots, q_{2^m}\}$ of the state $\ket{\bm w}_{r, m}\in \QRM(r, m)$, we have
\begin{equation}\label{def:entropy_bipartition}
    e_{A-B} = \sum_{i = m-r-1}^r {m-1 \choose i}
\end{equation}
as $S_1$, $S_2$ are the stabilizers of $\QRM(r, m-1)$ over qubit sets $Q^{(m)}(1;0)$, $Q^{(m)}(1;1)$ respectively and $S_1\cap S_2 =\phi$. $e_{A-B}$ is the maximum possible entropy of entanglement across this bi-partition. Since a CNOT gate can increase the entanglement entropy $e_{A-B}$ by at most $1$, a valid encoder for $\ket{\bm w}_{r, m}$ will require at least $e_{A-B}$ CNOT gates across this bi-partition. $e_{A-B}$ coincides with the number of CNOT gates across this partition in the construction of the recursive circuit $U(r, m)$. Thus, the recursive construction is optimal in terms of CNOT gates across the bi-partitions at each stage. The CNOT gate counts for the row-reduced and recursive encoders are listed in Table \ref{tbl:qrmencoder}. The recursive encoder has a lower gate count for most code parameters, except when $r\approx m$. By identifying row transformations on the generator matrices of the encoders in the final layer of the recursive construction (when $m = r$ or $m = r+1$ in the case of QRM and pQRM codes respectively) that reduce non-zero entries while keeping the previously added layers of CNOT gates invariant, we can potentially further reduce gate counts towards complete optimality\PJ{To add!!!}. We leave this to future investigation. The CNOT gate counts for a few classes of codes relevant for fault tolerance are listed in Table \ref{tbl:punctured}. Apart from $\ket{+}_{1, 3}^{(*, 0)}$, CNOT counts are found to be lower than previous works.

\textit{Circuit depth}: The encoding circuits for QRM and pQRM codes have depth of $O(\log n)$. The depth for a few state preparation circuits are listed in Table \ref{tbl:punctured}.

\begin{table}[]
    \centering
    \begin{subtable}{0.45\linewidth}
    \centering
    \begin{tabular}{|c|cc|cc|}
    \hline
     & \multicolumn{2}{c|}{CNOT counts}& \multicolumn{2}{c|}{$E_d$}\\\hline
        $(r, m)$ & \multicolumn{1}{c|}{Red.} & Rec. & \multicolumn{1}{c|}{Red.} & Rec. \\\hline
        $(1, 3)$ & 12 & 10 & 3.37 & 3.5\\
        $(2, 4)$ & 53 & 30 & 7.06 & 7.0\\
        $(3, 4)$ & 29 & 32 & 4.5 & 8.12\\
        $(3, 6)$ & 470 & 176 & 16.73 & 15.87\\
        $(4, 6)$ & 367 & 190 & 14.53 & 19.75\\
        $(5, 6)$ & 125 & 192 & 5.84 & 20.78\\
        $(3, 7)$ & 960 & 372 & 20.35 & 18.375\\
        $(4, 7)$ & 1389 & 430 & 26.0 & 26.93\\\hline
    \end{tabular}
    \caption{}
    \label{tbl:qrmencoder}
  \end{subtable}
  \hfill
  \begin{subtable}{0.5\linewidth}
    \centering
    \begin{tabular}{|c|c|cc|c|}
    \hline
         & & \multicolumn{2}{c|}{CNOT counts} & Depth\\\hline
         (r, m) & N &\multicolumn{1}{c|}{\cite{luo_fault-tolerance_2020}} & This work & This work\\\hline
        $(1, 3)$ & $7$ & $8, 8$ & $8, 9$ & $4, 5$\\
        $(1, 4)$ & $15$ & $22, 24$ & $22, 24$ & $5, 6$\\
        $(2, 5)$ & $31$ & $190, 190$ & $63, 65$ & $6, 10$\\
        $(2, 6)$ & $63$ & $248, 258$ & $150, 153$ & $7, 11$\\
        $(2, 7)$ & $127$ & $868, 762$ & $332, 336$ & $8, 12$\\
        \hline
    \end{tabular}
    \caption{Table 2}
    \label{tbl:punctured}
  \end{subtable}
  \caption{(a) CNOT gate count and average error propagation distance, $E_d$ of the row reduced (Red.) and recursive (Rec.) encoders for encoding states of $\QRM(r, m)$. (b) State preparation resources and circuit depth for $QRM^{(*, 0)}(r, m)$ encoders from previous works\cite{luo_fault-tolerance_2020} and recursive encoders presented in this work. CNOT counts for state preparation of $\left(\ket{0}_{r, m}^{(*, 0)}, \ket{+}_{r, m}^{(*, 0)}\right)$ are listed. Exact expressions for gate counts are listed in Appendix \ref{app:punc_zero}. Depth includes the Hadamard gates in the beginning.}
  \label{tbl:gate_counts}
\end{table}

\textit{Error propagation}: There are two main factors that influence occurrence and propagation of errors, the number of gates and the gate connectivity. For the recursive encoders introduced here, while the number of gates introduced are lower, the qubit connectivity is not necessarily lower than previous methods and can lead to spread of errors. We define a simple heuristic metric, the average error propagation distance $E_d$ as the average connectivity of the qubits. We say a qubit $q_1$ is connected to $q_0$ if they are either directly connected by a CNOT gate control (target) or indirectly by a CNOT gate control (target) to another qubit $q_2$ that is directly or indirectly connected to $q_0$ by a CNOT control (target) earlier in the circuit. The connectivity of a qubit is the number of qubits connected to it, and $E_d$ is the average of connectivity over all qubits. Thus, if an $X$ or $Z$ error occurs at any qubit in the beginning of the circuit, on average it would propagate to $E_d$ qubits, ignoring potential cancellations. $E_d$ for a few encoder constructions are provided in Table \ref{tbl:qrmencoder}. We observe that for $r\approx m/2$, the gate counts and $E_d$ are lower for the recursive encoders.

\textit{Hardware locality}: The recursive constructions provide an spatially local encoder. Implementing spatially non-local CNOT gates on qubits with connectivity restricted to a $2-$dimensional plane requires a series of qubit swaps to bring interacting qubits together. Suppose the $2^m$ qubits are partitioned into $\left\{Q^{(m)}(1; 0), Q^{(m)}(1; 1)\right\}$. In the first stage of the construction of the recursive encoders, we apply CNOT gates across this partition, with at most $1$ CNOT gate on each qubit. In the subsequent stages, qubits interactions are restricted to the subsets and do not require to be swapped across. This reduces the instances of qubit swaps required. Figure \ref{fig:staged} illustrates the partition and locality.

\subsection{Distilling entanglement}
By applying decoders on qubit subsets in Eq. \eqref{def:qubitpartitionsets} on the encoded states $\ket{\bm w}_{r, m}$, we obtain entangled pairs of qubits across the partitions. For instance, applying the decoders $U(r, m-1)$ on qubit sets $Q^{(m)}(1;0), Q^{(m)}(1; 1)$ of the state $\ket{\bm 0}_{r, m}, r < m-1$, we obtain ${m-1\choose m-r-1}$ Bell pairs across this partition. In general, we can obtain at least ${m-l\choose m-r-1}$ $2^l$-party $GHZ$ states by considering partitions consisting of subsets of $2^{m-l}$ qubits. This is particularly advantageous since errors can be corrected before decoding and making use of the entanglement. In Example \ref{ex:4ghz}, we show extraction of ${2\choose 1} = 2$ GHZ states across 4 parties. We note that in Ref. \citenum{Bravyi2024generatingkeprpairs}, the authors showed that the code states of $\QRM(r, m)$ can be utilized to distribute EPR (Bell) pairs amongst $k\approx \mathrm{O}(m)$ pairs of parties with local operations and classical communication (LOCCs), whereas in our work, we show the existence of GHZ states shared amongst all parties.

\begin{example}[$4-$party GHZ from $\QRM(2, 4)$]\label{ex:4ghz}
    We begin with the state $\ket{\bm 0}_{2, 4}$. Using Theorem \ref{thm:decomp2} twice, we obtain
    \begin{align}
        \ket{0}_{2, 4} =& \frac{1}{2\sqrt{2}}\sum_{\bm u \in A/B} \ket{\bm u}_{2, 3}\ket{\bm u}_{2, 3}\\
        =& \frac{1}{4\sqrt{2}}\sum_{\substack{\bm \alpha \in C/D\\ \bm \beta \in D}}\left(\sum_{\bm v \in D}\ket{\bm \alpha\oplus \bm v}_{2, 2}\ket{\bm \alpha\oplus \bm \beta\oplus \bm v}_{2, 2}\right)\left(\sum_{\bm w \in D}\ket{\bm \alpha\oplus \bm w}_{2, 2}\ket{\bm \alpha\oplus \bm \beta\oplus \bm w}_{2, 2}\right)\label{QRM24_decomp}
    \end{align}
    where $A = \RM(1, 3), B = \RM(0, 3), C = \RM(1, 2), D = \RM(0, 2)$. On applying the decoders $\left(U(2, 2)^\dagger\right)^{\otimes 4}$ we obtain the state
    \begin{equation}
        \frac{1}{4\sqrt{2}} \sum_{\substack{m_1=0\\ m_2=0\\m_3=0}}^{1} \left[\sum_{m_4 = 0}^1 \ket{m_4, \overset{\downarrow}{m_1}, \underset{\uparrow}{m_2}, 0}\ket{m_4 \oplus m_3, \overset{\downarrow}{m_1}, \underset{\uparrow}{m_2}, 0} \right] \left[\sum_{m_5 = 0}^1 \ket{m_5, \overset{\downarrow}{m_1}, \underset{\uparrow}{m_2}, 0}\ket{m_5 \oplus m_3, \overset{\downarrow}{m_1}, \underset{\uparrow}{m_2}, 0} \right]
    \end{equation}
    Notice that qubits $\{q_2, q_6, q_{10}, q_{14}\}$ and $\{q_3, q_7, q_{11}, q_{15}\}$ (indicated by $\overset{\downarrow}{}$ and $\underset{\uparrow}{}$ respectively) share GHZ states and are dis-entangled from the remaining qubits. Figure \ref{fig:entdist} illustrates this procedure.
\end{example}

\begin{figure}
    \centering
    \begin{subfigure}[b]{0.45\textwidth}
    \includegraphics[width=\textwidth]{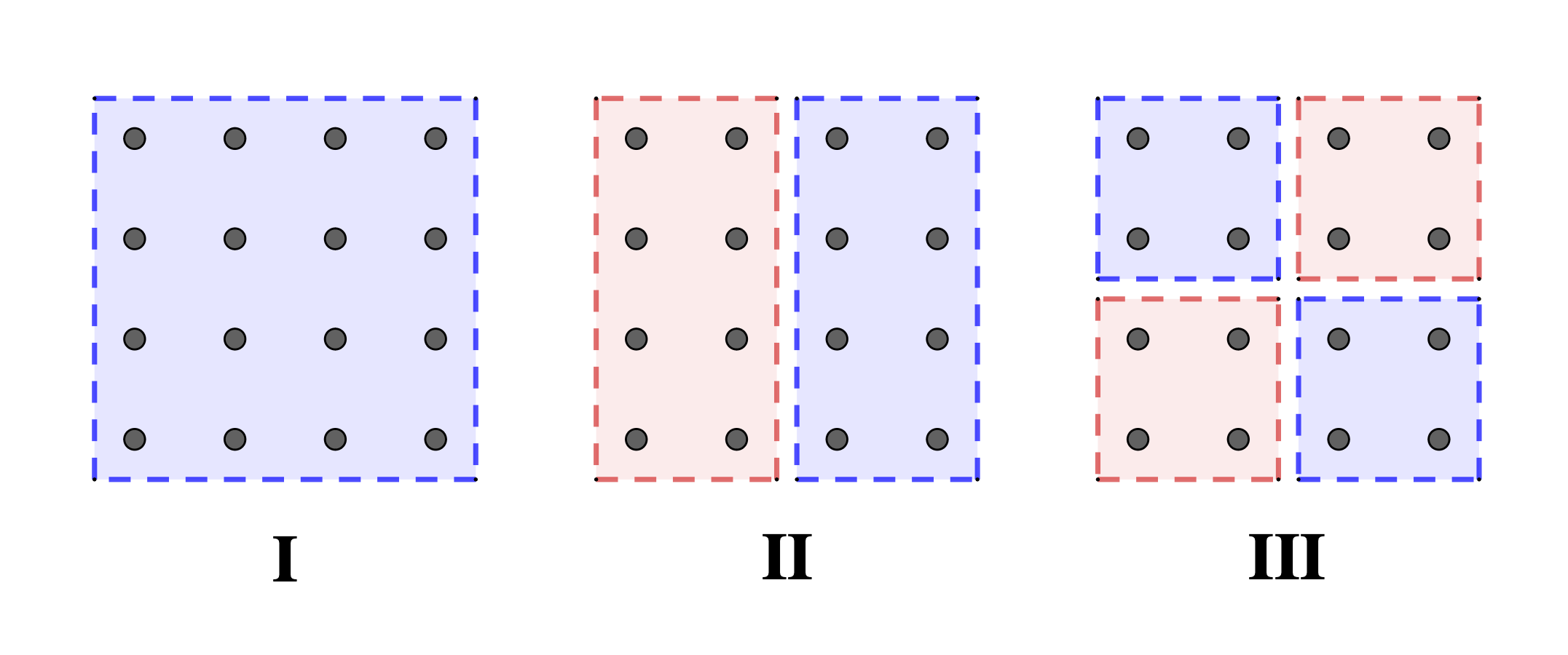}
    \caption{}
    \label{fig:staged}
  \end{subfigure}
  \hfill
  \begin{subfigure}[b]{0.45\textwidth}
    \includegraphics[width=\textwidth]{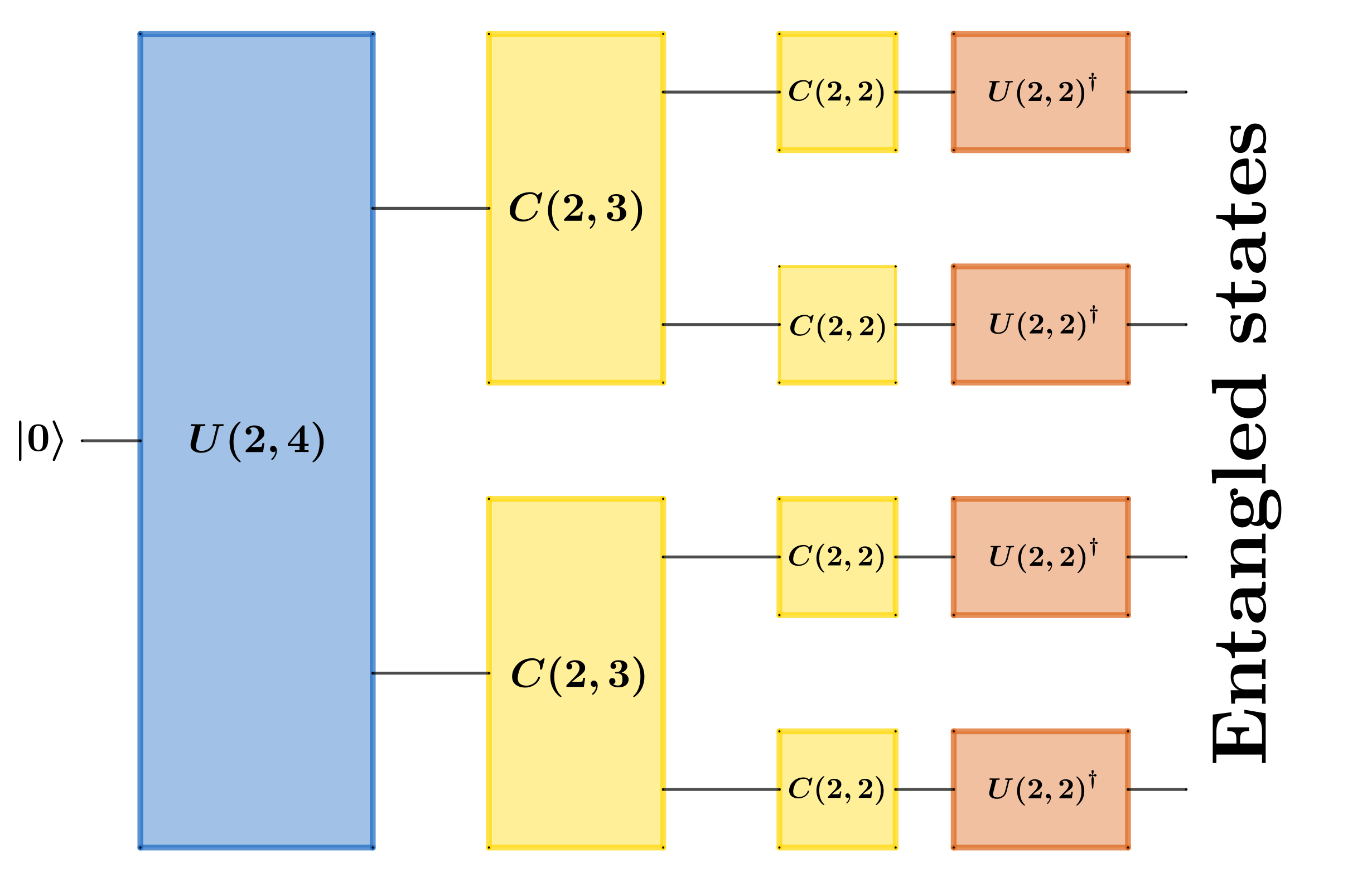}
    \caption{}
    \label{fig:entdist}
  \end{subfigure}
  \caption{(a) Hardware locality of staged decoding/encoding. After adding CNOT gates at each iteration across a partition of qubits, the subsequent gates are restricted to the corresponding subsets of qubits. (b) Staged entanglement distribution using $\ket{0}_{2, 4}$ to distribute $4-$party GHZ state. Initial state is obtained by encoding with $U(2, 4)$ (blue). At each stage, errors are corrected, by $C(2, 3), C(2, 2)$ (yellow). The entangled states are obtained by decoding on each set, using $U(2, 2)^\dagger$ (orange).}
  \label{fig:local_ent}
\end{figure}

\section{Conclusion}\label{sec:conc}
Using the Plotkin $(u, u + v)$ construction of classical RM codes, we showed that the QRM codeword can be written as a superposition of tensor products of QRM codewords of smaller length. Using these properties, we constructed recursive encoder-decoder circuits with lowered gate counts and logarithmic circuit depth. By estimating the entropy of entanglement in the codewords, we show that our encoders are efficient. We also provide a simple scheme to distill the entanglement. By reducing the necessary resources for encoding codes that admit transversal gates, we believe our work will assist in building fault tolerant quantum computers - through efficient practical coding schemes. In this article, through QRM codes we connect various ideas of encoding and entanglement, and we hope that this work brings new dimensions towards understanding how entanglement can be extracted from QECCs for multiparty communications/computations with such resources.

\section*{Code availability}
Python code for generating the circuits and estimate circuit properties can be found at \url{https://github.com/Praveen91299/QRM}. The functions that construct encoding circuits return the circuit, qubit permutation to be applied before the circuit, and the row index list as a dictionary object.

\section*{Acknowledgments}
P. Jayakumar acknowledges fellowship support from Kishore Vaigyanik Protsahan Yojana (KVPY) scheme, Government of India, for their undergraduate research. P. Jayakumar thanks Arpit Behera for inputs and insightful discussions while writing their undergraduate thesis at IISc, Bangalore.

\bibliographystyle{unsrt}
\bibliography{QRM}

\newpage
\appendix
\section{Properties of evaluation vectors and RM codewords}\label{app:RMprop}
The RM codewords can be constructed as evaluation vectors of multivariate polynomials. Many useful properties were proved in the Appendices of our previous work\cite{Nadkarni2024entanglement} and are directly referenced here. We present a few additional useful properties utilized here.

\begin{lemma}[Uniqueness of leading entry bit]\label{lem:uniquefirst}
    The index of the leading entry bit of the canonical generators of $\RM(r, m)$, i.e, the leading entry bit of each row in $G(r, m)$ is unique.
\end{lemma}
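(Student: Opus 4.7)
The plan is to compute the leading nonzero index of each canonical generator $\Eval{m}{x_A}$ as an explicit function of the indexing subset $A\subseteq \integers{1}{m}$, and then observe that this function is injective on $\{A:|A|\leq r\}$.

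By the evaluation vector construction in Eq.~\eqref{def:RMeval}, the $j$-th entry of $\Eval{m}{x_A}$ equals $\prod_{i\in A}\rho_i(j)$, which is nonzero precisely when $\rho_i(j)=1$ for every $i\in A$. Since $\bm\rho(j)$ is the $m$-bit binary expansion of $j-1$ (with $\rho_i$ the bit of weight $2^{m-i}$), I would minimize $j$ subject to this constraint: the minimum is achieved uniquely by the expansion with $\rho_i=1$ precisely for $i\in A$ and $\rho_i=0$ elsewhere, giving the leading nonzero index
\begin{equation}
    j^\star(A) \;=\; 1 + \sum_{i\in A} 2^{m-i}.
\end{equation}

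Finally, by the uniqueness of binary representation the assignment $A\mapsto j^\star(A)$ is a bijection between the power set of $\integers{1}{m}$ and $\integers{1}{2^m}$, and in particular it is injective on the subfamily $\{A:|A|\leq r\}$ that indexes the canonical generators of $\RM(r,m)$. Therefore no two rows of $G(r,m)$ share the same leading nonzero position, which is the claim. There is no substantive obstacle here: once the convention $\bm\rho(j)=\mathrm{bin}(j-1)$ from Eq.~\eqref{def:RMeval} is pinned down, the statement reduces to uniqueness of binary expansions, and the only bookkeeping required is the identification of $\rho_i$ with the bit of weight $2^{m-i}$.
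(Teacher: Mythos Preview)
Your argument is correct. You compute the leading nonzero index of $\Eval{m}{x_A}$ explicitly as $j^\star(A)=1+\sum_{i\in A}2^{m-i}$ and then invoke uniqueness of binary expansions; nothing is missing.

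The paper takes a different route: it argues by induction through the Plotkin block form $G(r,m)=\begin{bmatrix}G(r,m-1)&G(r,m-1)\\0&G(r-1,m-1)\end{bmatrix}$, noting that leading entries of the top block fall in the first $2^{m-1}$ columns and those of the bottom block in the last $2^{m-1}$ columns, and then checking base cases. Your approach is more direct and yields the exact leading index, not just its uniqueness --- information the paper actually uses later when it says ``the column index of the leading non-zero entry of rows in $G_1$ and $G_2$ are unique. Call this set of unique column indices as $L$.'' The paper's inductive approach, on the other hand, is thematically aligned with the recursive encoder constructions that dominate the rest of the work. Both arguments are short; yours has the advantage of producing the bijection $A\leftrightarrow j^\star(A)$ explicitly.
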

\begin{proof}
    We prove by induction.
    Due to the structure of the generator matrix in \eqref{def:Grm}, the leading entry is unique if it is unique for $G(r, m-1)$ and $G(r-1, m-1)$.
    It is sufficient to show that the lemma holds for $G(0, m)$ and $G(1, m)$ for any $m \geq 0$.
    $G(0, m)$ is a single row, thus the lemma is trivially true. The rows of $G(1, m)$ are of the form $\Eval{m}{x_i}$ for $i \in\{1, 2, \cdots, m\}$. The first evaluation point where $\Eval{m}{x_i}$ is $1$ (leading entry bit) is $\bm \rho(s)$ for $s \in \integers{1}{2^m}$ such that $\rho_i{(s)} =1, \rho_j{(s)} =0 \forall j \in \{1, 2, \cdots, m\}\setminus \{i\}$. By construction $\bm \rho{(s)}\neq \bm \rho{(t)}$ for $s\neq t$, thus the leading entries of $G(1, m)$ are unique.
\end{proof}

\begin{lemma}[Weight of evaluation vectors]\label{lem:evalwt}
    For $A, B \subseteq \integers{1}{m}, A\cap B =\phi$, the weight of the evaluation vectors of monomials have the following properties:
    \begin{itemize}
        \item[(i)] $\wt{\Eval{m}{x_A}} = 2^{m - |A|}$
        \item[(ii)] $\wt{\Eval{m}{x_Ax_i}} = \wt{\Eval{m}{x_A(1 + x_i)}} = 2^{m - |A| -1}$ for $i\in \integers{1}{m}\setminus A$.
        \item[(iii)] $\wt{\Eval{m}{x_A\prod_{i\in B}\left(1 + x_i\right)}} = 2^{m - |A| - |B|}$
    \end{itemize}
\end{lemma}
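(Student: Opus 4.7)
The plan is to prove all three statements by a direct counting argument on the evaluation domain $\mathbb F_2^m$, using the fact that $\Eval{m}{f}$ is the concatenation of $f(\bm\rho)$ over all $2^m$ binary vectors $\bm\rho \in \mathbb F_2^m$. The weight of an evaluation vector is therefore $|\{\bm\rho\in\mathbb F_2^m : f(\bm\rho)=1\}|$, and the strategy is to characterize, for each of the three polynomials, the set of evaluation points on which it equals $1$.

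First I would establish (i). Since $x_A(\bm\rho) = \prod_{i\in A}\rho_i$ and we are working over $\mathbb F_2$, the value is $1$ if and only if $\rho_i = 1$ for every $i\in A$. The coordinates indexed by $\integers{1}{m}\setminus A$ are unconstrained, so the number of such $\bm\rho$ is $2^{m-|A|}$. Next, (ii) splits into two cases: for $x_Ax_i$, since $i\notin A$, this monomial equals $x_{A\cup\{i\}}$, and (i) immediately gives weight $2^{m-|A|-1}$; for $x_A(1+x_i)$, the factor $(1+x_i)(\bm\rho) = 1 \iff \rho_i = 0$, so the polynomial is $1$ iff $\rho_j=1$ for $j\in A$ and $\rho_i = 0$, again leaving $m-|A|-1$ free coordinates and weight $2^{m-|A|-1}$.

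For (iii) I would generalize the argument from (ii). The product $\prod_{i\in B}(1+x_i)(\bm\rho)$ equals $1$ precisely when $\rho_i = 0$ for every $i\in B$, and $x_A(\bm\rho)=1$ iff $\rho_j=1$ for every $j\in A$. Because the hypothesis $A\cap B=\phi$ ensures these constraints are mutually consistent, the polynomial $x_A\prod_{i\in B}(1+x_i)$ evaluates to $1$ on exactly those $\bm\rho$ for which the $|A|+|B|$ coordinates in $A\cup B$ are fixed to the prescribed values; the remaining $m-|A|-|B|$ coordinates are free, giving $2^{m-|A|-|B|}$ such points.

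No substantial obstacle is expected: the statements reduce to counting the solutions of a conjunction of coordinate conditions on a Boolean cube, with the disjointness of $A$ and $B$ being the only ingredient needed to avoid contradictory constraints. The mild subtlety to flag is the reliance on the convention that $\Eval{m}{\cdot}$ enumerates all $\bm\rho\in\mathbb F_2^m$ exactly once, as fixed by Eq. \eqref{def:RMeval}, so that counting evaluation points equals computing Hamming weight.
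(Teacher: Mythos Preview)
Your proof is correct and coincides with the paper's for parts (i) and (ii). For (iii) the paper takes a slightly different tack: it applies the weight equality from (ii) successively to replace each factor $(1+x_i)$ by $x_i$, reducing to $\wt{\Eval{m}{x_{A\cup B}}}$ and then invoking (i), whereas you count the constrained evaluation points directly; both arguments are equally elementary, and your direct count is arguably cleaner since it makes the role of the disjointness hypothesis $A\cap B=\phi$ explicit.
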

\begin{proof}
We use the bit indexing convention $\text{bin}(j-1) = \bm \rho(j) = (\rho_1(j), \rho_2(j), \cdots, \rho_m(j))$.
\begin{itemize}
    \item[(i)] $\Eval{m}{x_A}_j = 1$ iff $\rho_i(j) =1 \forall i \in A$. There are $2^{m-|A|}$ such evaluation points $\bm \rho(j), j \in \integers{1}{2^m}$.
    \item[(ii)] $\Eval{m}{x_Ax_i} = \Eval{m}{x_{A\cup \{i\}}}$. From (i), $\wt{\Eval{m}{x_{A\cup \{i\}}}} = 2^{m - |A\cup \{i\}|} = 2^{m - |A| - 1}$. $\Eval{m}{x_A(1 + x_i)}_j = 1$ iff $\rho_k(j) = 1 \forall k \in A$ and $\rho_i(j) = 0$. There are $2^{m - |A| - 1}$ such evaluation points.
    \item[(iii)] By using (ii) successively, we obtain $\wt{\Eval{m}{x_A\prod_{i\in B}\left(1 + x_i\right)}} = \wt{\Eval{m}{x_Ax_B}} = \wt{\Eval{m}{x_{A\cup B}}} = 2^{m - |A| - |B|}$.
\end{itemize}
\end{proof}

Since $\RM(r, m)$ consists of monomials upto degree $r$, the lowest Hamming weight of the generators in $\RM(r, m)$ is $2^{m-r}$. From the lemmas above, notice that by forming multinomials given by invertible linear combinations of monomials, we can obtain generators of lower Hamming weights. Since the degree is bounded by $r$, we cannot lower the Hamming weights of $G(r, m)$ beyond $2^{m-r}$.

\begin{lemma}(Form of evaluation vectors)\label{lem:evalform}
    Let $f\in \mathbb{F}_2[x_1,\cdots, x_m], g \in \mathbb{F}_2[x_2,\cdots,x_m]$ be multivariate polynomials of degree $\leq r$ and $\leq r-1$ respectively, such that $\Eval{m}{f} = \bm c$ and $\Eval{m-1}{g} = \bm u$. The evaluation vector $\bm c$ then has the form
    \begin{itemize}
        \item[(i)] $(\bm 0, \bm u)$ if $f = x_1g$
        \item[(ii)] $(\bm u, \bm u)$ if $f = g$
        \item[(iii)] $(\bm u, \bm 0)$ if $f = (1 + x_1)g$
    \end{itemize}
\end{lemma}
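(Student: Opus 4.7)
The plan is to exploit the ordering convention in the definition of the evaluation vector. Recall that $\bm \rho(j) = \mathrm{bin}(j-1)$, with $\rho_1$ the most significant bit. Hence as $j$ ranges over $\integers{1}{2^m}$, the sequence of points splits cleanly in half: for $j \in \integers{1}{2^{m-1}}$ we have $\rho_1(j) = 0$, and for $j \in \integers{2^{m-1}+1}{2^m}$ we have $\rho_1(j) = 1$. Moreover, in either half, the remaining coordinates $(\rho_2(j), \ldots, \rho_m(j))$ enumerate $\mathbb F_2^{m-1}$ in exactly the same order as the definition of $\Eval{m-1}{\cdot}$.

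First I would record this splitting formally, writing $\Eval{m}{f} = \bigl(\Eval{m}{f}\bigr|_{x_1=0},\, \Eval{m}{f}\bigr|_{x_1=1}\bigr)$, where each half has length $2^{m-1}$. The key observation is that, for any polynomial $h \in \mathbb F_2[x_2, \ldots, x_m]$, the restriction $\Eval{m}{h}\bigr|_{x_1=k}$ equals $\Eval{m-1}{h}$ for both $k=0$ and $k=1$, because $h$ does not depend on $x_1$ and the residual ordering on $(\rho_2, \ldots, \rho_m)$ matches the $(m-1)$-variable convention.

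Next I would dispatch the three cases directly by substituting $x_1 = 0$ and $x_1 = 1$:
\begin{itemize}
    \item[(i)] If $f = x_1 g$, then on the first half $x_1 = 0$ forces $f = 0$, yielding $\bm 0$; on the second half $x_1 = 1$ gives $f = g$, and by the observation above this restriction equals $\Eval{m-1}{g} = \bm u$. Thus $\bm c = (\bm 0, \bm u)$.
    \item[(ii)] If $f = g$, then $f$ is independent of $x_1$, so both halves equal $\Eval{m-1}{g} = \bm u$, giving $\bm c = (\bm u, \bm u)$.
    \item[(iii)] If $f = (1 + x_1)g$, then on the first half $x_1 = 0$ yields $f = g$ and hence $\bm u$, while on the second half $x_1 = 1$ yields $f = 0$ and hence $\bm 0$. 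Thus $\bm c = (\bm u, \bm 0)$.
\end{itemize}

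There is no substantive obstacle here: the lemma is essentially a bookkeeping statement about the binary ordering of evaluation points, and the only thing to be careful about is confirming that $\rho_1$ is the most significant bit (which I would verify explicitly from a small example, e.g.\ $\Eval{3}{x_1} = (0,0,0,0,1,1,1,1)$ as seen in Example \ref{ex:rm23}) so that the split into first and second halves corresponds to $x_1 = 0$ and $x_1 = 1$ respectively.
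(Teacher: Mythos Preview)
Your proof is correct and takes essentially the same approach as the paper: both argue directly from the ordering of evaluation points, with the paper's terse ``(i), (ii) are trivial from constructions, (iii) = (i) + (ii)'' being exactly what you spell out in detail. The only cosmetic difference is that the paper obtains (iii) by linearity from (i) and (ii), whereas you substitute $x_1 = 0,1$ directly into $(1+x_1)g$; both are immediate.
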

\begin{proof}
    (i), (ii) are trivial from constructions, (iii) = (i) + (ii).
\end{proof}
The above lemma extends to punctured codewords. An immediate consequence of Lemma \ref{lem:evalform} (iii) is if $\bm u = \Eval{m}{x_1x_2\cdots x_m}$, then $(\bm u, \bm 0) = \Eval{m+1}{(1 + x_1)x_2x_3\cdots x_m}$.

\section{Degeneracy of Quantum Reed Muller code}\label{app:degeneracy}
A quantum code is said to be degenerate if at least two correctable errors have the same syndrome. Non-degeneracy ensures that the decoded state is unique. In this Appendix, we prove that $\QRM(r, m)$ is a non-degenerate quantum code.

On performing syndrome measurement, the error in the state collapses to a Pauli error acting on the encoded state. Using the symplectic isomorphism, this error can be written as a $2N$ length binary vector, say $\bm e$. Then the syndrome $\bm s$ is related to $\bm e$ by $\bm s = \bm e J H^T$, where $J = \begin{pmatrix} 0 & 1 \\ 1 & 0 \end{pmatrix} \otimes \mathbb I_{N\times N}$ exchanges the $X$ and $Z$ errors, i.e., $e = (e_x, e_z), eJ = (e_z, e_x)$. Suppose errors $\bm e_1$ and $\bm e_2$ have the same syndrome, then 
\begin{align}
    &\bm e_1 J H^T = ~\bm e_2 J H^T\\
    \Rightarrow &(\bm e_1 - \bm e_2) J H^T = (\bm e_1 + \bm e_2) J H^T= ~\bm 0
\end{align}
This implies that $(\bm e_1 + \bm e_2)J = \bm e'$ belongs to the null space of $H$. 
For a CSS code with the parity check matrix $H$ of the form in Eq. \eqref{paritycheck}, the null space is spanned by rows of the matrix 
\begin{equation}
    \begin{pmatrix}
        G_1 & 0 \\
        0 & G_2
    \end{pmatrix}
\end{equation}
For a $\QRM(r, m)$ code, the minimum weight of the rowspace of the above matrix is $2^{m-r}$ as the minimum distances of the classical RM codes used in the QRM code are $2^{m-r}$ each. If the errors are correctable, then $\wt{\bm e_1}, \wt{\bm e_2} \leq \lfloor (d-1)/2\rfloor = 2^{m-r-1}-1$ as $d = 2^{m-r}$. Thus $\wt{\bm e'} = \wt{\bm e_1 \oplus \bm e_2} \leq \wt{\bm e_1} + \wt{\bm e_2} = 2^{m-r} - 2$ and $\bm e'$ cannot belong to the null space of $H$. Thus two correctable errors can never have the same syndrome and the code $\QRM(r, m)$ is non-degenerate.

\section{Recursive constructions of codewords and encoders }\label{app:decomposition}
We showed that $\QRM(r, m)$ codewords can be written as a superposition of tensor product of codewords of $\QRM(r, m-1)$, and provided recursively constructed encoders, $U(r, m)$. In this Appendix, we show similar constructions for punctured, zero rate, and punctured-zero rate quantum Reed-Muller codes.

The Corollaries \ref{cor:qubitsetdecomp} and \ref{cor:decomps} are applicable for Theorems \ref{thm:decompp}, \ref{thm:decompz}, and \ref{thm:decomppz} below and are not explicitly stated.

\subsection{Punctured Quantum Reed-Muller code}\label{app:punc}

\begin{theorem}[Punctured QRM codeword decomposition]\label{thm:decompp}
    For $r< m-1$ and $r \geq \lceil \frac{m-1}{2}\rceil$, the basis codewords of $\pQRM(r, m)$ on $2^{m}-1$ qubits provided in \eqref{def:pQRM} can be written as a superposition of tensor product of codewords of $\pQRM(r, m-1)$ and $\QRM(r, m-1)$ on the first $2^{m-1}-1$ qubits and last $2^{m-1}$ qubits respectively.

    For $\bm w = (\bm w_1, \bm w_2) \in \pRM(r, m)$ with $\bm w_1 \in \pRM(r, m-1),\bm w_2 \in \RM(r, m-1)$,
    \begin{equation}\label{decompp}
        \ket{\bm{w}}_{r, m}^* = N \sum_{ \bm{u} \in B} \ket{\bm{w_1} \oplus \bm u^*}_{r, m-1}^*
        \ket{\bm{w_2} \oplus \bm{u}}_{r, m-1} 
    \end{equation}
    where $B = \RM(r-1, m-1)^\perp /\RM(r, m-1)^\perp$ is a quotient set and $N$ is the normalization factor, $N = \frac{1}{\sqrt{|B|}}$.
\end{theorem}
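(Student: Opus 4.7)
The plan is to mirror the proof of Theorem~\ref{thm:decomp2} while carefully absorbing the additional $\{\bm 1\}$ quotient present in $\pRM(r,m)^\perp = \pRM(m-r-1,m)/\{\bm 1\}$ into the Plotkin parametrization. The key observation driving the argument is that puncturing commutes with the mod-$\bm 1$ identification, so the ``first half'' of the tensor decomposition will naturally yield a $\pQRM(r,m-1)$ codeword (carrying its own $\{\bm 1\}$-quotient in its dual) rather than an unpunctured $\QRM(r,m-1)$ codeword.

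First, I would parametrize $\bm c \in \pRM(m-r-1,m)$ via Plotkin as $\bm c = (\bm \alpha^*, \bm \alpha \oplus \bm \beta)$ with $\bm \alpha \in \RM(m-r-1,m-1)$ and $\bm \beta \in \RM(m-r-2,m-1)$. Since $\bm 1 = (\bm 1^*, \bm 1)$ arises uniquely from $(\bm \alpha,\bm \beta) = (\bm 1,\bm 0)$ (puncturing is injective on $\RM(m-r-1,m-1)$ because its minimum distance exceeds $1$), the quotient by $\{\bm 1\}$ corresponds exactly to the identification $\bm \alpha \sim \bm \alpha \oplus \bm 1$, and $\bm c$ ranges over $\pRM(r,m)^\perp$ as $(\bm \alpha,\bm \beta)$ ranges over representatives of $\RM(m-r-1,m-1)/\langle \bm 1 \rangle \times \RM(m-r-2,m-1)$. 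Splitting the ket across the two halves as $\ket{\bm w_1 \oplus \bm \alpha^*}\ket{\bm w_2 \oplus \bm \alpha \oplus \bm \beta}$ and collapsing the inner $\bm \beta$-sum (a CSS sum over $\RM(r,m-1)^\perp = \RM(m-r-2,m-1)$) by Eq.~\eqref{def:QRM} produces the codeword $\ket{\bm w_2 \oplus \bm \alpha}_{r,m-1}$ up to a factor $\sqrt{|\RM(m-r-2,m-1)|}$.

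Next, I would split $\bm \alpha = \bm u \oplus \bm \delta$ with $\bm u \in B = \RM(m-r-1,m-1)/\RM(m-r-2,m-1)$ and $\bm \delta \in \RM(m-r-2,m-1)/\langle \bm 1 \rangle$. This splitting is consistent because $\langle \bm 1 \rangle \subset \RM(m-r-2,m-1)$ whenever $r \leq m-2$, so the $\bm 1$-identification on $\bm \alpha$ lives entirely inside the $\bm \delta$-coordinate, leaving $\bm u$ free. Property~\eqref{CSS:Xinv} applied to $\bm \delta \in \RM(r,m-1)^\perp$ then collapses the second tensor factor to $\ket{\bm w_2 \oplus \bm u}_{r,m-1}$, independent of $\bm \delta$. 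For the first factor, puncturing restricts to a bijection $\RM(m-r-2,m-1)/\langle \bm 1 \rangle \to \pRM(m-r-2,m-1)/\{\bm 1\} = \pRM(r,m-1)^\perp$, so summing $\ket{\bm w_1 \oplus \bm u^* \oplus \bm \delta^*}$ over $\bm \delta$ yields exactly $\sqrt{|\pRM(r,m-1)^\perp|}\,\ket{\bm w_1 \oplus \bm u^*}_{r,m-1}^*$ by Eq.~\eqref{def:pQRM}.

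Finally, I would verify that the accumulated prefactor $\sqrt{|\RM(m-r-2,m-1)| \cdot |\pRM(r,m-1)^\perp| / |\pRM(r,m)^\perp|}$ collapses to $N = 1/\sqrt{|B|}$, using $|\pRM(r,m)^\perp| = |\RM(m-r-1,m-1)| \cdot |\RM(m-r-2,m-1)|/2$ together with $|\pRM(r,m-1)^\perp| = |\RM(m-r-2,m-1)|/2$. The main obstacle I anticipate is the bookkeeping of these nested quotients: showing that the single $\{\bm 1\}$-identification on $\pRM(m-r-1,m)$ can be pushed through the Plotkin split onto the $\bm \delta$-coordinate and then transported through puncturing so as to reproduce precisely the $\{\bm 1\}$-quotient in the definition of $\pRM(r,m-1)^\perp$, with no over- or under-counting of representatives. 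The hypothesis $r < m-1$ is essential throughout, since it guarantees $\bm 1 \in \RM(m-r-2,m-1)$ and hence that every nested quotient is well-defined.
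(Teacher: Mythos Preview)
Your proposal is correct and follows essentially the same route as the paper's proof: Plotkin-split $\bm c$ as $(\bm\alpha^*,\bm\alpha\oplus\bm\beta)$ with $\bm\alpha\in\RM(m-r-1,m-1)/\{\bm 1\}$ and $\bm\beta\in\RM(m-r-2,m-1)$, collapse the $\bm\beta$-sum into $\ket{\bm w_2\oplus\bm\alpha}_{r,m-1}$, then decompose $\bm\alpha=\bm u\oplus\bm\delta$ and use \eqref{CSS:Xinv} to finish. Your treatment is in fact more careful than the paper's about where the $\{\bm 1\}$-quotient lands and why $r<m-1$ is required for $\bm 1\in\RM(m-r-2,m-1)$; the paper leaves these points and the normalization check implicit.
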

\begin{proof}
    Recall the codewords of $\pQRM(r, m)$ in Eq, \eqref{def:pQRM} as
    $$
    \ket{\bm w}_{r, m}^* = \frac{1}{\sqrt{|\pRM(r, m)^\perp|}}\sum_{\bm c \in \pRM(r, m)^\perp}\ket{\bm w \oplus \bm c}
    $$
    for $\bm w \in \pRM(r, m)$ and $N_1 = 1/\sqrt{|\pRM(r, m)^\perp|}$. Rewriting the codewords $\bm c$ and $\bm w$ with the Plotkin $(u, u+v)$ construction for the punctured RM code, 
    \begin{align}
    N_1\sum_{\mathclap{\bm c \in \pRM(r, m)^\perp}}\ket{\bm w \oplus \bm c}
    &= N_1\sum_{\bm \alpha \in C}\sum_{\bm \beta \in D}\ket{\bm w_1\oplus\bm \alpha^*}\ket{\bm w_2\oplus\bm \alpha \oplus \bm \beta}\label{decomppstep0}\\
    &= N_1\sum_{\bm \alpha \in C}\left(\ket{\bm w_1\oplus\bm \alpha^*}\sum_{\bm \beta \in D}\ket{\bm w_2\oplus\bm \alpha \oplus \bm \beta}\right)\label{decomppstep1}\\
    &= N_2\sum_{\bm \alpha \in C}\ket{\bm w_1\oplus\bm \alpha^*}\ket{\bm w_2\oplus\bm \alpha}_{r, m-1}\label{decomppstep2}
    \end{align}
    where $C = \RM(r-1, m-1)^\perp/\{\bm 1\}$, $D = \RM(r, m-1)^\perp$,  $\bm w_1 \in \pRM(r, m), \bm w_2 \in \RM(r, m-1)$, and the normalization constant $N_2 = 1\Big/\sqrt{|\RM(r-1, m-1)^\perp/\{\bar{\bm 1}\}|}$ are the normalization constants.

    By Property \eqref{prop:RMsubgroup}. we have $\RM(r, m-1)^\perp \subset \RM(r-1, m-1)^\perp$. We then express $\bm \alpha \in \RM(r-1, m-1)^\perp/\{\bm 1\}$ as $\bm \alpha = \bm u \oplus \bm \beta$ for $\bm u \in \RM(r-1, m-1)^\perp /\RM(r, m-1)^\perp \equiv B$ and $\bm \beta \in \RM(r, m-1)^\perp /\{\bm 1\}$. Equation \eqref{decomppstep2} is rewritten as
    \begin{align}
        N_1\sum_{\bm c}\ket{\bm w \oplus \bm c} 
        &= N_2\sum_{\bm u\in B}\sum_{\mathrlap{\!\bm \beta \in \RM(r, m-1)^\perp/\{\bar{\bm 1}\}}}\ket{\bm w_1\oplus\bm u^* \oplus \bm \beta^*}\ket{\bm w_2\oplus\bm u \oplus \bm \beta}_{r, m-1}\label{decompp:step:1}\\
        &= N_2\sum_{\bm u\in B}\sum_{\mathrlap{\!\bm \beta \in \RM(r, m-1)^\perp/\{\bar{\bm 1}\}}}\ket{\bm w_1\oplus\bm u^* \oplus \bm \beta^*}\ket{\bm w_2\oplus\bm u}_{r, m-1}\label{decompp:step:2}\\
        &= N\sum_{\bm u\in B}\ket{\bm w_1\oplus\bm u^*}_{r, m-1}^*\ket{\bm w_2\oplus\bm u}_{r, m-1}\label{decompp:step:3}
    \end{align}
    where $N = 1\Big/ \sqrt{B} = N_2^2/N_1$. Equation \eqref{decompp:step:2} is obtained by Properties \eqref{CSS:addx} and \eqref{CSS:Xinv}, and Eq. \eqref{decompp:step:3} is obtained using the definition of the punctured QRM codewords. The parameter bound $r \geq \lceil (m-1)/2\rceil$ follows from the CSS code requirement and $r< m-1$ is since $\ket{\bm w}_{r, r}$ cannot be punctured to yield a valid $\mathrm{CSS}(C_1, C_2)$ codeword $\ket{\bm w}_{r, r}^*$ as the generator matrix of $C_2$, $G(r, r)^*$ is rank deficient.
\end{proof}

\subsubsection{Constructing recursive encoders}
We denote the encoding circuit for $\pQRM(r, m)$ by $U^*(r, m)$.  We construct the encoder $U^*(r, m)$ recursively using $U^*(r, m-1)$ and $U(r, m-1)$. This algorithm which we refer to as $\texttt{RecursivePQRM(r, m)}$ is based on the same idea as that of \texttt{RecursiveQRM(r, m)}. The construction differs in Steps $4$, $5$, and $6$, where $2^{m-1}$ is replaced with $2^{m-1}-1$, and $U^*(r, m-1)\otimes U(r, m-1)$ is added in Step $6$. The recursion can be continued till $U^*(r, r+1)$ which encodes states of the form $\ket{\bm w}_{r, r+1}^*=\ket{\bm w^*}$ which is a classical computational basis state for $\bm w \in \RM(r, r+1)$.

\textbf{Recursive encoder for $U^*(r, r+1)$}:
The punctured generator matrix, $G(r, r)^*$, is rank deficient. The classical code $\RM^*(r, r)$ cannot be utilized to faithfully encode information. Thus, a unitary encoder for $U^*(r, r)$ does not exists and $U^*(r, r+1)$ cannot be constructed recursively as earlier using $U^*(r, r)$. Instead, we recursively construct $U^*(r, r+1)$ with $U^*(r-1, r)$ and $U^c(r-1, r)$.

Observe that the codeword $\bm w \in \RM(r, r+1)^*$ can be written as
\begin{equation}
    \bm w^* = (\bm w_1^*, \bm w_1 \oplus \bm w_2) \oplus (\bm v^*, \bm v)
\end{equation}
where codewords $\bm w_1 = \bm m_{\bm w_1} G(r-1, r) \in \RM(r-1, r), \bm w_2 = \bm m_{\bm w_2} G(r-1, r) \in \RM(r-1, r)$, and $\bm v = m_{\bm v}[00\dots 01] \in \{00\dots 00, 00\dots 01\}$ are encoded by message bits/strings $\bm m_{\bm w_1}, \bm m_{\bm w_2} \in \mathbb F_2^{2^m - 1}$ and $m_{\bm v} \in \mathbb F_2$. Since $\bm w_1^* = \bm m_{\bm w_1}G(r-1, r)^*$, we have $\bm m_{\bm w_1} \equiv \bm m_{\bm w_1^*}$.
Additionally, notice that $U^c(r-1, r)\ket{\bm v} = U^c(r-1, r)^\dagger \ket{\bm v} = \ket{\bm v}$. This suggests encoding $\bm w_1^*$, followed by $(\bm v, \bm v)$, and then finally $\bm w_1\oplus \bm w_2$. 

Begin with the state $\ket{\bm m_{\bm w^*}}$ for $\bm m \in \mathbb F_2^{2^{r+1}-1}$. We apply a permutation $P(p)$ to obtain $\ket{m_{\bm w_1^*}, m_{\bm w_2}, m_v}$. Add CNOT gates to obtain $\ket{\bm m_{\bm w_1^*}, \bm m_{\bm w_1^*}\oplus \bm m_{\bm w_2}, m_v} \equiv \ket{\bm m_{\bm w_1^*}, \bm m_{\bm w_1 \oplus \bm w_2}, m_v}$. Applying $U^*(r-1, r)$ on the first $2^r-1$ qubits, we obtain $\ket{\bm w_1^*, \bm m_{\bm w_1 \oplus \bm w_2}, m_v}$. Add $\CNOT{2^{r+1}-1}{2^r-1}$ to obtain $\ket{(\bm w_1^*, \bm m_{\bm w_1 \oplus \bm w_2}, 0) \oplus (\bm v, \bm v)}$. Finally, add $U^c(r-1, r)$ on the last $2^r$ qubits to prepare $\ket{(\bm w_1^*, \bm w_1 \oplus \bm w_2) \oplus (\bm v^*, \bm v)} \equiv \ket{\bm w^*}$.

The Algorithm to construct $U^*(r, r+1)$ is provided in Figure \ref{alg:rec_punc_qrm}. Figure \ref{fig:ur_r1} illustrates this construction.

\begin{figure}
    \centering
    \includegraphics[width = 0.8\textwidth]{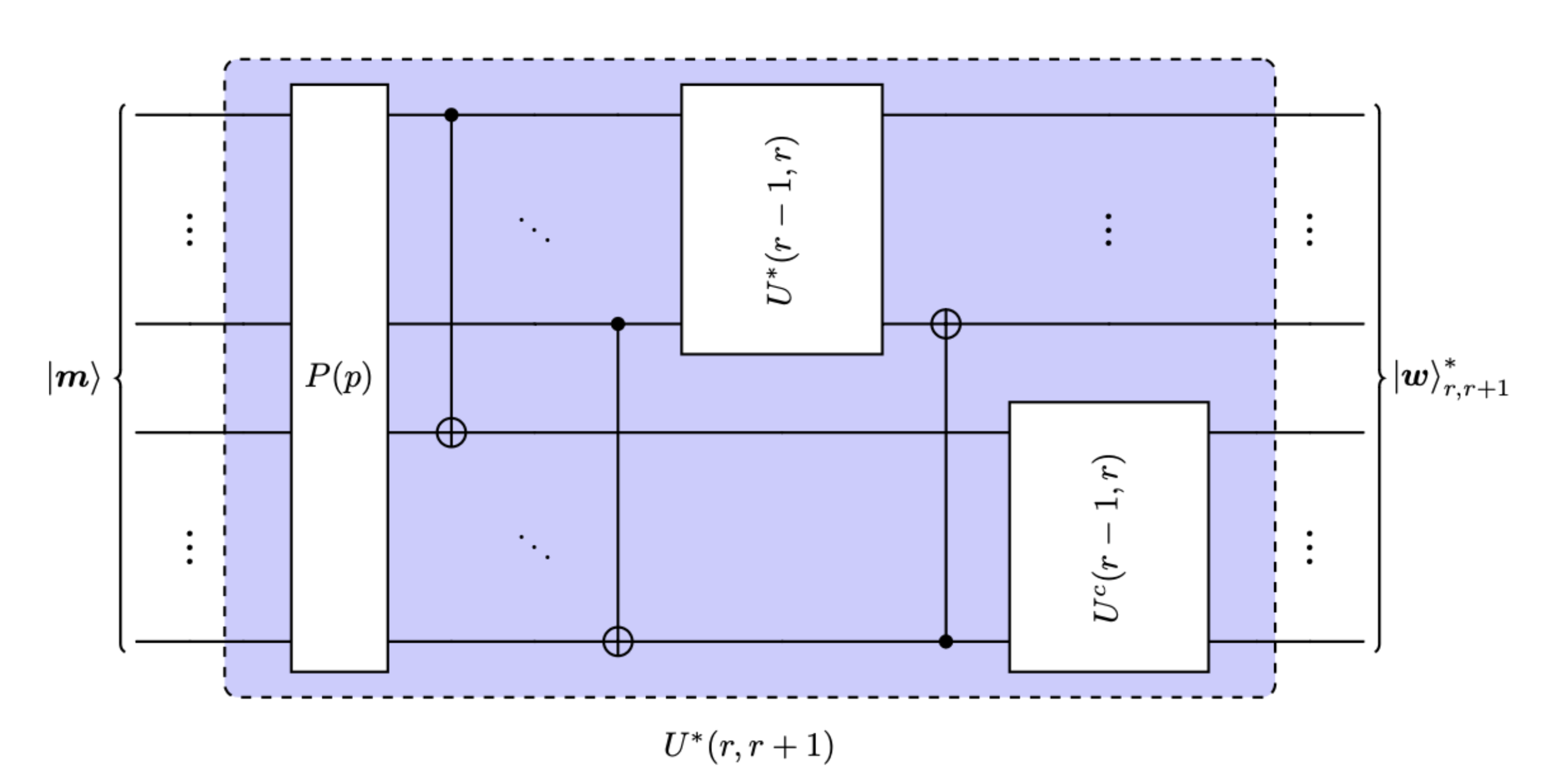}
    \caption{Recursive construction of $U^*(r, r+1)$.}
    \label{fig:ur_r1}
\end{figure}

\textit{CNOT Gate counts}: From the construction, it can be seen that the construction of $U^*(r, m)$ requires
\begin{equation}
    \zeta^*(r, m) = \begin{cases}
    \sum_{i=m-r-1}^r {m-1\choose i} + \zeta^*(r, m-1) + \zeta(r, m-1), &\text{if } r < m-1\\
    2^r + \zeta^*(r-1, r) + \zeta^c(r-1, r) & \text{if } m = r + 1 \\
    0 & \text{if } m = 1
   \end{cases}
\end{equation}
CNOT gates where $\zeta^*(r, m)$ denotes the CNOT gate count for the punctured code $\QRM(r, m)^*$. Since $G(0, 1)^* = \left[\begin{array}{c}1\end{array}\right]$, $C_{0, 1}^* = 0$ and $\zeta^*(0, 1) = 0$. 

\textit{Circuit depth}: On close inspection\footnote{The CNOT added in Step 3 of the construction of  $U^*(r, r+1)$ can be moved before $U(r-2, r-1)$ that appears in the recursive construction of $U^*(r-1, r)$ and likewise.}, the encoder $U^*(r, r+1)$ has $3r$ depth and the encoding circuit $U^*(r, m)$ has a $O(m + 2r)= O(\log n)$ depth.

\begin{figure}
    \centering
    \begin{mdframed}
        \centering{\textbf{\texttt{RecursivePQRM(r, m)}}\\
        Returns Encoder $U^*(r, m)$, qubit index map $M$ for $(m-1)/2\leq r < m$.}\\
    \begin{minipage}[t]{\textwidth}
        \centering{\underline{Case 1:} For $m > r + 1$:}
        \begin{description}[itemsep=0.5pt]
            \item[1.] Obtain $U^*(r, m-1)$, $M_1$ from \texttt{RecursivePQRM(r, m-1)} and $U(r, m-1)$, $M_2$ from \texttt{RecursiveQRM(r, m-1)}.
            \item[2.] Initialize circuit over $2^m-1$ qubits. Create $M = \rim(G)$ for
            $$G = \left[\begin{array}{c} G_1^* \\ \hline G_2^* \vert G_2\end{array}\right], ~~~\text{for } \begin{array}{c} G_1 = G(r, m)\setminus G(m-r-1, m) \cup \{\bm 1\}\\ G_2 = G(m-r-1, m-1)\setminus G(m-r-2, m-1)\end{array}$$
            \item[3.] Add gates $H(K), K = (k+1, \dots, k + {m-1\choose m-r-1})$ for $k = 1 + \sum_{i=m-r}^r{m\choose i}$.
            \item[4.] Add qubit permutation $P(p)$ such that $$p(M[(\bm u^*, \bm u)]) = M_1[\bm u^*] ~~\forall \bm u \in G_3,$$ $$p(M[(\bm 0, \bm v)]) = 2^{m-1}-1 + M_2[\bm v] ~~\forall \bm v\in G_4$$
            where $G_3 = G(r, m-1)\setminus G(m-r-2, m-1) \cup \{\overline{\bm 1}\}$\\ and $G_4 = G(r-1, m-1)\setminus G(m-r-2, m-1)$.
            \item[5.] Add $\CNOT{i}{j}$ for $i = M_1[\bm u^*]$, $j = 2^{m-1}-1 + M_2[\bm u]$ $\forall \bm u \in G(r, m-1)\setminus G(m-r-2, m-1)$.
            \item[6.] Add $U^*(r, m-1)\otimes U(r, m-1)$ to the circuit.
            \item[7.] Return circuit, $M$.
        \end{description}
    \end{minipage}
    \vspace{5pt}
    \vspace{5pt}
    \begin{minipage}[t]{\textwidth}
        \centering{\underline{Case 2:} For $m = r+1$:}
        \begin{description}
            \item[1.] Obtain $U^*(r-1, r)$, $M_1$ from $\texttt{RecursivePQRM(r-1, r)}$ and $U^c(r-1, r)$, $M_2$ from $\texttt{RecursiveBasisQRM(r-1, r)}$.
            \item[2.] Initialize circuit over $2^{r+1}-1$ qubits. Create $M = \rim(G(r, r+1)^*)$.
            \item[3.] Add qubit permutation $P(p)$ such that 
            $$p(M[(\bm u^*, \bm u)]) = M_1[\bm u^*] ~~\forall \bm u \in G(r-1, r),$$
            $$p(M[(\bm 0, \bm v)]) = 2^{r}-1 + M_2[\bm v] ~~\forall \bm v\in G(r-1, r)$$
            $$p(M[(\bm u^*, \bm u)]) = 2^{r+1}-1 ~~ \text{for } \bm u = 00\dots 01
            $$
            \item[4.] Add $\CNOT{i}{j}$ for $i = M_1[\bm u^*]$, $j = 2^{r}-1 + M_2[\bm u] \forall \bm u \in G(r-1, r)$.
            \item[5.] Add $U^*(r-1, r)$ on the first $2^r -1$ qubits.
            \item[6.] Add $\CNOT{i}{j}$ for $i = 2^{r+1}-1$, $j = 2^r - 1$.
            \item[7.] Add $U^c(r-1, r)$ on the last $2^{r}$ qubits.
            \item[8.] Return circuit, $M$.
        \end{description}
    \end{minipage}
    \end{mdframed}
    \caption{Recursive construction for $U^*(r, m)$.}
    \label{alg:rec_punc_qrm}
\end{figure}

\textbf{Encoder for punctured computational basis state QRM}: The encoding circuit $U^{(*, c)}(r, m)$ that prepares the state $\ket{\bm w^*}, \bm w^* \in \RM(r, m)^*$ can be constructed similar to the construction of $U^c(r, m)$, for recursions up till $m = r + 1$. $U^*(r, r+1)$ is then used to encode $\ket{\bm w^*}, \bm w \in \RM(r, r+1)^*$. This algorithm, $\texttt{RecursiveBasisPQRM(r, m)}$ is provided in Figure \ref{alg:rec_basis_punc_qrm}.
From the construction, the CNOT gate counts are given by the recursive relations:
\begin{align}
    \zeta^{(*, c)}(r, m) = \begin{cases} \sum_{i=0}^{r} {m-1\choose i} + \zeta^{(*, c)}(r, m-1) + \zeta^c(r, m-1) & \text{if } r < m-1\\
    \zeta^*(r, r+1) & \text{if } m = r+1
    \end{cases}
\end{align}
\begin{figure}
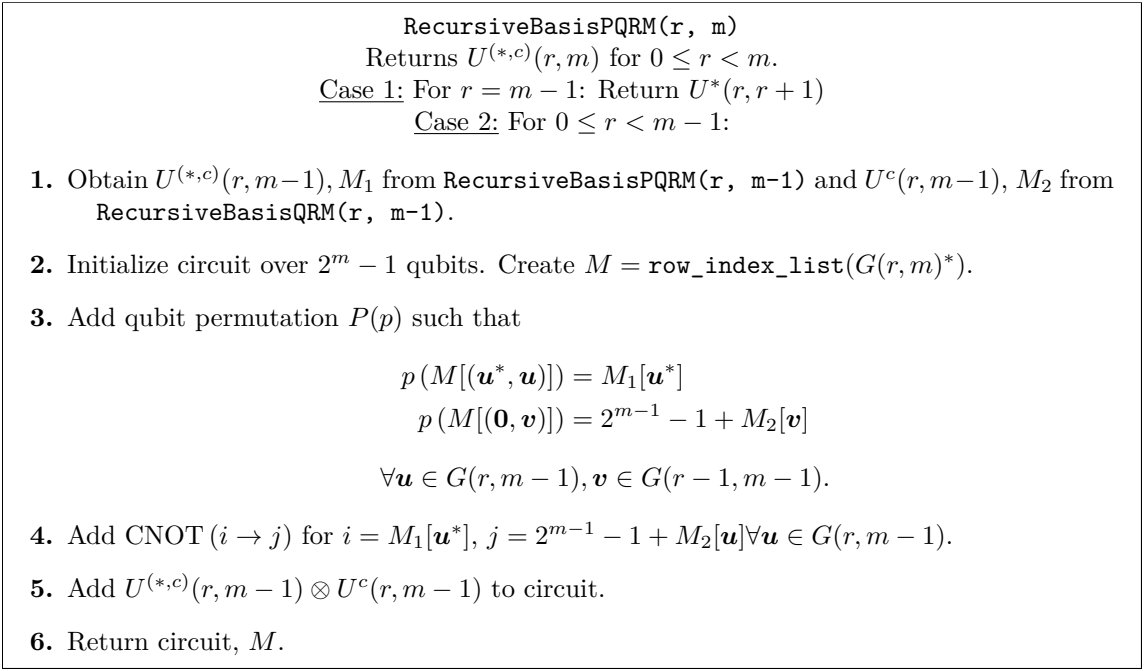

    \centering
\begin{mdframed}
    \centering{\texttt{RecursiveBasisPQRM(r, m)}\\
    Returns $U^{(*, c)}(r, m)$ for $0 \leq r < m$.}\\
    \centering{\underline{Case 1:} For $r = m-1$:} Return $U^{*}(r, r+1)$ \\
    \centering{\underline{Case 2:} For $0 \leq r < m - 1$:}\\
     \begin{description}
                \item[1.] Obtain $U^{(*, c)}(r, m-1), M_1$ from \texttt{RecursiveBasisPQRM(r, m-1)} and $U^c(r, m-1)$, $M_2$ from \texttt{RecursiveBasisQRM(r, m-1)}.
                \item[2.] Initialize circuit over $2^m-1$ qubits. Create $M = \rim(G(r, m)^*)$.
                \item[3.] Add qubit permutation $P(p)$ such that 
                \begin{align}
                    p\left(M[(\bm u^*, \bm u)]\right) &= M_1[\bm u^*]\nonumber\\
                    p\left(M[(\bm 0, \bm v)]\right) &= 2^{m-1}-1 + M_2[\bm v]\nonumber
                \end{align}
                $$ \forall \bm u \in G(r, m-1), \bm v\in G(r-1, m-1).$$        
                \item[4.] Add $\CNOT{i}{j}$ for $i = M_1[\bm u^*]$, $j = 2^{m-1}-1 + M_2[\bm u] \forall \bm u \in G(r, m-1)$.
                \item[5.] Add $U^{(*, c)}(r, m-1)\otimes U^c(r, m-1)$ to circuit.
                \item[6.] Return circuit, $M$.
    \end{description}
\end{mdframed}
    \caption{Recursive construction for $U^{(*, c)(r, m)}$}
    \label{alg:rec_basis_punc_qrm}
\end{figure}

\textbf{State preparation circuit for punctured QRM}:
We construct $U^{(*, s)}(r, m)$ that prepares the state $\ket{\bm w}_{r, m}^*, \bm w \in \RM(r, m)/\{\bar{\bm 1}\}$, where we explicitly do not include $\bar{\bm 1}$. Recall that such a circuit is useful when we wish to prepare encoded ancillary states $\ket{\bm 0}_{r, m}^*$.

We obtain an optimized construction by simply removing the generator $\bm 1$ from the generator matrix in the case of $r = m$. The construction is similar to the construction of $U^*(r, m)$, for recursions up till $m = r+1$. The algorithm to construct $U^{(*, s)}(r, m)$ for $m=r+1$ is provided in Figure \ref{alg:rec_punc_qrm_s}.

\begin{figure}
    \centering
    \begin{mdframed}
    \begin{subfigure}{\textwidth}
        \centering{\noindent\textbf{\texttt{RecursiveStatePrepPQRM(r, m)}}, for $(m-1)/2\leq r\leq m$\\
        Returns: Encoder $U^{(*, s)}(r, m)$, qubit index map $M$.}\\
        \underline{Case 1:} If $(m-1)/2\leq r<m-1$:
        \begin{description}[itemsep=0.5pt]
            \item[1.] Obtain $U^{(*, s)}(r, m-1)$, $M_1$ from \texttt{RecursiveStatePrepPQRM(r, m-1)} and $U(r, m-1)$, $M_2$ from \texttt{RecursiveQRM(r, m-1)}.
            \item[2.] Initialize circuit over $2^m-1$ qubits. Create $M = \rim(G)$ for
            $$G = \left[\begin{array}{c} G_1^* \\ \hline G_2^* \vert G_2\end{array}\right], ~~~\text{for } \begin{array}{c} G_1 = G(r, m)\setminus G(m-r-1, m)\\ G_2 = G(m-r-1, m-1)\setminus G(m-r-2, m-1)\end{array}$$
            \item[3.] Add gates $H(K), K = (k+1, \dots, k + {m-1\choose m-r-1})$ for $k = \sum_{i=m-r}^r{m\choose i}$.
            \item[4.] Add qubit permutation $P(p)$ such that $$p(M[(\bm u^*, \bm u)]) = M_1[\bm u] ~~\forall \bm u \in G_3,$$ $$p(M[(\bm 0, \bm v)]) = 2^{m-1}-1 + M_2[\bm v] ~~\forall \bm v\in G_4$$
            where $G_3 = G(r, m-1)\setminus G(m-r-2, m-1)$\\ and $G_4 = G(r-1, m-1)\setminus G(m-r-2, m-1)$.
            \item[5.] Add $\CNOT{i}{j}$ for $i = M_1[\bm u^*]$, $j = 2^{m-1}-1 + M_2[\bm u]$ $\forall \bm u \in G_3$.
            \item[6.] Add $U^{(s,*)}(r, m-1)\otimes U(r, m-1)$ to the circuit.
            \item[7.] Return circuit, $M$.
        \end{description}
    \end{subfigure}
    \vspace{5pt}
    \vspace{5pt}
    \begin{subfigure}{\textwidth}
        \begin{minipage}[t]{0.49\textwidth}
        \centering\underline{\hspace{ 2 in}}\\
            \centering{\underline{Case 2:} If $r = m-1$:}
            \begin{description}
                \item[1.] Obtain $U^{(*, s)}(r, r)$, $M_1$ from $\texttt{RecursiveStatePrepPQRM(r, r)}$ and $U(r, r)$, $M_2$ from $\texttt{RecursiveQRM(r, r)}$.
                \item[2.] Initialize circuit over $2^{r+1}-1$ qubits. Create $M =\rim(G(r, r+1)^*)$.
                \item[3.] Add qubit permutation $P(p)$ such that 
                $$p(M[(\bm u^*, \bm u)]) = M_1[\bm u^*] ~~\forall \bm u \in G(r, r)\setminus \{\bar 1\},$$
                $$p(M[(\bm 0, \bm v)]) = 2^{r}-1 + M_2[\bm v] ~~\forall \bm v\in G(r, r)$$        
                \item[4.] Add $\CNOT{i}{j}$ for $i = M_1[\bm u^*]$, $j = 2^{r}-1 + M_2[\bm u] \forall \bm u \in G(r, r)\setminus \{\bar 1\}$.
                \item[5.] Add $U^{(*, s)}(r, r) \otimes U(r, r)$ to the circuit.
                \item[6.] Return $U^{(*, s)}(r, r+1)$, $M$.
            \end{description}
        \end{minipage}
        \hspace{5pt}
        \vline
        \hspace{5pt}
        \begin{minipage}[t]{0.49\textwidth}
        \centering\underline{\hspace{ 2 in}}\\
            \centering{\underline{Case 3:} If $r = m$:}
            \begin{description}
                \item[1.] Obtain $U^{(*, s)}(r-1, r-1)$, $M_1$ from $\texttt{RecursiveStatePrepPQRM(r-1, r-1)}$ and $U(r-1, r-1)$, $M_2$ from $\texttt{RecursiveQRM(r-1, r-1)}$.
                \item[2.] Initialize circuit over $2^{r}-1$ qubits. Create $M = \rim\left(G(r, r)\setminus\{\bar 1\}\right)$.
                \item[3.] Add qubit permutation $P(p)$ such that 
                $$p(M[(\bm u, \bm u)]) = M_1[\bm u]$$
                $$p(M[(\bm 0, \bm v)]) = 2^{r-1}-1 + M_2[\bm v]$$
                $\forall \bm u \in G(r-1, r-1)\setminus \{\bar 1\}, \bm v\in G(r-1, r-1)$
                \item[4.] Add $\CNOT{i}{j}$ for $i = M_1[\bm u]$, $j = 2^{r-1}-1 + M_2[\bm u] \forall \bm u \in G(r-1, r-1)\setminus \{\bar 1\}$.
                \item[5.] Add $U^{(*, s)}(r-1, r-1) \otimes U(r-1, r-1)$.
                \item[6.] Return $U^{(*, s)}(r, r)$, $M$.
            \end{description}
        \end{minipage}
        
    \end{subfigure}
    \end{mdframed}
    \caption{Recursive construction for $U^{(*, s)}(r, m)$}
    \label{alg:rec_punc_qrm_s}
\end{figure}
\textit{CNOT gate count}: From the constructions, the CNOT counts for the state preparation circuits is
\begin{align}
    \zeta^{(*, s)}(r, m) =\begin{cases} \sum_{i = m-r-1}^r {m-1 \choose i}+ \zeta^{(*, s)}(r, m-1) + \zeta(r, m-1) & \text{if } r < m - 1\\
    2^r - 1 + \zeta^{(*, s)}(r, r) + \zeta(r, r) & \text{if } m = r+1\\
    2^{r-1} - 1 + \zeta^{(*, s)}(r-1, r-1) + \zeta(r-1, r-1) & \text{if } m = r\\
    0 & \text{if } m=r=1
    \end{cases}
\end{align}

\subsection{Zero rate Quantum Reed-Muller codes}\label{app:zero}

\begin{theorem}[Zero rate QRM codeword decomposition]\label{thm:decompz}
    For $-1\leq r\leq r_{\diamond}<m$, the basis codewords of $\zQRM(r, m; r_{\diamond}, m_{\diamond})$ on $2^{m}$ qubits provided in \eqref{def:zQRMgen} can be written as a superposition of tensor product of codewords of $\zQRM(r-1, m-1; r_{\diamond}, m_{\diamond})$ on the first $2^{m-1}$ qubits and last $2^{m-1}$ qubits.

    For $\bm w = (\bm w_1, \bm w_2) \in \RM(r_{\diamond}, m)$ with $\bm w_1,\bm w_2 \in \RM(r_{\diamond}, m-1)$,
    \begin{align}
        \ket{\bm{w}}_{r, m}^{(0)} = N &\sum_{ \bm{u} \in B} \ket{\bm{w_1} \oplus \bm{u}}_{r-1, m-1}^{(0)}
        \ket{\bm{w_2} \oplus \bm{u}}_{r-1, m-1}^{(0)} \label{decompz}
    \end{align}
    where $B = \RM(r, m-1)/\RM(r-1, m-1)$ is a quotient set and $N$ is the normalization factor, $N = \frac{1}{\sqrt{|B|}}$.
\end{theorem}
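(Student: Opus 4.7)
The plan is to mirror the proof strategy of Theorem \ref{thm:decomp2}, adapting the Plotkin manipulations to the zero-rate codewords defined in Eq. \eqref{def:zQRMgen}. Starting from the definition
\[
\ket{\bm w}_{r,m}^{(0)} = \frac{1}{\sqrt{|\RM(r,m)|}}\sum_{\bm c \in \RM(r,m)} \ket{\bm w \oplus \bm c},
\]
I would split both the support codeword $\bm w = (\bm w_1, \bm w_2)$ with $\bm w_1, \bm w_2 \in \RM(r_\diamond, m-1)$, and the summation variable $\bm c = (\bm \alpha, \bm \alpha \oplus \bm \beta)$ with $\bm \alpha \in \RM(r, m-1)$ and $\bm \beta \in \RM(r-1, m-1)$, using the Plotkin $(u, u+v)$ construction of $\RM(r, m)$. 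This rewrites the codeword as a double sum
\[
\frac{1}{\sqrt{|\RM(r,m)|}}\sum_{\bm \alpha \in \RM(r, m-1)}\sum_{\bm \beta \in \RM(r-1, m-1)} \ket{\bm w_1 \oplus \bm \alpha}\ket{\bm w_2 \oplus \bm \alpha \oplus \bm \beta}.
\]

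Next, using the nested containment $\RM(r-1,m-1)\subset\RM(r,m-1)$ from Property \eqref{prop:RMsubgroup}, I would reparameterize $\bm \alpha = \bm u \oplus \bm \gamma$ for $\bm u \in B := \RM(r,m-1)/\RM(r-1,m-1)$ and $\bm \gamma \in \RM(r-1, m-1)$. The sum over $\bm \gamma$ and $\bm \beta$ then runs independently over $\RM(r-1,m-1)$ in each tensor factor. Recognizing each such sum as (up to normalization) the definition of a zero-rate codeword of $\zQRM(r-1, m-1; r_\diamond, m_\diamond)$, the expression collapses into
\[
\frac{|\RM(r-1,m-1)|}{\sqrt{|\RM(r,m)|}} \sum_{\bm u \in B}\ket{\bm w_1 \oplus \bm u}_{r-1,m-1}^{(0)}\ket{\bm w_2 \oplus \bm u}_{r-1,m-1}^{(0)}.
\]

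The final step is a bookkeeping verification: using $|\RM(r,m)| = |\RM(r,m-1)|\cdot|\RM(r-1,m-1)|$ (a direct consequence of Plotkin), the prefactor simplifies to $1/\sqrt{|B|} = N$, matching the claimed normalization. One has to also check that $\bm w_1 \oplus \bm u$ and $\bm w_2 \oplus \bm u$ lie in $\RM(r_\diamond, m-1)$ so that the resulting kets are valid $\zQRM(r-1, m-1; r_\diamond, m_\diamond)$ codewords; this follows because $B \subseteq \RM(r, m-1)\subseteq \RM(r_\diamond, m-1)$ under the hypothesis $r \leq r_\diamond$. I expect the main subtlety, and the only place where care is required, to be the bottom of the recursion at $r = -1$, where $\RM(-1, m-1) = \emptyset$ forces the sum to become trivial; handling this boundary case (and noting that $r = r_\diamond$ is allowed, unlike in Theorem \ref{thm:decomp2} which requires $r < m$) distinguishes this proof from its counterpart for the full-rank QRM codes, but does not introduce any new conceptual difficulty beyond what the Plotkin machinery already supplies.
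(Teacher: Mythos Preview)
Your proposal is correct and follows essentially the same route as the paper: Plotkin-split both $\bm w$ and $\bm c$, reparameterize $\bm\alpha$ via the coset decomposition $\RM(r,m-1)/\RM(r-1,m-1)$, and collapse the inner sums into $\zQRM(r-1,m-1;r_\diamond,m_\diamond)$ codewords. The only cosmetic difference is that the paper phrases the decoupling step via the $X$-invariance property (Eqs.~\eqref{CSS:addx}, \eqref{CSS:Xinv}) rather than an explicit change of variables $\bm\beta \mapsto \bm\gamma\oplus\bm\beta$, but these are the same maneuver.
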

\begin{proof}
    Recall the codewords of $\zQRM(r, m; r_{\diamond}, m_{\diamond})$ in Eq. \eqref{def:zQRMgen} as
    \begin{equation}
        \ket{\bm w}_{r, m}^{(0)} = N_1\sum_{\mathclap{\bm c \in \RM(r, m)}}\ket{\bm w \oplus \bm c}
    \end{equation}
    for $\bm w\in \RM(r_{\diamond}, m)$ and $N_1 = 1\Big/\sqrt{|\RM(r, m)|}$. Rewriting the codewords $\bm c$ and $\bm w$ with the Plotkin $(u, u+v)$ construction,
    \begin{align}
        N_1\sum_{\mathclap{\bm c \in \RM(r, m)}}\ket{\bm w \oplus \bm c} =& N_1\sum_{\alpha\in C}\sum_{\beta \in D}\ket{\bm w_1 \oplus \bm \alpha}\ket{\bm w_2 \oplus \bm \alpha \oplus \bm \beta},\\
        =& N_1\sum_{\bm \alpha\in C}\left(\ket{\bm w_1 \oplus \bm \alpha}\sum_{\bm \beta \in D}\ket{\bm w_2\oplus \bm \alpha \oplus \bm \beta}\right),\\
        =& N_2\sum_{\bm \alpha \in C}\ket{\bm w_1 \oplus \bm \alpha}\ket{\bm w_2 \oplus \bm \alpha}_{r-1, m-1}^{(0)},\label{zqrmproofstep}
    \end{align}
    where $C = \RM(r, m-1)$, $D = \RM(r-1, m-1)$, $\bm w_1, \bm w_2 \in \RM(r_{\diamond}, m-1)$, and normalization constant $N_2 = 1\Big/\sqrt{|\RM(r, m-1)|}$.

    By Property \eqref{prop:RMsubgroup}, we have $\RM(r-1, m-1) \subset \RM(r, m-1)$. We then express $\bm \alpha\in \RM(r, m-1)$ in Eq. \eqref{zqrmproofstep} as $\bm \alpha = \bm u \oplus \bm \beta$ for $\bm u \in \RM(r, m-1)/\RM(r-1, m-1) := B, \bm \beta \in \RM(r-1, m-1) \equiv D$. Equation \eqref{zqrmproofstep} is rewritten as
    \begin{align}
        N_1\sum_{\mathclap{\bm c \in \RM(r, m)}}\ket{\bm w \oplus \bm c} 
        &= N_2\sum_{\bm u\in B}\sum_{\bm \beta \in D}\ket{\bm w_1\oplus\bm u \oplus \bm \beta}\ket{\bm w_2\oplus\bm u \oplus \bm \beta}_{r-1, m-1}\label{zqrm:decomp:step:1}\\
        &= N_2\sum_{\bm u\in B}\sum_{\bm \beta \in D}\ket{\bm w_1\oplus\bm u \oplus \bm \beta}\ket{\bm w_2\oplus\bm u}_{r-1, m-1}\label{zqrm:decomp:step:2}\\
        &= N\sum_{\bm u\in B}\ket{\bm w_1\oplus\bm u}_{r-1, m-1}^{(0)}\ket{\bm w_2\oplus\bm u}_{r-1, m-1}^{(0)}\label{zqrm:decomp:step:3}
    \end{align}
    where $N = 1\Big/ \sqrt{B} = N_2^2/N_1$. Equation \eqref{zqrm:decomp:step:2} is obtained by Properties \eqref{CSS:addx} and \eqref{CSS:Xinv} of CSS codewords, and Eq. \eqref{zqrm:decomp:step:3} is obtained using the definition of the QRM codewords. The bounds on the range of $r, r_{\diamond}$ follow from the CSS code requirements and $\RM(-1, m) = \phi$. 
\end{proof}

\subsubsection{Constructing recursive encoders}
Recall that the code states of $\zQRM(r, m; r, m)$ are of the form
\begin{align}
    \ket{\bm 0}_{r, m}^{(0)} =& \frac{1}{\sqrt{|\RM(r, m)|}}\sum_{\bm c \in \RM(r, m)}\ket{\bm c}
\end{align}
and can be encoded by a circuit consisting of Hadamard gates on qubits $\{q_1, \dots, q_k\}$ with $k = \sum_{i=0}^{r}{m \choose i}$, followed by the encoder $U^c(r, m)$. However, we present an alternative recursive construction based on Theorem \ref{thm:decompz}. This construction demonstrates lower CNOT gate count for certain $(r, m)$ values.

Consider the state $\ket{\bm w}_{r, m}^{(0)}$ in $\zQRM(r, m; r_{\diamond}, m_{\diamond})$ for $\bm w \in \RM(r_{\diamond}, m)$. From Theorem \ref{thm:decompz}, we obtain a decomposition
\begin{equation}
    \ket{\bm w}_{r, m}^{(0)} = \frac{1}{\sqrt{|D|}}\sum_{\bm u\in D}\ket{\bm w_1 \oplus \bm u}_{r-1, m-1}^{(0)}\ket{\bm w_2 \oplus \bm u}_{r-1, m-1}^{(0)}
\end{equation}
where $D = \RM(r, m-1)\setminus\RM(r-1, m-1)$ and $\bm w = (\bm w_1, \bm w_2)$ for $\bm w_1, \bm w_2 \in \RM(r_\diamond, m-1)$. 
As $\ket{\bm w_1 \oplus \bm u}_{r-1, m-1}^{(0)} \in \zQRM(r-1, m-1; r_{\diamond}, m_{\diamond})$, by following a construction similar to $\texttt{RecursiveQRM(r, m)}$ we can construct an encoder for $\zQRM(r, m; r_{\diamond}, m_{\diamond})$ by utilizing encoders for $\zQRM(r-1, m-1; r_{\diamond}, m_{\diamond})$. Algorithm \texttt{RecursiveZQRM(r, m; r$_{\diamond}$, m$_{\diamond}$)} in Figure \ref{alg:rec_zero} provides the recursive construction for the encoder which we denote by $U(r, m; r_{\diamond}, m_{\diamond})$, where $r_{\diamond} > r, m_{\diamond}> m$ are the initial $(r, m)$ parameters defining the state. We retain $(r_{\diamond}, m_{\diamond})$, the initial code parameters since we require the information about the dimensions of the message string $\bm m_{\bm w}$ at each iteration.

The codeword, on successive iterations of decompositions would have codewords of the form $\ket{\bm w}_{r_{\diamond}-i, m_{\diamond}-i}^{(0)}, \bm w \in \RM(r_{\diamond}, m_{\diamond} - i)$.
For the state to remain a valid codeword, we have the conditions
\begin{align}
    r_{\diamond} -i \geq -1\\
    m_{\diamond} - i \geq r_{\diamond}
\end{align}
Thus, the number of iterations is restricted to $i_{\max} = \min(r_{\diamond}+1, m_{\diamond} - r_{\diamond})$ since $\zQRM(r, m; r_{\diamond}, m_{\diamond})$ cannot be decomposed by Theorem \ref{thm:decompz} beyond $r =-1$ or $m = r_{\diamond}$. We have two distinct cases of $(r_{\diamond}, m_{\diamond})$ that satisfy these conditions:

\textit{(i) When $2r_{\diamond} + 1\leq m_{\diamond}$}: The last iteration of decomposition, $i_{\max}=r_{\diamond} + 1$, will end with classical bit-string states $\ket{\bm mG(r_{\diamond}, m_{\diamond} - r_{\diamond}-1)}$. 
Encoding this state is done recursively by the computational basis state encoder $U^c(r_{\diamond}, m_{\diamond} - r_{\diamond}-1)$ provided in Figure \ref{alg:rec_qrm_c}.

\textit{(ii) When $2r_{\diamond} + 1 > m_{\diamond}$}: We cannot decompose the codeword state beyond $(m_{\diamond} - r_{\diamond})$ iterations. For $i_{\max} =  (m_{\diamond} - r_{\diamond})$, the codeword is of the form $$\ket{\bm w}_{2r_{\diamond} - m_{\diamond}, r_{\diamond}}^{(0)} = N\sum_{\mathclap{\bm c \in \RM(2r_{\diamond} - m_{\diamond}, r_{\diamond})}} \ket{\bm w \oplus \bm c},$$ for $\bm w\in \RM(r_{\diamond}, r_{\diamond})$ and normalization constant $N = 1/\sqrt{|\RM(2r_{\diamond} - m_{\diamond}, r_{\diamond})|}$. It is straightforward to see that $k = \sum_{i=0}^{2r_{\diamond} - m_{\diamond}}{r_{\diamond} \choose i}$ Hadamard gates and the encoder $U(r_{\diamond}, r_{\diamond})$ encodes this state.

\textit{CNOT gate count}: Following the construction above, the CNOT gate count for the recursive construction of $U^{(0)}(r, m; r_{\diamond}, m_{\diamond})$ is given by
\begin{align}
    \zeta^{(0)}(r, m; r_{\diamond}, m_{\diamond}) =& \begin{cases}
        \zeta^c(r_{\diamond}, m) & 2r_{\diamond} + 1 \leq m_{\diamond} \text{ and } r = -1\\
        \zeta(r_{\diamond}, r_{\diamond}) & 2r_{\diamond} + 1 > m_{\diamond} \text{ and } m = r_{\diamond}\\
        \sum_{i = r}^{r_{\diamond}}{m-1\choose i} + 2\zeta^{(0)}(r-1, m-1; r_{\diamond}, m_{\diamond}) & \text{otherwise}
    \end{cases}
\end{align}

\begin{figure}
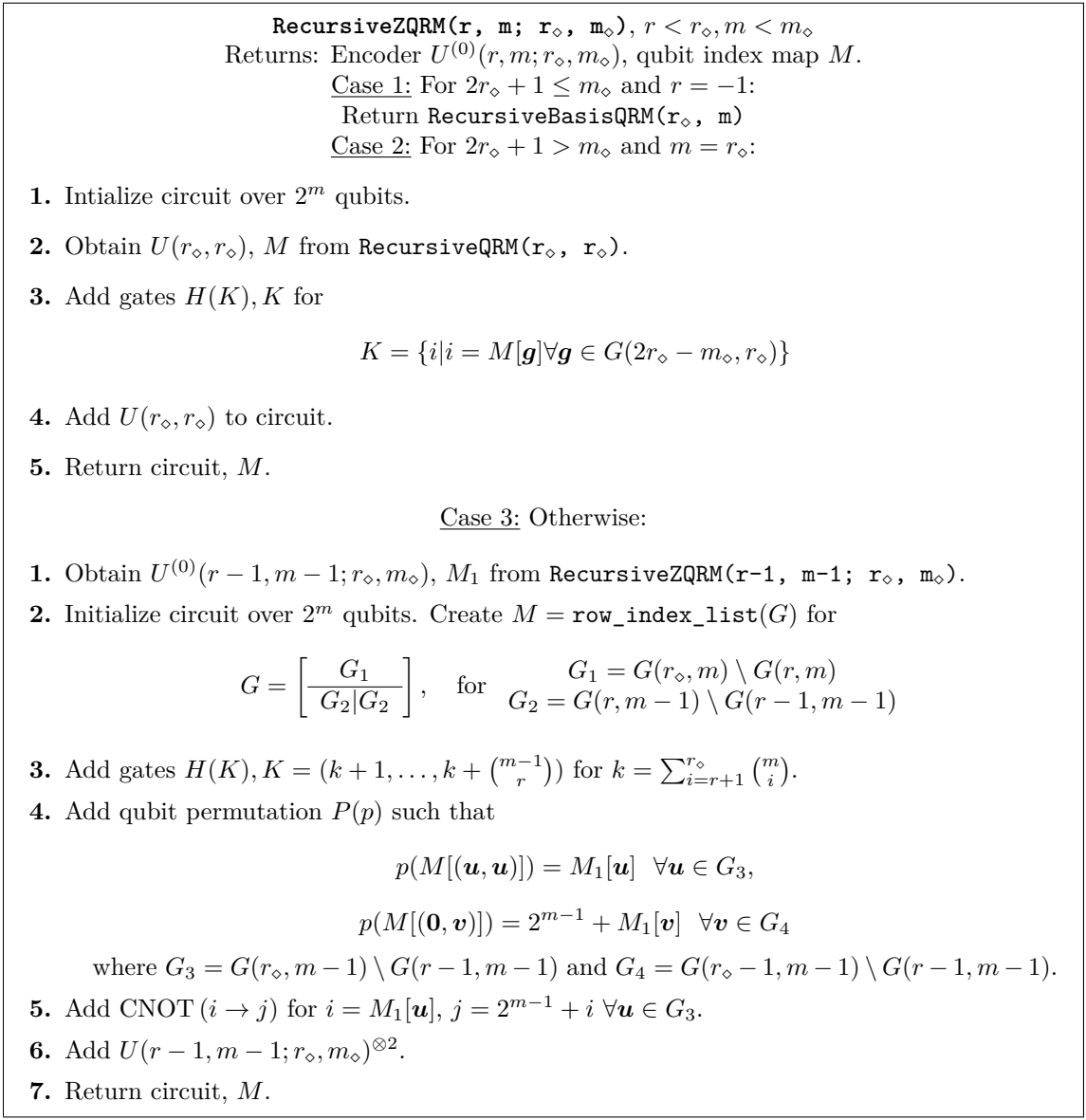
 
    \centering
    \begin{mdframed}
    \centering{\noindent\textbf{\texttt{RecursiveZQRM(r, m; r$_{\diamond}$, m$_{\diamond}$)}}, $r < r_{\diamond}, m < m_{\diamond}$\\
    Returns: Encoder $U^{(0)}(r, m; r_{\diamond}, m_{\diamond})$, qubit index map $M$.}
    \\\underline{Case 1:} For $2r_{\diamond} + 1 \leq m_{\diamond}$ and $r = -1$: \\Return \texttt{RecursiveBasisQRM(r$_{\diamond}$, m)}
    \\\underline{Case 2:} For $2r_{\diamond} + 1 > m_{\diamond}$ and $m = r_{\diamond}$:
    \begin{description}
        \item[1.] Intialize circuit over $2^m$ qubits.
        \item[2.] Obtain $U(r_{\diamond}, r_{\diamond})$, $M$ from \texttt{RecursiveQRM(r$_{\diamond}$, r$_{\diamond}$)}.
        \item[3.] Add gates $H(K), K$ for
        $$K = \{i \vert i = M[\bm g] \forall \bm g \in G(2r_{\diamond} - m_{\diamond}, r_{\diamond})\}$$
        \item[4.] Add $U(r_{\diamond}, r_{\diamond})$ to circuit.
        \item[5.] Return circuit, $M$.
    \end{description}
    \underline{Case 3:} Otherwise:\\
    \begin{description}[itemsep=0.5pt]
        \item[1.] Obtain $U^{(0)}(r-1, m-1; r_{\diamond}, m_{\diamond})$, $M_1$ from \texttt{RecursiveZQRM(r-1, m-1; r$_{\diamond}$, m$_{\diamond}$)}.
        \item[2.] Initialize circuit over $2^m$ qubits. Create $M = \rim(G)$ for 
         $$G = \left[\begin{array}{c} G_1 \\ \hline G_2 \vert G_2\end{array}\right], ~~~\text{for } \begin{array}{c} G_1 = G(r_{\diamond}, m)\setminus G(r, m) \\ G_2 = G(r, m-1)\setminus G(r-1, m-1)\end{array}$$
        \item[3.] Add gates $H(K), K = (k+1, \dots, k + {m-1\choose r})$ for $k = \sum_{i=r+1}^{r_{\diamond}}{m\choose i}$.
        \item[4.] Add qubit permutation $P(p)$ such that 
        $$p(M[(\bm u, \bm u)]) = M_1[\bm u] ~~\forall \bm u \in G_3,$$ 
        $$p(M[(\bm 0, \bm v)]) = 2^{m-1} + M_1[\bm v] ~~\forall \bm v\in G_4$$
        where $G_3 = G(r_{\diamond}, m-1)\setminus G(r-1, m-1)$ and $G_4 = G(r_{\diamond}-1, m-1)\setminus G(r-1, m-1)$.
        \item[5.] Add $\CNOT{i}{j}$ for $i = M_1[\bm u]$, $j = 2^{m-1} + i$ $\forall \bm u \in G_3$.
        \item[6.] Add $U(r-1, m-1; r_{\diamond}, m_{\diamond})^{\otimes 2}$.
        \item[7.] Return circuit, $M$.
    \end{description}
    \end{mdframed}
    \caption{Recursive construction for $U(r, m; r_{\diamond}, m_{\diamond})$.}
    \label{alg:rec_zero}
\end{figure}

\subsection{Punctured zero rate Quantum Reed-Muller codes}\label{app:punc_zero}
We repeat the previous analysis and constructions for the punctured zero rate code now.

\begin{theorem}[Punctured zero rate QRM codeword decomposition]\label{thm:decomppz}
    For $0 < r\leq r_{\diamond}<m$, the basis codewords of $\pzQRM(r, m; r_{\diamond}, m_{\diamond})$ on $2^{m}-1$ qubits can be written as a superposition of tensor product of codewords of $\pzQRM(r-1, m-1; r_{\diamond}, m_{\diamond})$ on the first $2^{m-1}-1$ qubits and $\pzQRM(r-1, m-1; r_{\diamond}, m_{\diamond})$ on the last $2^{m-1}$ qubits.

    For $\bm w = (\bm w_1, \bm w_2) \in \pRM(r_{\diamond}, m)$ with $\bm w_1 \in \pRM(r_{\diamond}, m-1)$ and $\bm w_2 \in \RM(r_{\diamond}, m-1)$,
    \begin{align}
        \ket{\bm{w}}_{r, m}^{(*, 0)} = N &\sum_{ \bm{u} \in B} \ket{\bm{w_1} \oplus \bm{u}^*}_{r-1, m-1}^{(*, 0)}
        \ket{\bm{w_2} \oplus \bm{u}}_{r-1, m-1}^{(0)} \label{decomppz}
    \end{align}
    where $B = \RM(r, m-1)/\RM(r-1, m-1)$ is a quotient set and $N$ is the normalization factor, $N = \frac{1}{\sqrt{|B|}}$.
\end{theorem}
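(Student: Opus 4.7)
The plan is to mirror the proof strategies of Theorems \ref{thm:decompp} (punctured QRM) and \ref{thm:decompz} (zero-rate QRM), combining the puncturing step of the former with the zero-rate reduction of the latter. First, I would start from the generalized definition of $\pzQRM(r, m; r_\diamond, m_\diamond)$ codewords, writing $\ket{\bm w}_{r, m}^{(*, 0)} = N_1 \sum_{\bm c \in \pRM(r, m)/\{\bm 1\}} \ket{\bm w \oplus \bm c}$ for $\bm w \in \pRM(r_\diamond, m)$, with $N_1 = 1/\sqrt{|\pRM(r, m)/\{\bm 1\}|}$.

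Next, I would apply the Plotkin $(u, u+v)$ construction adapted to punctured codes, decomposing $\bm c = (\bm \alpha^*, \bm \alpha \oplus \bm \beta)$ with $\bm \alpha \in \RM(r, m-1)$ and $\bm \beta \in \RM(r-1, m-1)$, and splitting $\bm w = (\bm w_1, \bm w_2)$ accordingly with $\bm w_1 \in \pRM(r_\diamond, m-1)$, $\bm w_2 \in \RM(r_\diamond, m-1)$. The sum then factors into nested summations over $\bm \alpha$ and $\bm \beta$, and the inner $\bm \beta$-sum on the second half is recognized as a $\zQRM(r-1, m-1)$ codeword $\ket{\bm w_2 \oplus \bm \alpha}_{r-1, m-1}^{(0)}$ in direct analogy with Eq. \eqref{zqrmproofstep}.

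Then, exploiting the nesting $\RM(r-1, m-1) \subseteq \RM(r, m-1)$ from Property \eqref{prop:RMsubgroup}, I would write $\bm \alpha = \bm u \oplus \bm \beta'$ with $\bm u$ ranging over coset representatives $B = \RM(r, m-1)/\RM(r-1, m-1)$ and $\bm \beta' \in \RM(r-1, m-1)$. Using the invariance of the second-half zero-rate codeword under adding any $\bm \beta' \in \RM(r-1, m-1)$ (since the sum defining the $\zQRM$ state absorbs such shifts), the $\bm \beta'$-summation acts only on the first-half kets as $\ket{\bm w_1 \oplus \bm u^* \oplus \bm \beta'^{\,*}}$, and after reindexing by the injective puncturing map $\bm \beta' \mapsto \bm \beta'^{\,*}$ on $\RM(r-1, m-1)$ assembles into the $\pzQRM(r-1, m-1)$ codeword $\ket{\bm w_1 \oplus \bm u^*}_{r-1, m-1}^{(*, 0)}$. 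Reconciling the normalization constants then gives the claimed $N = 1/\sqrt{|B|}$.

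The main obstacle I anticipate is handling the $\{\bm 1\}$ quotient consistently across the two levels of puncturing: once in the original $\pzQRM(r, m)$ definition and once in the target $\pzQRM(r-1, m-1)$ codeword on the first half. This demands careful bookkeeping to verify, under the hypothesis $0 < r \leq r_\diamond < m$, that $\bm 1$ lies in $\RM(r-1, m-1)$, that puncturing on $\RM(r-1, m-1)$ is injective (so that sums over $\bm \beta' \in \RM(r-1, m-1)$ translate faithfully into sums over the coset representatives of $\pRM(r-1, m-1)/\{\bm 1\}$ with the correct multiplicity), and that the surviving coset-representative ambiguity on the first half matches the one already present in the definition of the target $\pzQRM$ codeword, so that the identification is clean rather than producing an unbalanced superposition of coset representatives.
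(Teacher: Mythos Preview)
Your proposal is correct and follows essentially the same route as the paper's proof: Plotkin decomposition on the punctured sum, collapse of the inner $\bm\beta$-sum into a $\zQRM(r-1,m-1)$ ket on the second half, coset splitting $\bm\alpha=\bm u\oplus\bm\beta'$ via Property~\eqref{prop:RMsubgroup}, absorption of $\bm\beta'$ into the second-half zero-rate ket, and identification of the remaining first-half sum as the $\pzQRM(r-1,m-1)$ ket. The one point where your write-up is slightly looser than the paper is the placement of the $\{\bm 1\}$ quotient: the paper carries it on $\bm\alpha$ from the start by taking $C=\RM(r,m-1)/\{\bm 1\}$, so that after the coset split one has $\bm\beta'\in E=\RM(r-1,m-1)/\{\bm 1\}$, which is exactly the index set defining $\ket{\cdot}_{r-1,m-1}^{(*,0)}$ on the first half---this is the clean bookkeeping you correctly anticipated as the main obstacle.
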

\begin{proof}
    Recall the codewords of $\pzQRM(r, m; r_{\diamond}, m_{\diamond})$ in Eq. \eqref{def:pzQRM} as
    $$\ket{\bm w}_{r, m}^{(*, 0)} = N_1\sum_{\mathclap{\bm c \in \pRM(r, m)/\{\bm 1\}}}\ket{\bm w \oplus \bm c}$$ for $\bm w \in \pRM(r_{\diamond}, m)$ and the normalization constant $N_1 = 1/\sqrt{|\pRM(r, m)/\{\bm 1\}|}$. Rewriting the codewords $\bm c$ and $\bm w$ with the Plotkin $(u, u+v)$ construction,

\begin{align}
        N_1\sum_{\mathclap{\bm c \in \pRM(r, m)/\{\bm 1\}}}\ket{\bm w \oplus \bm c} =& N_1\sum_{\alpha\in C}\sum_{\beta \in D}\ket{\bm w_1 \oplus \bm \alpha^*}\ket{\bm w_2 \oplus \bm \alpha \oplus \bm \beta},\\
        =& N_1\sum_{\bm \alpha\in C}\left(\ket{\bm w_1 \oplus \bm \alpha^*}\sum_{\bm \beta \in D}\ket{\bm w_2\oplus \bm \alpha \oplus \bm \beta}\right),\\
        =& N_2\sum_{\bm \alpha \in C}\ket{\bm w_1 \oplus \bm \alpha^*}\ket{\bm w_2 \oplus \bm \alpha}_{r-1, m-1}^{(0)},\label{pzqrmproofstep}
    \end{align}
    where $C = \RM(r, m-1) /\{\bm 1\}$, $D = \RM(r-1, m-1)$, $\bm w_1 \in \pRM(r_{\diamond}, m-1), \bm w_2 \in \RM(r_{\diamond}, m-1)$, and normalization constant $N_2 = 1\Big/\sqrt{|\RM(r, m-1)/\{\bm 1\}|}$.

    By Property \eqref{prop:RMsubgroup}, we have $\RM(r-1, m-1) \subset \RM(r, m-1)$ and we express $\bm \alpha\in \RM(r, m-1)/\{\bm 1\}$ in Eq. \eqref{pzqrmproofstep} as $\bm \alpha = (\bm u \oplus \bm \beta)$ for $\bm u \in \RM(r, m-1)/\RM(r-1, m-1) := B, \bm \beta \in \RM(r-1, m-1)/\{1\} \equiv E$. Equation \eqref{pzqrmproofstep} is rewritten as
    \begin{align}
        N_1\sum_{\mathclap{\bm c \in \pRM(r, m)/\{\bm 1\}}}\ket{\bm w \oplus \bm c} 
        &= N_2\sum_{\bm u\in B}\sum_{\bm \beta \in E}\ket{\bm w_1\oplus\bm u^* \oplus \bm \beta^*}\ket{\bm w_2\oplus\bm u \oplus \bm \beta}_{r-1, m-1}^{(0)}\label{pzqrm:decomp:step:1}\\
        &= N_2\sum_{\bm u\in B}\sum_{\bm \beta \in E}\ket{\bm w_1\oplus\bm u^* \oplus \bm \beta^*}\ket{\bm w_2\oplus\bm u}_{r-1, m-1}^{(0)}\label{pzqrm:decomp:step:2}\\
        &= N\sum_{\bm u\in B}\ket{\bm w_1\oplus\bm u^*}_{r-1, m-1}^{(0)}\ket{\bm w_2\oplus\bm u}_{r-1, m-1}^{(0)}\label{pzqrm:decomp:step:3}
    \end{align}
    where $N = 1\Big/ \sqrt{B}$. Equation \eqref{pzqrm:decomp:step:2} is obtained by Properties \eqref{CSS:addx} and \eqref{CSS:Xinv} of CSS codewords as,
    \begin{equation}
        X(\bm \beta) \ket{\bm w}_{r, m}^{(0)} = \ket{\bm w}_{r, m}^{(0)}, ~~ \text{for } \bm \beta \in \RM(r, m)
    \end{equation}
    and Eq. \eqref{pzqrm:decomp:step:3} is obtained using the definition of the codeword. The bounds on the range of $r, r_{\diamond}$ follow from the CSS code requirements and since $\pRM(0, m)/\{1\} = \phi$, we require $r \geq 1$ to ensure the simplification to Eq. \eqref{pzqrm:decomp:step:2} holds. 
\end{proof}
\subsubsection{Constructing recursive encoders}
We denote the encoding circuit for $\pzQRM(r, m; r_{\diamond}, m_{\diamond})$ as $U^{(*, 0)}(r, m; r_{\diamond}, m_{\diamond})$. Similar to the zero rate code, we construct the encoders recursively using $U^{(*, 0)}(r-1, m-1; r_{\diamond}, m_{\diamond})$ and $U^{(0)}(r-1, m-1; r_{\diamond}, m_{\diamond})$ on the first and second sets of qubits respectively. Denote this algorithm as \texttt{RecursiveZPQRM(r, m; r$\_{\diamond}$, m$\_{\diamond}$)} provided in Figure \ref{alg:rec_punc_zero_qrm}.

\begin{figure}
    \begin{mdframed}
    \centering{\noindent\textbf{\texttt{RecursivePZQRM(r, m; r$_{\diamond}$, m$_{\diamond}$)}}, $0\leq r < r_{\diamond}, m < m_{\diamond}$\\
    Returns: Encoder $U^{(0)}(r, m; r_{\diamond}, m_{\diamond})$, qubit index map $M$.}
    \\\underline{Case 1:} If $2r_{\diamond} + 1 \leq m_{\diamond}$ and $r = 0$: \\Return \texttt{RecursiveBasisPQRM(r$_{\diamond}$, m)}
    \\\underline{Case 2:} If $2r_{\diamond} + 1 > m_{\diamond}$ and $m = r_{\diamond}
    + 1$:
    \begin{description}
        \item[1.] Intialize circuit over $2^m-1$ qubits.
        \item[2.] Obtain $U^*(r_{\diamond}, r_{\diamond}+1)$, $M$ from \texttt{RecursivePQRM(r$_{\diamond}$, r$_{\diamond}$+1)}.
        \item[3.] Add gates $H(K)$ where
        $$K = \{i \vert i = M[\bm g] \forall \bm g \in G(2r_{\diamond} + 1 - m_{\diamond}, r_{\diamond}+1)^*\}$$
        \item[4.] Add $U^*(r_{\diamond}, r_{\diamond}+1)$ to the circuit.
        \item[5.] Return circuit, $M$.
    \end{description}
    \underline{Case 3:} Otherwise:\\
    \begin{description}[itemsep=0.5pt]
        \item[1.] Obtain $U^{(*, 0)}(r-1, m-1; r_{\diamond}, m_{\diamond})$, $M_1$ from \texttt{RecursivePZQRM(r-1, m-1; r$_{\diamond}$, m$_{\diamond}$)} and $U^{(0)}(r-1, m-1; r_{\diamond}, m_{\diamond})$, $M_2$ from \texttt{RecursiveZQRM(r-1, m-1; r$_{\diamond}$, m$_{\diamond}$)}.
        
        \item[2.] Initialize circuit over $2^m-1$ qubits. Create $M = \rim(G)$ for 
         $$G = \left[\begin{array}{c} G_1^* \\ \hline G_2^* \vert G_2\end{array}\right], ~~~\text{for } \begin{array}{c} G_1 = G(r_{\diamond}, m)\setminus G(r, m)\cup \{\bm 1\} \\ G_2 = G(r, m-1)\setminus G(r-1, m-1)\end{array}$$
        \item[3.] Add gates $H(K), K = (k+1, \dots, k + {m-1\choose r})$ for $k = 1 + \sum_{i=r+1}^{r_{\diamond}}{m\choose i}$.
        \item[4.] Add qubit permutation $P(p)$ such that 
        $$p(M[(\bm u^*, \bm u)]) = M_1[\bm u^*] ~~\forall \bm u \in G_3,$$ 
        $$p(M[(\bm 0, \bm v)]) = 2^{m-1}-1 + M_2[\bm v] ~~\forall \bm v\in G_4$$
        where $G_3 = G(r_{\diamond}, m-1)\setminus G(r-1, m-1)$ and $G_4 = G(r_{\diamond}-1, m-1)\setminus G(r-1, m-1)$.
        \item[5.] Add $\CNOT{i}{j}$ for $i = M_1[\bm u^*]$, $j = 2^{m-1} + M_2[\bm u]$ $\forall \bm u \in G_3$.
        \item[6.] Add $U^{(*, 0)}(r-1, m-1; r_{\diamond}, m_{\diamond})\otimes U^{(0)}(r-1, m-1; r_{\diamond}, m_{\diamond})$ to the circuit.
        \item[7.] Return circuit, $M$.
    \end{description}
    \end{mdframed}
    \caption{Recursive construction for $U^{(*, 0)}(r, m; r_{\diamond}, m_{\diamond})$}
    \label{alg:rec_punc_zero_qrm}
\end{figure}
On the $i$th  iteration of applying Theorem \ref{thm:decomppz} to decompose $\ket{\bm w}_{r_{\diamond}, m_{\diamond}}^{(*, 0)}$ yields codewords of the form $\ket{\bm w'}_{r_{\diamond}-i, m_{\diamond}-i}^{(*, 0)}$ for $\bm w' \in \pRM(r_{\diamond}, m_{\diamond}-i)$. For this to be a valid punctured zero rate QRM codeword according to the definition in Eq. \eqref{def:pzQRM}, we require 
\begin{align}
    r_{\diamond} - i \geq& 0\\
    m_{\diamond} - i \geq& r_{\diamond} + 1.
\end{align}
Depending on $(r_{\diamond}, m_{\diamond})$, we have two parameter ranges such that there exists non-negative $i$ that satisfies the above two conditions:

\textbf{When $2r_{\diamond} + 1 > m_{\diamond}$}: After $i = m_{\diamond} - r_{\diamond} - 1$ iterations of decomposition, the codeword has the form
\begin{equation}
    \ket{\bm w}_{2r_{\diamond} - m_{\diamond} + 1, r_{\diamond} + 1}^{(*, 0)} \equiv N_1\sum_{\mathclap{\bm c \in D}} \ket{\bm w \oplus \bm c}, ~~~\bm w\in \pRM(r_{\diamond}, r_{\diamond} + 1)
\end{equation}
for $D = \pRM(2r_{\diamond} + 1 - m_{\diamond}, r_{\diamond}+1)/\{\bm 1\}$ and normalization constant, $N = 1/\sqrt{|D|}$ Theorem \ref{thm:decomppz} cannot be applied any further. The encoder is constructed using $\sum_{j = 1}^{r_{\diamond} - i}{m_{\diamond}-i\choose j}$ Hadamard gates and the encoder $U^*(r_{\diamond}, r_{\diamond} + 1)$.

\textbf{When $2r_{\diamond} + 1 \leq m_{\diamond}$}: On $i = r_{\diamond}$ iterations of decomposition, we obtain the classical codeword state 
\begin{equation}
\ket{\bm w}_{0, m_{\diamond} - r_{\diamond}}^{(*, 0)}, ~~~ \bm w \in \pRM(r_{\diamond}, m_{\diamond} - r_{\diamond}).
\end{equation}
This state is then encoded with $U^{(*, c)}(r_{\diamond}, m_{\diamond} - r_{\diamond})$.

\textit{CNOT gate counts}: From the above discussion, the number of CNOT gates are given by
\begin{align}
    \zeta^{(*, 0)}&(r, m; r_{\diamond}, m_{\diamond}) =\nonumber\\& \begin{cases}
        \zeta^{(*, c)}(r_{\diamond}, m) & 2r_{\diamond} + 1 \leq m_{\diamond} \text{ and } r = 0\\
        \zeta^*(r_{\diamond}, r_{\diamond} + 1) & 2r_{\diamond} + 1 > m_{\diamond} \text{ and } r = 2r_{\diamond} - m_{\diamond} + 1\\
        \begin{array}{cc}\sum_{i=r}^{r_{\diamond}}{m-1\choose i}&+ \zeta^{(*, 0)}(r-1, m-1; r_{\diamond}, m_{\diamond})\\ &+ \zeta^{(0)}(r-1, m-1; r_{\diamond}, m_{\diamond})\end{array} & \text{otherwise}
    \end{cases}\label{counts_pzQRM}
\end{align}

\textbf{State preparation circuit $U^{(*, 0, s)}$}: For fault tolerant quantum computing, we require preparation of encoded ancillary states $\ket{0}_{r, m}^{(*, 0)}$ and $\ket{+}_{r, m}^{(*, 0)}$. To construct $\ket{\bm 0}_{r, m}^{(*, 0)}$, in the construction of the encoder, we do not require to add the $\bar 1$ to the generator matrix $G$. The circuits are then modified to reduce gate counts by adding the state-preparation punctured QRM code encoders $U^{(*, s)}(r_{\diamond}, r_{\diamond}+1)$ and $U^{(*, s)}(r_{\diamond}, m_{\diamond}-r_{\diamond})$ at the final iteration instead of $U^{*}(r_{\diamond}, r_{\diamond}+1)$ and $U^{(*, c)}(r_{\diamond}, r_{\diamond} + 1)$ for the parameter ranges $2r_{\diamond} + 1 > m_{\diamond}$ and $2r_{\diamond} + 1 \leq m_{\diamond}$ respectively. The CNOT gate counts in Eq. \eqref{counts_pzQRM} is modified as
\begin{align}
    \zeta^{(*, 0, s)}&(r, m; r_{\diamond}, m_{\diamond}) =\nonumber\\
    &\begin{cases}
        \zeta^{(*, s)}(r_{\diamond}, m) & 2r_{\diamond} + 1 \leq m_{\diamond} \text{ and } r = 0\\
        \zeta^{(*, s)}(r_{\diamond}, r_{\diamond} + 1) & 2r_{\diamond} + 1 > m_{\diamond} \text{ and } r = 2r_{\diamond} - m_{\diamond} + 1 \\
        \begin{array}{cc}\sum_{i=r}^{r_{\diamond}}{m-1\choose i}&+ \zeta^{(*, 0, s)}(r-1, m-1; r_{\diamond}, m_{\diamond})\\ &+ \zeta^{(0)}(r-1, m-1; r_{\diamond}, m_{\diamond})\end{array} & \text{otherwise}
    \end{cases}\label{counts_spzQRM}
\end{align}
The gate counts for a few representative parameters $(r_{\diamond}, m_{\diamond})$ to prepare $\ket{\bm 0}_{r_{\diamond}, m_{\diamond}}^{(*, 0)}$ are provided in Table \ref{tbl:punctured}.
\end{document}